\renewcommand{\thefootnote}{}
\newtheorem{remark}{Remark}
\newtheorem{theorem}{Theorem}
\newtheorem{proposition}{Proposition}
\newtheorem{RH}{Riemann-Hilbert Problem}
\def\be{\begin{equation}}
\def\ee{\end{equation}}
\def\bee{\begin{eqnarray}}
\def\ene{\end{eqnarray}}
\def\bes{\begin{subequations}}
\def\ees{\end{subequations}}
\def\d{\displaystyle}
\def\v{\vspace{0.05in}}
\begin{document}

\baselineskip=13pt
\renewcommand {\thefootnote}{\dag}
\renewcommand {\thefootnote}{\ddag}
\renewcommand {\thefootnote}{ }

\pagestyle{plain}

\begin{center}
\baselineskip=16pt \leftline{} \vspace{-.3in} {\Large \bf On Large-Space and Long-Time Asymptotic Behaviors of Kink-Soliton Gases in the Sine-Gordon Equation} \\[0.2in]
\end{center}

\begin{center} \small
{\bf Guoqiang Zhang$^{1}$, Weifang Weng$^2$, and Zhenya Yan}$^{1,3,*}$\footnote{$^{*}${\it Email address}: zyyan@mmrc.iss.ac.cn (Corresponding author)}  \\[0.1in]
{$^1${\footnotesize KLMM,  Academy of Mathematics and Systems Science,  Chinese Academy of Sciences, Beijing 100190, China}}\\
{$^2${\footnotesize School of Mathematical Sciences,University of Electronic Science and Technology of China, Chengdu 611731, China}}\\
{$^3${\footnotesize School of Mathematical Sciences, University of Chinese Academy of Sciences, Beijing 100049, China}} \\[0.18in]
\end{center}

\vspace{0.1in}
\baselineskip=14pt
\noindent {\bf Abstract.}\, {\small
In this paper, we conduct a comprehensive analysis of the large-space and long-time asymptotics of kink-soliton gases in the sine-Gordon equation, addressing an important open problem highlighted in the recent work [Phys. Rev. E 109 (2024) 061001]. 
We focus on kink-soliton gases modeled within a Riemann-Hilbert framework and characterized by two types of generalized reflection coefficients, each defined on the interval \(\left[\eta_1, \eta_2\right]\):
\( r_0(\lambda) = (\lambda - \eta_1)^{\beta_1} (\eta_2 - \lambda)^{\beta_2} |\lambda - \eta_0|^{\beta_0} \gamma(\lambda) \) and \( r_c(\lambda) = (\lambda - \eta_1)^{\beta_1} (\eta_2 - \lambda)^{\beta_2} \chi_c(\lambda) \gamma(\lambda) \),
where \(0 < \eta_1 < \eta_0 < \eta_2\) and \(\beta_j > -1\) for \(j = 0, 1, 2\). Here, \(\gamma(\lambda)\) is a continuous, strictly positive function defined on \([\eta_1, \eta_2]\), which extends analytically beyond this interval. The function \(\chi_c(\lambda)\) demonstrates a step-like behavior: it is given by \(\chi_c(\lambda) = 1\) for \(\lambda \in [\eta_1, \eta_0)\) and \(\chi_c(\lambda) = c^2\) for \(\lambda \in (\eta_0, \eta_2]\), with \(c\) as a positive constant distinct from one.
To rigorously derive the asymptotic results, we leverage the steepest descent method as developed by Deift and Zhou. A central component of this approach is constructing an appropriate \(g\)-function for the conjugation process. Unlike in the Korteweg-de Vries equation, the sine-Gordon equation presents unique challenges for \(g\)-function formulation, particularly concerning the singularity at the origin. The Riemann-Hilbert problem also requires carefully constructed local parametrices near endpoints \(\eta_j\) (\(j = 1, 2\)) and the singularity \(\eta_0\). At the endpoints \(\eta_j\), we employ a modified Bessel parametrix of the first kind. For the singularity \(\eta_0\), the parametrix selection depends on the reflection coefficient: the second kind of modified Bessel parametrix is used for \(r_0(\lambda)\), while a confluent hypergeometric parametrix is applied for \(r_c(\lambda)\).
}

\v\v \noindent {\bf Keyword.}\, Sine-Gordon equation,\,Kink-soliton gases,\, Inverse scattering transform,\, Riemann-Hilbert problem, \, Asymptotic behaviors

\v\v \noindent {\bf MSC code.}\, 35Q51, 37K40, 35Q15, 37K10, 37K15




\begin{spacing}{1.1}
\tableofcontents
\end{spacing}


\baselineskip=14pt


\section{Introduction and Main results}

\subsection{Backgrounds}

In this paper, we would like to study the rigorous large-space and long-time asymptotic behaviors for kink-soliton gases (i.e., the limit behavior of $N$-kink-soliton solution as $N\to \infty$) of the (1+1)-dimensional sine-Gordon (sG) equation in light-cone coordinates~\cite{sg,sg-book,bar}
\begin{gather} \label{sine-Gordon}
u_{xt}=\sin u, \qquad (x,t)\in\mathbb{R}\times \mathbb{R}^+,
\end{gather}
which arises from one of the most significant open problems in the review paper~\cite{22}, where $u=u(x,t)$ is a real-valued function, and the subscripts denote the partial derivatives with respect to variables. The sG equation \eqref{sine-Gordon} has another alternative form
\bee \label{sg2}
u_{\tau\tau}-u_{\xi\xi}+\sin u=0.
\ene
Note that Eq.~(\ref{sg2}) can be reduce to Eq.~(\ref{sine-Gordon}) via the simple transforms $u(\xi, \tau)\to u(x,t),\, \xi=x+t,\, \tau=x-t$. The sG equation was originally introduced by Bour~\cite{sg1862} in 1862, which  is a hyperbolic nonlinear wave equation involving the D'Alembert operator and nonlinear sine function, and used to describe two-dimensional  constant negative curvature surfaces in differential geometry. The sG equation appears in many physical settings, such as superconducting Josephson junctions~\cite{jose},
the self-induced transparency in nonlinear optics~\cite{sit},  crystal dislocations~\cite{cd}, DNA dynamics~\cite{dna1,dna2} and quantum field theory~\cite{qc}. The sG equation has been shown to possess many types of solutions, such as kink solitons~\cite{23,ks,ks1}, breathers~\cite{bs1,bs2}, multi-pole solitons~\cite{multi-s}, multi elliptic-localized solutions~\cite{ling23}, and etc. (see Refs.~\cite{sg-solu10,lamb,sg-book} and references therein).
By using the Riemann-Hilbert technique, man aspects have been studied, such as the Cauchy problem in the semiclassical limit~\cite{Buckingham2012, Buckingham2013},
construction of solutions and asymptotics in the quarter plane~\cite{Huang2018, Huang2018a}.

The sG equation \eqref{sine-Gordon} is a completely integrable system, solved by the inverse scattering transform \cite{23,kaup}, based on its Lax pair
\bee \label{lax}
\left\{\begin{array}{l}
\psi_x=U(x,t;\lambda)\psi, \quad U=\lambda \sigma_3-\dfrac{\mathrm{i}}{2}u_x\sigma_2, \v\\
\psi_t=V(x,t;\lambda)\psi,\quad V=\dfrac{1}{4\lambda}\left(\sigma_1 \sin u+\sigma_3\cos u\right),
 \end{array}\right.
 \ene
where $\psi=\psi(x,t;\lambda)$ is the eigenfunction, and $\lambda$ denotes the isopsectral parameter, and three Pauli matrices are
\begin{gather}\label{sigma}
\sigma_1 =
\begin{pmatrix}
0 & 1 \\
1 & 0
\end{pmatrix}, \quad
\sigma_2 =
\begin{pmatrix}
0 & -\mathrm{i} \\
\mathrm{i} & 0
\end{pmatrix}, \quad
\sigma_3 =
\begin{pmatrix}
1 & 0 \\
0 & -1
\end{pmatrix}.
\end{gather} The compatible condition of the Lax pair, $U_t-V_x+[U, V]=0$, just generates the sG equation \eqref{sine-Gordon}. The numerical IST was studied for the sG equation~\cite{sg-num}. Based on the IST, some initial-value problems of the sG equation were also explored~\cite{kaup,zak,tal,buck}. Moreover, some long-time asymptotics were analyzed via the Riemann-Hilbert problems, such as, the long-time asymptotic behavior was studied for the pure radiation (i.e., solitonless) solution of the sG equation with the Schwartz class of initial data~\cite{zhou19}; the asymptotic properties were studied for the Cauchy problem of the semiclassical sG equation~\cite{miller22}; and the long-time asymptotics and stability were also investigated for the sG equation with  weighted Sobolev initial data~\cite{chen22}.

Since the discovery of multi-soliton solutions in integrable nonlinear systems via the inverse scattering transform (IST)~\cite{soliton-book91, GGKM}, the investigation of multi-soliton interactions has become fundamental in soliton theory and its applications. In 1971, Zakharov~\cite{1} introduced the notion of a soliton gas, conceptualized as an infinite collection of weakly interacting solitons governed by the Korteweg-de Vries (KdV) equation. Subsequent work expanded Zakharov’s model from a sparse gas to a dense KdV soliton gas~\cite{2}, utilizing spectral theory through the thermodynamic limit of finite-gap solutions. This spectral approach was later applied to describe not only soliton gases but also breather gases for the focusing nonlinear Schrödinger (NLS) equation~\cite{3,4}, as well as bidirectional soliton gases in dispersive hydrodynamic systems within the defocusing NLS framework~\cite{5}. Beyond the realm of spectral theory, the mathematical properties of soliton gases have also been extensively studied, encompassing integrable reductions~\cite{6}, hydrodynamic reductions~\cite{7}, minimal energy configurations~\cite{8}, and aspects of classical integrability in hydrodynamics~\cite{9}. In the context of numerical simulations, it has been suggested that phenomena such as the nonlinear phase of spontaneous modulational instability~\cite{10} and the emergence of rogue waves~\cite{11,12} can be fundamentally linked to soliton gas dynamics.

In recent years, the asymptotic analysis of soliton gases has received substantial attention, particularly through the use of Riemann-Hilbert techniques. The soliton gas for the KdV equation, known as the primitive potential, is constructed using both the dressing method~\cite{13} and Riemann-Hilbert problem formulations~\cite{14}, where two reflection coefficients contribute to the jump conditions. For cases involving a single reflection coefficient, a comprehensive asymptotic analysis has been conducted~\cite{15} employing the steepest descent methodology developed by Deift and Zhou~\cite{16} and further elaborated in subsequent works~\cite{17,18,19}. Recently, this asymptotic framework has been adapted to the modified KdV equation~\cite{20}, facilitating the study of interactions between soliton gases and large individual solitons, and enabling a detailed description of various wave properties. These include the local phase shift within the soliton gas, the positioning of soliton peaks, and notably, the average soliton peak velocity as predicted by a kinetic equation. Beyond the discrete spectra confined to segments of the real axis \(\mathbb{R}\) or the imaginary axis \(\mathrm{i}\mathbb{R}\), soliton gases derived from the \(N\)-soliton solution in the limit \(N \to \infty\) have been extended to the focusing NLS equation with bounded domain discrete spectra, unveiling a remarkable soliton shielding effect~\cite{21}.

A recent survey~\cite{22} identified numerous unresolved questions surrounding soliton gases, with the rigorous asymptotic analysis of sine-Gordon kink-soliton gases highlighted as a particularly critical problem. Motivated by significant advancements in this field, as discussed in~\cite{15}, the current study aims to tackle this challenge. Starting from the Lax pair formulation~(\ref{lax}), we derive an \(N\)-kink solution sequence for the sine-Gordon equation~\eqref{sine-Gordon} via an associated Riemann-Hilbert problem using the standard IST framework~\cite{23, kaup}.

\begin{RH}\label{RH-1} Let $M^N$ represent  the solution to the Riemann-Hilbert problem defined as follows, adhering to a set of specific conditions:
\begin{itemize}

\item{} The function $M^N$ is analytic throughout the complex plane, with exceptions at  points $\pm\lambda_j$,  where $j=1, 2, \cdots, N$. These points  $\pm\lambda_j$ denote discrete spectral points constrained by $\lambda_j\in(\eta_1, \eta_2)$;

\item{} The function $M^N$ satisfies the normalization condition such that $M^N\to\mathbb{I}_2$ as $\lambda\to \infty$, where $\mathbb{I}_2$ denotes the $2\times 2$ identity matrix.

\item{} Each discrete spectral point $\pm\lambda_j$ is a simple pole of $M^N$, with residue conditions given by
\begin{gather}
\begin{aligned}
\mathop\mathrm{Res}\limits_{\lambda=\lambda_j}M^N&=\lim_{\lambda\to\lambda_j}M^N\mathcal{L}^{t\theta}\left[c_j\right],
\v\\
\mathop\mathrm{Res}\limits_{\lambda=-\lambda_j}M^N&=\lim_{\lambda\to -\lambda_j}M^N\mathcal{U}^{t\theta}\left[c_j\right],
\end{aligned}
\end{gather}
where $c_j < 0$ for all $j$, and the operators $\mathcal{L}^{t\theta}\left[c_j\right]$ and $\mathcal{U}^{t\theta}\left[c_j\right]$ are defined in Equation \eqref{LU}. The phase factor $\theta$ is given by
\begin{gather}
\theta=\frac{\xi\lambda+\lambda^{-1}}{4}, \qquad \xi=\frac{4x}{t}.
\end{gather}
\end{itemize}
\end{RH}

Utilizing the solution $M^N(x,t;\lambda)$ from the Riemann-Hilbert problem~\ref{RH-1}, the $N$-kink solutions for the sine-Gordon equation \eqref{sine-Gordon}, denoted by $u_N=u_N(x, t)$, can be explicitly derived through the following expressions:
\begin{gather}\label{uN-potential}
\frac{\partial}{\partial x}u_N =4 \lim_{\lambda\to\infty} \lambda M^N_{1, 2}, \quad
\cos u_N=1-2M^N_{1, 2}(0)^2, \quad
\sin u_N=-2M^N_{1, 1}(0)M^N_{1, 2}(0).
\end{gather}

In the context of the KdV equation~\cite{15}, the reflection coefficient is assumed to satisfies the following criteria: 1) it remains continuous and strictly positive for $\lambda \in [\eta_1, \eta_2]$; 2) it extends analytically to a neighborhood around this interval; and 3) it takes the same values over both $[\eta_1, \eta_2]$ and $[-\eta_2, -\eta_1]$. These conditions ensure that the solution $Y$ to the Riemann-Hilbert problem exhibits local logarithmic singularities at the interval endpoints.
Consequently, the parametrices near these endpoints can be described using modified Bessel functions of index zero. Additionally, when the reflection coefficient exhibits the behavior $\left|\lambda \mp \eta_j\right|^{\pm 1/2}$ close to $\lambda = \pm \eta_j$, as examined in \cite{20}, local parametrices become unnecessary, since the outer parametrix alone can adequately match the local properties and thereby serves as a global parametrix.
The outer parametrix concept initially arose in the study of long-time asymptotic behavior for the KdV equation with step-like initial conditions~\cite{24}. In \cite{25}, a matrix outer parametrix designed to handle identical jump matrices was introduced for analyzing the asymptotic properties of orthogonal polynomials associated with the Hermitian matrix model. This approach has since been applied to other cases, including the NLS shock problem~\cite{26}, the modified KdV equation under step-like initial conditions~\cite{27}, and the NLS equation with non-zero boundary conditions~\cite{28, 29}.

This paper introduces two novel forms of generalized reflection coefficients applicable to kink-soliton gases in the sine-Gordon (sG) framework, alongside their rigorous asymptotic characteristics:

\begin{itemize}

\item{} The first generalized reflection coefficient is defined as follows:
\begin{gather}\label{r0}
r_0(\lambda) = (\lambda - \eta_1)^{\beta_1} (\eta_2 - \lambda)^{\beta_2} |\lambda - \eta_0|^{\beta_0} \gamma(\lambda),
\end{gather}
where $\gamma(\lambda)$ is continuous, strictly positive for $\lambda \in [\eta_1, \eta_2]$, and is assumed to be analytically extendable within a neighborhood around $[\eta_1, \eta_2]$.
Unlike the standard reflection coefficient described in \cite{15}, which maintains positivity at the interval boundaries, this generalized coefficient $r_0(\lambda)$ introduces zeros and singularities at both $\eta_1$ and $\eta_2$. Moreover, it features additional zeros and singularities at an internal point $\eta_0$ within the interval $(\eta_1, \eta_2)$, thus diverging significantly from the original reflection coefficient. The absolute value ensures that $r_0(\lambda)$ remains positive over the subintervals $(\eta_1, \eta_0)$ and $(\eta_0, \eta_2)$.

\item{} The second generalized reflection coefficient is expressed as:
\begin{gather}\label{rc}
r_c(\lambda) = (\lambda - \eta_1)^{\beta_1} (\eta_2 - \lambda)^{\beta_2} \chi_c(\lambda) \gamma(\lambda).
\end{gather}
This coefficient retains the endpoint behavior of $r_0(\lambda)$ and includes a singularity at $\eta_0$. In contrast to $r_0(\lambda)$, however, $r_c(\lambda)$ is characterized by a jump discontinuity at $\eta_0$, introduced by the function $\chi_c$. Specifically, $\chi_c(\lambda)$ is defined as a piecewise function: $\chi_c(\lambda) = 1$ for $\lambda \in [\eta_1, \eta_0)$ and $\chi_c(\lambda) = c^2$ for $\lambda \in (\eta_0, \eta_2]$, where $c$ is a positive constant distinct from one ($c \neq 1$).
The exponents $\beta_j$ satisfy the conditions $\beta_1, \beta_2, \beta_0 > -1$, with the motivation for these restrictions to be elucidated in subsequent sections.

\end{itemize}

By modifying and employing an interpolation technique that converts poles into jump conditions, as utilized by Deift et al. \cite{30} in their analysis of the Toda rarefaction problem, it becomes possible to derive a specific class of $N$-soliton solutions $u_N$, where $N = N_1 + N_2$, as expressed in \eqref{uN-potential}. For convenience, we will continue to denote the solution to the modified Riemann-Hilbert problem by $M^N$.

\begin{RH}\label{RH-2} The function $M^N$, which is a $2 \times 2$ matrix, satisfies the following conditions:

\begin{itemize}
\item{} $M^N$ is analytic for all $\lambda \in \mathbb{C} \setminus (\Gamma_+ \cup \Gamma_-)$, where $\Gamma_\pm$ are two distinct, simple, closed curves encircling $[\eta_1, \eta_2]$ and $[-\eta_2, -\eta_1]$, respectively, with counter-clockwise orientation;

\item{} At $\lambda \to \infty$, the function $M^N$ satisfies the normalization condition:
$M^N\to \mathbb{I}_2$ as $\lambda\to\infty$;

\item{} Along the contours $\Gamma_\pm$, $M^N$ has well-defined boundary values, denoted $M^N_\pm$, which are related by the jump conditions:
\begin{gather}\label{MN}
M^N_+=M^N_-
\begin{cases}
\d\mathcal{L}^{t\theta}\left[-\sum_{j=1}^{N_1}\dfrac{\left(\eta_0-\eta_1\right)\,r\left(\lambda_{1, j}\right)}{2N_1\pi\,\left(\lambda-\lambda_{1, j}\right)}
-\sum_{j=1}^{N_2}\dfrac{\left(\eta_2-\eta_0\right)\,r\left(\lambda_{2, j}\right)}{2N_2\pi\,\left(\lambda-\lambda_{2, j}\right)}
\right], &\mathrm{for}\,\,\, \lambda\in\Gamma_+,\\[2em]
\d \mathcal{U}^{t\theta}\left[-\sum_{j=1}^{N_1}\dfrac{\left(\eta_0-\eta_1\right)\,r\left(-\lambda_{1, j}\right)}{2N_1\pi\,\left(\lambda+\lambda_{1, j}\right)}
-\sum_{j=1}^{N_2}\dfrac{\left(\eta_2-\eta_0\right)\,r\left(-\lambda_{2, j}\right)}{2N_2\pi\,\left(\lambda+\lambda_{2, j}\right)}\right],  &\mathrm{for}\,\,\, \lambda\in\Gamma_-.
\end{cases}
\end{gather}
\end{itemize}
\end{RH}
In this framework, discrete spectral points $\lambda_{1, j}$ are evenly spaced within the interval $(\eta_1, \eta_0)$, given by $\lambda_{1, j} = \eta_1 + j(\eta_0 - \eta_1) / (N_1 + 1)$, for $j = 1, 2, \dots, N_1$. Likewise, $\lambda_{2, j}$ are situated in $(\eta_0, \eta_2)$, defined by $\lambda_{2, j} = \eta_0 + j(\eta_2 - \eta_0) / (N_2 + 1)$, for $j = 1, 2, \dots, N_2$. These discrete spectral values reside within the broader interval $(\eta_1, \eta_2)$, allowing for singularities of index $\beta_j$ at the endpoints of the reflection coefficient.
Contrasting with \cite{15}, where the midpoint $\eta_0$ does not appear within the spectral distribution, this configuration permits additional local characteristics at $\lambda = \eta_0$, such as a singularity of index $\beta_0$ for $r_0$ and a discontinuity for $r_c$. Both types of reflection coefficients, $r_0$ and $r_c$, are incorporated in \eqref{MN}, with the values of $r$ on $(-\eta_2, -\eta_1)$ determined by the symmetry $r(\lambda) = r(-\lambda)$.

The limit technique serves as a powerful and efficient approach for identifying novel nonlinear wave solutions in integrable systems, particularly those not accessible through direct solution methods. For instance, by examining the limit of a sequence of Riemann-Hilbert problems corresponding to $N$th-order rogue waves, one can capture infinite-order rogue wave solutions, as demonstrated in \cite{31} within the framework of the inverse scattering transform (IST) \cite{32}.
This technique proves effective in cases where traditional methods such as the Darboux transformation \cite{33,34,35}, Hirota’s bilinear method \cite{36,37}, and the IST \cite{23,38} are unable to derive explicit solutions. Similarly, this limit approach has been applied to uncover an infinite-order soliton solution for the focusing NLS equation \cite{39}, as well as an infinite-order rational-soliton solution for the complex modified KdV equation~\cite{weng25}. A key element in this limiting process is the use of a suitable rescaling transform, which is essential for ensuring the convergence of the jump matrices sequence.
Furthermore, this rescaling facilitates the study of large-$N$ asymptotics of $N$-soliton solutions with initial data in the form of $N\, \mathrm{sech}(x)$, rendering it equivalent to a semiclassical limit problem \cite{40}. Unlike the rescaling approach, the method outlined in \cite{15} emphasizes the selection of appropriate norming constants, leading to convergence within the framework of a Riemann integral.

In this study, we adopt norming constants by discretizing the generalized reflection coefficients, specifically $r = r_0$ and $r = r_c$. The convergence process to establish a soliton gas not only approaches definite integrals but also extends to improper integrals. Alternatively, the limit process detailed in \cite{15} can be seen as a semiclassical approximation of the reflection coefficient, as discussed in \cite{41,42}. By taking the limits as $N_1 \to \infty$ and $N_2 \to \infty$, we derive the Riemann-Hilbert problem~\ref{RH-2} associated with a kink-soliton gas.

\begin{RH} \label{RH-3} The solution $M^\infty$, which is a $2 \times 2$ matrix-valued function, satisfies the following conditions:
\begin{itemize}

\item{} $M^\infty$ is analytic for $\lambda \in \mathbb{C} \setminus (\Gamma_+ \cup \Gamma_-)$;

\item{} The normalization condition at infinity is $M^\infty \to \mathbb{I}_2$ as $\lambda \to \infty$;

\item{} The boundary values on the contours relate as follows:
\begin{gather}\label{Minfty}
M^\infty_+=M^\infty_-
\begin{cases}
\mathcal{L}^{t\theta}\left[\mathrm{i}\left(\mathcal{P}_1+\mathcal{P}_2\right)
\right], &\mathrm{for}\,\,\, \lambda\in\Gamma_+,\\[0.5em]
\mathcal{U}^{t\theta}\left[\mathrm{i}\left(\mathcal{P}_{-1}+\mathcal{P}_{-2}\right)\right],  &\mathrm{for}\,\,\, \lambda\in\Gamma_-,
\end{cases}
\end{gather}
where
$\mathcal{P}_1=\int_{\eta_1}^{\eta_0}\frac{r(s)}{s-\lambda}\mathrm{d}s$, $\mathcal{P}_2=\int_{\eta_0}^{\eta_2}\frac{r(s)}{s-\lambda}\mathrm{d}s$, $\mathcal{P}_{-1}=\int_{-\eta_0}^{-\eta_1}\frac{r(s)}{s-\lambda}\mathrm{d}s$, and $\mathcal{P}_{-2}=\int_{-\eta_2}^{-\eta_0}\frac{r(s)}{s-\lambda}\mathrm{d}s$.
Following the approach of Zhou's vanishing lemma \cite{43}, used for the Riemann-Hilbert problem 7 in \cite{20}, a unique solution for $M^\infty$ can be established.
\end{itemize}
\end{RH}
The kink-soliton gas solution $u = u(x, t)$ is then obtained from $M^\infty$ through:
\begin{gather}\label{u-infinity}
\frac{\partial u}{\partial x}=4 \lim_{\lambda\to\infty} \lambda M^\infty_{1, 2}, \quad
\cos u=1-2M^\infty_{1, 2}(0)^2, \quad
\sin u=-2M^\infty_{1, 1}(0)M^\infty_{1, 2}(0).
\end{gather}
For this limiting process, it is essential that the jump matrices, as they transition from \eqref{MN} to \eqref{Minfty}, are consistent. This holds when the exponents $\beta_1$, $\beta_2$, and $\beta_0$ satisfy $\beta_1, \beta_2, \beta_0 > -1$. This validity is considered in two cases.
First, when $\beta_1, \beta_2, \beta_0 \ge 0$, the condition aligns directly with the definition of a definite Riemann integral. In the second case, where any of $\beta_1$, $\beta_2$, or $\beta_0$ lies within $(-1, 0)$, improper integrals emerge due to the singularities, making the convergence of the Riemann sum independent of the definite integral definition. However, convergence towards improper integrals can still be established by utilizing monotonicity and uniform continuity, relying on elementary calculus techniques.

\subsection{The main results}
To ensure that $E$ normalizes to the identity matrix $\mathbb{I}_2$ as $\lambda \to \infty$ and that its associated jump matrices uniformly and exponentially decay to the identity, a sequence of transformations is applied: $Y \mapsto T \mapsto S \mapsto E$. These transformations were initially employed by Deift, Kriecherbauer, McLaughlin, Venakides, and Zhou in their work on the asymptotic behavior of orthogonal polynomials with exponential weights \cite{25,44}, building on the Riemann-Hilbert framework proposed by Fokas, Its, and Kitaev \cite{45,46}.
The Deift-Zhou steepest descent method has since been adapted to a wide range of orthogonal polynomial classes, including those with logarithmic weights \cite{47,48}, Freud weights \cite{49}, and Laguerre polynomials \cite{50,51,52,53}. It has also been extended to measures supported on the complex plane \cite{54}, as well as Jacobi weights \cite{55}, modified Jacobi weights \cite{56,57}, and even discontinuous Gaussian weights \cite{58}, among others. Beyond polynomials, this technique has found applications in various other areas, such as analyzing the distribution of the longest increasing subsequence in random permutations \cite{17} and studying the asymptotic properties of the discrete holomorphic map $Z^a$ \cite{59}.

In this paper, we rigorously establish the asymptotic behavior of sine-Gordon kink-soliton gases, focusing on two generalized reflection coefficients, $r_0$ and $r_c$, across the regions $\left(-\infty, \xi_\mathrm{crit}\right)$, $\left(\xi_\mathrm{crit}, \xi_0\right)$, $\left(\xi_0, -\eta_2^{-2}\right)$, and $\left(-\eta_2^{-2}, +\infty\right)$. The critical points $\xi_\mathrm{crit}$ and $\xi_0$ are defined by the equations:
\begin{gather}
\xi_\mathrm{crit}=-\eta_2^{-2}\,W\left(\frac{\eta_1}{\eta_2}\right), \quad \xi_0=-\eta_2^{-2}\,W\left(\frac{\eta_0}{\eta_2}\right),
\end{gather}
 where $W: m \mapsto W(m)$ is given by
\begin{gather}
W(m)=\frac{eE(m)/eK(m)}{m\left(m^2-1+eE(m)/eK(m)\right)},
\end{gather}
with $eE$ and $eK$ representing the complete elliptic integrals as defined in \eqref{elliptic12}. The function $W$ aligns with a specific $g$-function used within the conjugation process to achieve exponential decay along the lenses, with the Airy parametrix characterizing the local behavior near $\lambda = \alpha$.
The parameter $\alpha$ is determined uniquely through the Whitham modulation equation \cite{60}:
\begin{gather}\label{Whitham}
\xi=-\eta_2^{-2}\, W\left(\frac{\alpha}{\eta_2}\right).
\end{gather}

The primary findings of this paper are summarized as follows.

\begin{theorem}[Large-$x$ asymptotics for the initial value $u(x, 0)$ of the kink-soliton gas]\label{large-x}
For these two types of generalized reflection coefficients $r=r_0$ and $r=r_c$, the large-$x$ asymptotics for the initial value
$u(x, 0)$ of  the kink-soliton gas are established as follows.

\begin{enumerate}[0.]
\item[\rm{\textbullet}]  For $\beta_2, \beta_1, \beta_0\ge 0$, in the limit of $x\to+\infty$, there exists a positive constant $\mu_0$ such that
\begin{gather}\label{large-x-right}
\begin{aligned}
&\frac{\mathrm{d}u(x, 0)}{\mathrm{d}x}=\mathcal{O}\left(\mathrm{e}^{-\mu_0 x}\right),\\[0.5em]
&\cos u(x, 0)=1+\mathcal{O}\left(\mathrm{e}^{-\mu_0 x}\right),\\[0.5em]
&\sin u(x, 0)=\mathcal{O}\left(\mathrm{e}^{-\mu_0 x}\right).
\end{aligned}
\end{gather}
\item[\rm{\textbullet}] For $\beta_2, \beta_1, \beta_0>-1$, in the limit of $x\to-\infty$,  we have
\begin{gather}\label{large-x-left}
\begin{aligned}
&\frac{\mathrm{d}u(x, 0)}{\mathrm{d}x}=
2\left(\eta_1-\eta_2\right)\frac{\vartheta_3\left(\frac{1}{2}+\frac{\Delta^0_1}{2\pi \mathrm{i}}; \tau_1\right)\vartheta_3\left(0; \tau_1\right)}{\vartheta_3\left(\frac{1}{2}; \tau_1\right)\vartheta_3\left(\frac{\Delta^0_1}{2\pi \mathrm{i}}; \tau_1\right)}
+\mathcal{O}\left(\frac{1}{\left|x\right|}\right),                                                             \\[0.5em]
&\cos u(x, 0)=1-8
\left(\mathcal{A}_1\frac{\vartheta_3\left(\frac{1+\tau_1}{2}+\frac{\Delta^0_1}{2\pi \mathrm{i}}; \tau_1\right)\vartheta_3\left(0; \tau_1\right)}{\vartheta_3'\left(\frac{1+\tau_1}{2}; \tau_1\right)\vartheta_3\left(\frac{\Delta^0_1}{2\pi \mathrm{i}}; \tau_1\right)}\right)^2
\mathrm{e}^{\Delta^0_1}
+\mathcal{O}\left(\frac{1}{\left|x\right|}\right),                                  \\[0.5em]
&\sin u(x, 0)=4\mathcal{A}_1
\frac{\vartheta_3\left(\frac{\tau_1}{2}-\frac{\Delta^0_1}{2\pi \mathrm{i}}; \tau_1\right)
\vartheta_3\left(\frac{1+\tau_1}{2}+\frac{\Delta^0_1}{2\pi \mathrm{i}}; \tau_1\right)
\vartheta_3^2\left(0; \tau_1\right)}{\vartheta_3\left(\frac{\tau_1}{2}; \tau_1\right)
\vartheta_3'\left(\frac{1+\tau_1}{2}; \tau_1\right)
\vartheta_3^2\left(\frac{\Delta^0_1}{2\pi \mathrm{i}}; \tau_1\right)}
+\mathcal{O}\left(\frac{1}{\left|x\right|}\right),
\end{aligned}
\end{gather}
where
$\mathcal{A}_1=\left(1-m_1\right)eK\left(m_1\right)$,
$m_1=\eta_1/\eta_2$,
$\tau_1=\mathrm{i}\,eK\left(\sqrt{1-m_1^2}\right)\left/\right.2eK\left(m_1\right)$,
$\Omega_1=-\pi \mathrm{i}\eta_2\left/\right.eK\left(m_1\right)$,
\begin{gather}
\Delta^0_1=\Omega_1(x+\phi_1), \quad
\phi_1=-\int_{\eta_1}^{\eta_2}\frac{\log r(s)}{\sqrt{\left(s^2-\eta_1^2\right)\left(\eta_2^2-s^2\right)}}\frac{\mathrm{d}s}{\pi},
\end{gather}
and $\vartheta_3$ being the Jacobi theta function, defined by
\begin{gather}
\vartheta_3\left(\lambda; \tau\right)=\sum_{n\in\mathbb{Z}}\exp\left\{2\pi\mathrm{i}n\lambda+\pm\mathrm{i}n^2\tau\right\}.
\end{gather}
\end{enumerate}
\end{theorem}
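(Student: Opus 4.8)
The plan is to treat this statement as the large-space specialization of the Deift--Zhou nonlinear steepest descent analysis of Riemann--Hilbert Problem~\ref{RH-3}, evaluated at $t=0$. At $t=0$ the exponentials carried by $\mathcal{L}^{t\theta}$ and $\mathcal{U}^{t\theta}$ collapse to the purely linear phase $t\theta|_{t=0}=x\lambda$, so $x$ becomes the large parameter and the four spectral regions degenerate to their two extremes: $x\to+\infty$ corresponds to $\xi\to+\infty$ (the region $(-\eta_2^{-2},+\infty)$) while $x\to-\infty$ corresponds to $\xi\to-\infty$ (the region $(-\infty,\xi_\mathrm{crit})$). I would run the standard chain $M^\infty=Y\mapsto T\mapsto S\mapsto E$, reading off $\partial_x u$ from the $\lambda^{-1}$ coefficient at infinity and $\cos u,\sin u$ from $M^\infty$ evaluated at $\lambda=0$, as prescribed by \eqref{u-infinity}. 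The reflection symmetry $r(\lambda)=r(-\lambda)$ means the data live symmetrically on $[\eta_1,\eta_2]\cup[-\eta_2,-\eta_1]$, which will dictate a two-band (genus-one) outer model.

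First I would construct the scalar $g$-function governing the conjugation $T=M^\infty e^{x(g(\lambda)-\lambda)\sigma_3}$, normalized so that $g(\lambda)-\lambda=O(\lambda^{-1})$ and hence $T\to\mathbb{I}_2$ at infinity. The natural object is $R(\lambda)=\bigl((\lambda^2-\eta_1^2)(\lambda^2-\eta_2^2)\bigr)^{1/2}$ with $g'(\lambda)=P(\lambda)/R(\lambda)$, where the polynomial $P$ is fixed by the behavior at infinity, the vanishing of the relevant $B$-period, and---this is the sine-Gordon-specific constraint highlighted in the abstract---regularity of $g$ at $\lambda=0$, since $u$ is extracted there through $M^\infty(0)$. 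For $x\to-\infty$ the sign conditions on $\mathrm{Re}\,g$ force the band to fill the entire interval, i.e.\ $\alpha=\eta_1$ in the Whitham equation \eqref{Whitham}, so no soft edge or Airy point appears and only the hard edges $\eta_1,\eta_2$ and the singularity $\eta_0$ survive. The conjugation makes the diagonal jump constant on each band and renders $\mathrm{Re}\,g$ sign-definite across the gaps and prospective lens regions.

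Next, opening lenses ($S$) factorizes the band jump into triangular factors that, by sign-definiteness of $\mathrm{Re}\,g$, decay like $e^{-\mu_0|x|}$ off the bands. I would then build the global parametrix on the genus-one surface of $R$: its theta-function solution supplies exactly the Jacobi $\vartheta_3$ combinations appearing in \eqref{large-x-left}, with nome $\tau_1=\mathrm{i}\,eK(\sqrt{1-m_1^2})/(2\,eK(m_1))$, $m_1=\eta_1/\eta_2$, linear phase $\Delta^0_1=\Omega_1(x+\phi_1)$, and the $\log r$ moment $\phi_1$ entering through a scalar (Szeg\H{o}-type) transformation that absorbs $r$ into the jump. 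At the hard edges $\eta_1,\eta_2$ I would match against modified Bessel (first-kind) parametrices that accommodate the $(\lambda-\eta_1)^{\beta_1}$ and $(\eta_2-\lambda)^{\beta_2}$ local behavior, and at $\eta_0$ against the second-kind Bessel parametrix for $r_0$ or the confluent hypergeometric parametrix for $r_c$. The residual $E=S P^{-1}$ then solves a small-norm problem with jumps $\mathbb{I}_2+O(1/|x|)$ on the parametrix circles and $\mathbb{I}_2+O(e^{-\mu_0|x|})$ on the lenses, yielding $E=\mathbb{I}_2+O(1/|x|)$ uniformly.

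For $x\to+\infty$ the same $g$-function makes $\mathrm{Re}\,g$ sign-definite in the opposite sense: no band opens, the outer parametrix is trivial ($P^\infty=\mathbb{I}_2$), and the whole jump becomes exponentially close to the identity, so $E=\mathbb{I}_2+O(e^{-\mu_0 x})$ and the vacuum estimates \eqref{large-x-right} follow. In both regimes, unravelling $E\mapsto S\mapsto T\mapsto M^\infty$ and substituting into \eqref{u-infinity}---taking the $\lambda^{-1}$ coefficient for $\partial_x u$ and the value at $\lambda=0$ for $\cos u$ and $\sin u$---produces the stated formulas. I expect the main obstacle to be twofold: constructing a $g$-function that is simultaneously regular at the origin (so that $M^\infty(0)$ is accessible) and sign-correct along the lenses, and carrying the singular exponents $\beta_j\in(-1,0)$ through the Bessel and confluent-hypergeometric local analysis so that the circle matching stays $O(1/|x|)$; within this, evaluating the genus-one theta parametrix precisely at $\lambda=0$ to recover the exact $\vartheta_3$ ratios for $\cos u$ and $\sin u$ will be the most computation-heavy step.
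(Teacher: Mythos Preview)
Your proposal is essentially correct and matches the paper's approach: the chain $Y\mapsto T\mapsto S\mapsto E$ with a genus-one $g$-function, a scalar Szeg\H{o}-type function absorbing $\log r$, lens opening, a $\vartheta_3$-based outer parametrix, and local Bessel/confluent-hypergeometric parametrices at $\eta_1,\eta_2,\eta_0$ is exactly what the paper does in Section~5. Two small corrections: at $t=0$ the phase is $t\theta|_{t=0}=x\lambda$, with no $\lambda^{-1}$ term, so the ``sine-Gordon-specific constraint'' of regularity at the origin that you highlight is \emph{not} an issue for the initial-value $g_0$-function (it is the KdV-type $g$-function of~\cite{15}, see~\eqref{g0}); that difficulty only appears in the long-time analysis where $\theta$ carries $\lambda^{-1}$. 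Also, for $x\to+\infty$ the paper does not invoke any $g$-function at all: since $\lambda\in(\eta_1,\eta_2)$ is bounded away from zero and $r$ is bounded when $\beta_j\ge 0$, the off-diagonal entry $ir\,e^{-2\lambda x}$ in the jump already decays exponentially, and a direct small-norm argument on the original $Y$-problem suffices.
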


\begin{theorem}[Long-time asymptotics for the kink-soliton gas $u(x, t)$]
For these two types of generalized reflection coefficients $r=r_0$ and $r=r_c$, the long-time asymptotics for the sine-Gordon kink-soliton gas $u(x, t)$ are  established as follows.
\begin{enumerate}[0.]
\item[\rm{\textbullet}] For $\xi>-\eta_2^{-2}$ with $\beta_1, \beta_2, \beta_0\ge 0$, there exists a positive constant $\mu=\mu(\xi)$ such that
\begin{gather}\label{large-right}
\begin{aligned}
&\frac{\mathrm{d}u(x, t)}{\mathrm{d}x}=\mathcal{O}\left(\mathrm{e}^{-\mu t}\right),\\[0.5em]
&\cos u(x, t)=1+\mathcal{O}\left(\mathrm{e}^{-\mu t}\right),\\[0.5em]
&\sin u(x, t)=\mathcal{O}\left(\mathrm{e}^{-\mu t}\right).
\end{aligned}
\end{gather}

\item[\rm{\textbullet}] For $\xi\in\left(\xi_0, -\eta_2^{-2}\right)$ with  $\beta_0, \beta_1\ge 0, \beta_2>-1$,
and for  $\xi\in\left(\xi_{\mathrm{crit}}, \xi_0\right)$ with $\beta_0,\, \beta_2>-1, \beta_1\ge 0$,
\begin{gather}\label{large-middle}
\begin{aligned}
&\frac{\mathrm{d}u(x, t)}{\mathrm{d}x}=
2\left(\alpha-\eta_2\right)\frac{\vartheta_3\left(\frac{1}{2}+\frac{\Delta^\alpha}{2\pi \mathrm{i}}; \tau_\alpha\right)\vartheta_3\left(0; \tau_\alpha\right)}{\vartheta_3\left(\frac{1}{2}; \tau_\alpha\right)\vartheta_3\left(\frac{\Delta^\alpha}{2\pi \mathrm{i}}; \tau_\alpha\right)}
+\mathcal{O}\left(\frac{1}{t}\right),                \\[0.5em]
&\cos u(x, t)=1-8
\left(\mathcal{A}^\alpha\frac{\vartheta_3\left(\frac{1+\tau_\alpha}{2}+\frac{\Delta^\alpha}{2\pi \mathrm{i}}; \tau_\alpha\right)\vartheta_3\left(0; \tau_\alpha\right)}{\vartheta_3'\left(\frac{1+\tau_\alpha}{2}; \tau_\alpha\right)\vartheta_3\left(\frac{\Delta^\alpha}{2\pi \mathrm{i}}; \tau_\alpha\right)}\right)^2
\mathrm{e}^{\Delta^\alpha}
+\mathcal{O}\left(\frac{1}{t}\right),   \\[0.5em]
&\sin u(x, t)=4\mathcal{A}^\alpha
\frac{\vartheta_3\left(\frac{\tau_\alpha}{2}-\frac{\Delta^\alpha}{2\pi \mathrm{i}}; \tau_\alpha\right)
\vartheta_3\left(\frac{1+\tau_\alpha}{2}+\frac{\Delta^\alpha}{2\pi \mathrm{i}}; \tau_\alpha\right)
\vartheta_3^2\left(0; \tau_\alpha\right)}{\vartheta_3\left(\frac{\tau_\alpha}{2}; \tau_\alpha\right)
\vartheta_3'\left(\frac{1+\tau_\alpha}{2}; \tau_\alpha\right)
\vartheta_3^2\left(\frac{\Delta^\alpha}{2\pi \mathrm{i}}; \tau_\alpha\right)}
+\mathcal{O}\left(\frac{1}{t}\right),
\end{aligned}
\end{gather}
where
$\mathcal{A}^\alpha=\left(1-m_\alpha\right)eK\left(m_\alpha\right)$,
$m_\alpha=\alpha/\eta_2$,
$\tau_\alpha=\mathrm{i}\,eK\left(\sqrt{1-m_\alpha^2}\right)\left/\right.2eK\left(m_\alpha\right)$,
$\Omega^\alpha=-\pi \mathrm{i}\eta_2\left/\right.eK\left(m_\alpha\right)$,
\begin{gather}
\Delta^\alpha=\Omega^\alpha\left(x+\frac{t}{4\alpha \eta_2}+\phi^\alpha \right),  \quad
\phi^\alpha=-\int_{\alpha}^{\eta_2}\frac{\log r(s)}{\sqrt{\left(s^2-\alpha^2\right)\left(\eta_2^2-s^2\right)}}\frac{\mathrm{d}s}{\pi}.
\end{gather}
Specially, in the case of $r=r_0$ with $\beta_0=0$, \eqref{large-middle} holds for $\xi\in\left(\xi_\mathrm{crit}, -\eta_2^{-2}\right)$.

\item[\rm{\textbullet}] For $\xi<\xi_\mathrm{crit}$  with $\beta_1, \beta_2, \beta_0>-1$,
\begin{gather}\label{large-left}
\begin{aligned}
&\frac{\mathrm{d}u(x, t)}{\mathrm{d}x}=
2\left(\eta_1-\eta_2\right)\frac{\vartheta_3\left(\frac{1}{2}+\frac{\Delta_1}{2\pi \mathrm{i}}; \tau_1\right)\vartheta_3\left(0; \tau_1\right)}{\vartheta_3\left(\frac{1}{2}; \tau_1\right)\vartheta_3\left(\frac{\Delta_1}{2\pi \mathrm{i}}; \tau_1\right)}
+\mathcal{O}\left(\frac{1}{t}\right),                                                             \\[0.5em]
&\cos u(x, t)=1-8
\left(\mathcal{A}_1\frac{\vartheta_3\left(\frac{1+\tau_1}{2}+\frac{\Delta_1}{2\pi \mathrm{i}}; \tau_1\right)\vartheta_3\left(0; \tau_1\right)}{\vartheta_3'\left(\frac{1+\tau_1}{2}; \tau_1\right)\vartheta_3\left(\frac{\Delta_1}{2\pi \mathrm{i}}; \tau_1\right)}\right)^2
\mathrm{e}^{\Delta_1}
+\mathcal{O}\left(\frac{1}{t}\right),                                  \\[0.5em]
&\sin u(x, t)=4\mathcal{A}_1
\frac{\vartheta_3\left(\frac{\tau_1}{2}-\frac{\Delta_1}{2\pi \mathrm{i}}; \tau_1\right)
\vartheta_3\left(\frac{1+\tau_1}{2}+\frac{\Delta_1}{2\pi \mathrm{i}}; \tau_1\right)
\vartheta_3^2\left(0; \tau_1\right)}{\vartheta_3\left(\frac{\tau_1}{2}; \tau_1\right)
\vartheta_3'\left(\frac{1+\tau_1}{2}; \tau_1\right)
\vartheta_3^2\left(\frac{\Delta_1}{2\pi \mathrm{i}}; \tau_1\right)}
+\mathcal{O}\left(\frac{1}{t}\right),
\end{aligned}
\end{gather}
where $\Delta_1$ is denoted as
\begin{gather}
\Delta_1=\Omega_1\left(x+\frac{t}{4\eta_1\eta_2}+\phi_1\right)
\end{gather}

\end{enumerate}

\end{theorem}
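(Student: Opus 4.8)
The plan is to establish all three regimes by applying the Deift--Zhou nonlinear steepest descent method to Riemann--Hilbert Problem~\ref{RH-3} for $M^\infty$ and then reading off the asymptotics from the reconstruction formulas~\eqref{u-infinity}. The starting point is that, after substituting $\xi = 4x/t$, the controlling phase becomes $t\theta = x\lambda + t/(4\lambda)$, so all exponential growth and decay of the jumps $\mathcal{L}^{t\theta}[\cdot]$, $\mathcal{U}^{t\theta}[\cdot]$ along $\Gamma_\pm$ is governed by $t\theta$ measured against a suitably chosen $g$-function. The first and central step is the construction of $g(\lambda)$: I would require $g$ analytic off an active band $B$, normalized by $g(\lambda)=\mathcal{O}(\lambda^{-1})$ as $\lambda\to\infty$ and carrying the prescribed singular part $g(\lambda)\sim t/(4\lambda)$ as $\lambda\to 0$ forced by the sine-Gordon phase, with $g_++g_-$ equal to a real constant on $B$ (freezing the oscillation) and $\mathrm{Re}(g_+-g_-)$ of definite sign on the complementary arcs (forcing exponential decay of the lens jumps). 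The $\log r$ contribution enters $g$ through a Cauchy-type integral and produces the phase shifts $\phi^\alpha$, $\phi_1$, while the left band edge $\alpha$ is pinned by the Whitham modulation equation~\eqref{Whitham} via the function $W$; I would verify that $\alpha=\alpha(\xi)$ decreases monotonically from $\eta_2$ to $\eta_1$ as $\xi$ runs from $-\eta_2^{-2}$ down to $\xi_{\mathrm{crit}}$, crossing $\eta_0$ exactly at $\xi_0$.

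This geometry organizes the three regimes. For $\xi>-\eta_2^{-2}$ the band is empty (genus zero), $g$ can be chosen so that every jump on $\Gamma_\pm$ is exponentially close to $\mathbb{I}_2$, and the transformed problem is solved by the identity up to exponentially small error, yielding the vacuum asymptotics~\eqref{large-right}. For $\xi\in(\xi_0,-\eta_2^{-2})$ and $\xi\in(\xi_{\mathrm{crit}},\xi_0)$ the active band is $B=[\alpha,\eta_2]$ (genus one); the two subranges differ only in whether the moving edge $\alpha$ lies to the right of $\eta_0$ (so $\eta_0$ is frozen in the gap $[\eta_1,\alpha]$ and integrability demands $\beta_0\ge 0$) or to its left (so $\eta_0$ lies inside $B$, where $\beta_0>-1$ is admissible through a local parametrix), which is precisely the split recorded in the statement; when $\beta_0=0$ for $r_0$ there is no distinguished behavior at $\eta_0$ and the two subranges merge, as noted. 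For $\xi<\xi_{\mathrm{crit}}$ the band is pinned to the full interval $[\eta_1,\eta_2]$, producing the fixed-endpoint formulas~\eqref{large-left}. In each genus-one case I would carry out the chain $M^\infty\mapsto T\mapsto S\mapsto E$: conjugating by $g$ to stabilize the jumps ($T$), opening lenses around $B$ so off-band jumps decay exponentially in $t$ ($S$), and comparing with a global parametrix ($E$).

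The outer parametrix is built on the genus-one Riemann surface $y^2=(\lambda^2-\alpha^2)(\lambda^2-\eta_2^2)$ dictated by the $\lambda\mapsto-\lambda$ symmetry, and its entries are exactly the Jacobi theta quotients of~\eqref{large-middle}, with modulus $\tau_\alpha$ and frequency $\Omega^\alpha$ expressed through $eK,eE$ of $m_\alpha=\alpha/\eta_2$. Local parametrices are then inserted: a modified Bessel parametrix of the first kind at the hard edges $\eta_1,\eta_2$ to absorb the $(\lambda-\eta_1)^{\beta_1}$ and $(\eta_2-\lambda)^{\beta_2}$ singularities, an Airy parametrix at the soft edge $\alpha$ where the equilibrium density vanishes like a square root, and, when $\eta_0\in B$, the second-kind modified Bessel parametrix (for $r_0$) or the confluent hypergeometric parametrix (for $r_c$) to resolve the singularity or jump of $r$ at $\eta_0$. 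Matching these against the outer parametrix on the boundary circles produces an error matrix $E$ solving a small-norm RHP with jumps $\mathbb{I}_2+\mathcal{O}(1/t)$, so $E=\mathbb{I}_2+\mathcal{O}(1/t)$ uniformly. Substituting the resulting $M^\infty$ into~\eqref{u-infinity}---taking $\lambda\to\infty$ for $u_x$ and evaluating at $\lambda=0$, which lies outside $B$ so that the outer parametrix applies directly, for $\cos u$ and $\sin u$---delivers the theta-function leading terms with $\mathcal{O}(1/t)$ corrections.

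The hard part will be the $g$-function construction, which concentrates the genuinely sine-Gordon-specific difficulty flagged in the abstract: unlike the KdV case the phase carries a $t/(4\lambda)$ term, so $g$ must accommodate a prescribed pole at the origin while simultaneously meeting the normalization at infinity, the sign (equilibrium) conditions guaranteeing exponential decay of the lens jumps throughout each $\xi$-range, and the requirement that the reconstruction point $\lambda=0$ remain regular for the outer parametrix. Establishing the correct sign of $\mathrm{Re}(g_+-g_-)$ uniformly in $\xi$, and confirming that~\eqref{Whitham} determines a unique admissible $\alpha(\xi)$ sweeping $(\eta_1,\eta_2)$, is where the analysis is most delicate; once these structural facts are secured, the parametrix matching and the extraction via~\eqref{u-infinity} are routine. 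The right-most regime~\eqref{large-right} follows from the same machinery with an empty band, and the initial-value statement of Theorem~\ref{large-x} is recovered by the identical scheme with $t\theta$ replaced by $x\lambda$ and large $|x|$ playing the role of large $t$.
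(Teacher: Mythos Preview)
Your proposal is correct and follows essentially the same Deift--Zhou steepest-descent route as the paper: the same three $\xi$-regimes, the same chain $Y\mapsto T\mapsto S\mapsto E$, outer parametrix in Jacobi theta functions on the genus-one curve $y^2=(\lambda^2-\alpha^2)(\lambda^2-\eta_2^2)$, Airy parametrix at the moving edge $\alpha$, first-kind modified Bessel parametrices at $\eta_1,\eta_2$, and second-kind modified Bessel (for $r_0$) or confluent hypergeometric (for $r_c$) parametrices at $\eta_0$, followed by a small-norm argument for $E$. The only presentational difference is that the paper separates your ``$\log r$ contribution'' out of the $g$-function into a distinct scalar $f$-function (so the conjugation is $T=Y\,\mathrm{e}^{tg\sigma_3}f^{-\sigma_3}$ with $g=\theta-p$ purely geometric and $f$ carrying the $\log r$ Cauchy integral producing $\phi^\alpha,\phi_1$), whereas you fold both into a single $g$; this is a matter of bookkeeping and does not affect the argument.
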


\begin{remark}
In the results presented above, we restricted our focus to a single singularity, $\eta_0$, for the generalized reflection coefficients $r_0$ and $r_c$. However, it is feasible to generalize these coefficients to include an arbitrary number $n$ of singularities, denoted by the set $\{\eta_{0,j}\}_{j=1}^n$.

For the first generalized reflection coefficient, we can define:
\begin{gather}\label{general-0}
r_0 = (\lambda - \eta_1)^{\beta_1} (\eta_2 - \lambda)^{\beta_2} \left(\prod_{j=1}^n |\lambda - \eta_{0, j}|^{\beta_{0, j}}\right) \gamma(\lambda),
\end{gather}
where $\eta_1 < \eta_{0, 1} < \eta_{0, 2} < \cdots < \eta_{0, n} < \eta_2$ and each $\beta_{0, j} \in (-1, 0) \cup (0, +\infty)$.

Similarly, for the second generalized reflection coefficient, we define:
\begin{gather}\label{general-1}
r_c = (\lambda - \eta_1)^{\beta_1} (\eta_2 - \lambda)^{\beta_2} \left(\prod_{j=1}^n \chi_j(\lambda)\right) \gamma(\lambda),
\end{gather}
where $\chi_j(\lambda)$ is a step-like function such that $\chi_j(\lambda) = 1$ for $\lambda \in [\eta_{0, j-1}, \eta_{0, j})$ and $\chi_j(\lambda) = c_j^2$ for $\lambda \in (\eta_{0, j}, \eta_{0, j+1}]$, with constants $c_j \neq 0$. Here, we take $\eta_1 = \eta_{0, 0} < \eta_{0, 1} < \eta_{0, 2} < \cdots < \eta_{0, n} < \eta_{0, n+1} = \eta_2$.

The long-time asymptotic behavior of kink-soliton gases defined by these extended forms \eqref{general-0} and \eqref{general-1} can be determined using the methods outlined in this paper, with the added construction of local parametrices around each $\eta_{0, j}$. For the first case \eqref{general-0}, the local parametrix around $\eta_{0, j}$ can be derived using the second type of modified Bessel parametrix, while for the second case \eqref{general-1}, confluent hypergeometric parametrix are employed for the construction of these local solutions.

\end{remark}

\vspace{1em}
\noindent \textbf{Notations.}\,\,
This introduction concludes with an overview of the notational conventions utilized throughout the paper. The subscripts \( + \) and \( - \) indicate the non-tangential boundary values taken from the left and right sides, respectively, along a jump contour in the context of a Riemann-Hilbert problem.

For brevity in expressing the jump matrices, we introduce the notations \( \mathcal{L}^{\lambda_1}_{\lambda_2}\left[\lambda_0\right] \) and \( \mathcal{U}^{\lambda_1}_{\lambda_2}\left[\lambda_0\right] \), defined as:
\begin{gather}\label{LU}
\mathcal{L}^{\lambda_1}_{\lambda_2}\left[\lambda_0\right] = \mathrm{e}^{\lambda_1\sigma_3} \lambda_2^{\sigma_3} \mathcal{L}\left[\lambda_0\right] \lambda_2^{-\sigma_3} \mathrm{e}^{-\lambda_1\sigma_3}, \\[1em]
\mathcal{U}^{\lambda_1}_{\lambda_2}\left[\lambda_0\right] = \mathrm{e}^{\lambda_1\sigma_3} \lambda_2^{\sigma_3} \mathcal{U}\left[\lambda_0\right] \lambda_2^{-\sigma_3} \mathrm{e}^{-\lambda_1\sigma_3},
\end{gather}
where \( \mathcal{L}[\lambda_0] \) and \( \mathcal{U}[\lambda_0] \) denote the lower and upper triangular matrices, respectively, each with $1$'s along the diagonal:
\begin{gather}\label{LU0}
\mathcal{L}[\lambda_0] =
\begin{pmatrix}
1 & 0 \\
\lambda_0 & 1
\end{pmatrix}, \qquad
\mathcal{U}[\lambda_0] =
\begin{pmatrix}
1 & \lambda_0 \\
0 & 1
\end{pmatrix}.
\end{gather}
There are two notable special cases within \eqref{LU}. First, \( \mathcal{L}^{\lambda_1} \) and \( \mathcal{U}^{\lambda_1} \) correspond to \eqref{LU} when \( \lambda_2 = 1 \). Second, \( \mathcal{L}_{\lambda_2} \) and \( \mathcal{U}_{\lambda_2} \) are given by \eqref{LU} with \( \lambda_1 = 0 \).

The constants \( C \), \( C_0 \), and \( C_1 \) are defined as:
\begin{gather}
C = \frac{1}{\sqrt{2}}
\begin{pmatrix}
1 & \mathrm{i} \\
\mathrm{i} & 1
\end{pmatrix}, \quad
C_0 = -\sqrt{2\pi}
\begin{pmatrix}
1 & 0 \\
0 & \mathrm{i}
\end{pmatrix}, \quad
C_1 = \frac{1}{\sqrt{2}}
\begin{pmatrix}
1 & -1 \\
1 & 1
\end{pmatrix}.
\end{gather}
We denote by \( B(\lambda_0) \) a neighborhood around \( \lambda_0 \) on the \( \lambda \)-plane, while \( B^\zeta(\zeta_0) \) represents a neighborhood around \( \zeta_0 \) on the \( \zeta \)-plane. The symbols \( eK \) and \( eE \) refer to the complete elliptic integrals of the first and second kinds, respectively, defined as:
\begin{gather}\label{elliptic12}
eK(\lambda)=\int_0^{\pi/2}\frac{\mathrm{d}y}{\sqrt{1-\lambda^2\sin^2 y}}, \qquad  eE(\lambda)=\int_0^{\pi/2}\sqrt{1-\lambda^2\sin^2 y}\,\mathrm{d}y.
\end{gather}

{\it Organization of this paper}. The remainder of this paper is structured as follows: In Section 2, we present several solvable Riemann-Hilbert models that will be utilized in subsequent sections. Section 3 focuses on deriving a piecewise-defined $g$-function. In Section 4, we formulate the Riemann-Hilbert problem for the sine-Gordon kink-soliton gas. In Section 5, we propose the large-$x$ asymptotic behavior for the initial value of the kink-soliton gas. Section 6 is dedicated to deriving the long-time asymptotics of the sine-Gordon kink-soliton gas. Finally, in Section 7, we provide concluding remarks and discussions.

\section{Solvable Riemann-Hilbert models}

In what follows, we detail various solvable Riemann-Hilbert models that serve as foundational elements in our analysis.

\subsection{Airy parametrix $M^{\mathrm{Ai}}(\zeta)$}

The construction of the matrix $M^{\mathrm{Ai}}(\zeta)$ involves the Airy function of the first kind, $\mathrm{Ai}(\zeta)$, which satisfies the classical Airy differential equation:
\begin{gather}
\frac{\mathrm{d}^2y}{\mathrm{d}\zeta^2}-\zeta y=0.
\end{gather}
This Airy parametrix, introduced by Deift, Kriecherbauer, McLaughlin, Venakides, and Zhou, has been instrumental in the analysis of the asymptotic behavior of orthogonal polynomials associated with exponential weights. This method leverages the Deift-Zhou steepest descent technique in addressing a Riemann-Hilbert problem, as outlined in references \cite{44} and \cite{46}.
In this study, we adopt a slightly modified version of $M^{\mathrm{Ai}}(\zeta)$, illustrated in Figure \ref{BesselAiry}(Left). The matrix is defined as follows:
For $\zeta\in\mathrm{D}_1^\zeta$, the matrix $M^{\mathrm{Ai}}\left(\zeta\right)$ takes the form
\begin{gather}
M^{\mathrm{Ai}}\left(\zeta\right)=\zeta_0^{-\sigma_3/4}C_0
\begin{pmatrix}
\mathrm{Ai}'\left(\zeta_0\zeta\right)  & -\mathrm{e}^{2\pi \mathrm{i}/3} \mathrm{Ai}'\left(\mathrm{e}^{-2\pi\mathrm{i}/3}\zeta_0\zeta\right) \\[0.5em]
\mathrm{Ai}\left(\zeta_0\zeta\right)  & -\mathrm{e}^{-2\pi\mathrm{i}/3} \mathrm{Ai}\left(\mathrm{e}^{-2\pi\mathrm{i}/3}\zeta_0\zeta\right)
\end{pmatrix};
\end{gather}
For $\zeta\in\mathrm{D}_2^\zeta$, $M^{\mathrm{Ai}}\left(\zeta\right)$ is given by
\begin{gather}
M^{\mathrm{Ai}}\left(\zeta\right)=\zeta_0^{-\sigma_3/4}C_0
\begin{pmatrix}
-\mathrm{e}^{2\pi \mathrm{i}/3}\mathrm{Ai}'\left(\mathrm{e}^{-4\pi \mathrm{i}/3}\zeta_0\zeta\right)  & -\mathrm{e}^{-4\pi \mathrm{i}/3} \mathrm{Ai}'\left(\mathrm{e}^{2\pi \mathrm{i}/3}\zeta_0\zeta\right) \\[0.5em]
-\mathrm{e}^{-4\pi \mathrm{i}/3}\mathrm{Ai}\left(\mathrm{e}^{-4\pi \mathrm{i}/3}\zeta_0\zeta\right)  & -\mathrm{e}^{2\pi \mathrm{i}/3} \mathrm{Ai}\left(\mathrm{e}^{2\pi \mathrm{i}/3}\zeta_0\zeta\right)
\end{pmatrix};
\end{gather}
For $\zeta\in\mathrm{D}_3^\zeta$, we define
\begin{gather}
M^{\mathrm{Ai}}\left(\zeta\right)=\zeta_0^{-\sigma_3/4}C_0
\begin{pmatrix}
-\mathrm{e}^{2\pi \mathrm{i}/3}\mathrm{Ai}'\left(\mathrm{e}^{4\pi \mathrm{i}/3}\zeta_0\zeta\right)  & \mathrm{e}^{4\pi \mathrm{i}/3} \mathrm{Ai}'\left(\mathrm{e}^{2\pi \mathrm{i}/3}\zeta_0\zeta\right) \\[0.5em]
-\mathrm{e}^{4\pi \mathrm{i}/3}\mathrm{Ai}\left(\mathrm{e}^{4\pi \mathrm{i}/3}\zeta_0\zeta\right)  & \mathrm{e}^{2\pi \mathrm{i}/3} \mathrm{Ai}\left(\mathrm{e}^{2\pi \mathrm{i}/3}\zeta_0\zeta\right)
\end{pmatrix};
\end{gather}
$\zeta\in\mathrm{D}_4^\zeta$, we have
\begin{gather}
M^{\mathrm{Ai}}\left(\zeta\right)=\zeta_0^{-\sigma_3/4}C_0
\begin{pmatrix}
\mathrm{Ai}'\left(\zeta_0\zeta\right)  & \mathrm{e}^{4\pi \mathrm{i}/3} \mathrm{Ai}'\left(\mathrm{e}^{2\pi \mathrm{i}/3}\zeta_0\zeta\right) \\[0.5em]
\mathrm{Ai}\left(\zeta_0\zeta\right)  & \mathrm{e}^{2\pi \mathrm{i}/3} \mathrm{Ai}\left(\mathrm{e}^{2\pi \mathrm{i}/3}\zeta_0\zeta\right)
\end{pmatrix}.
\end{gather}
Here, $\zeta_0=\left(2/3\right)^{-2/3}$ is a scaling factor.

\begin{RH}
The matrix $M^{\mathrm{Ai}}\left(\zeta\right)$ is designed to satisfy a Riemann-Hilbert problem with several key properties.

\begin{itemize}
\item It is analytic for $\zeta \in \mathbb{C} \setminus \cup_{j=1}^4 \Sigma_j$.

\item It normalizes as $\zeta \to \infty$ according to the asymptotic condition:
\begin{gather}
M^{\mathrm{Ai}}\left(\zeta\right)=\zeta^{\sigma_3/4}C^{-1}\left(\mathbb{I}_2+\mathcal{O}\left(\frac{1}{\zeta^{3/2}}\right)\right)\mathrm{e}^{-\zeta^{3/2}\,\sigma_3}.
\end{gather}
\item On the contours $\zeta \in \bigcup_{j=1}^4 \Sigma_j^0$, the function $M^{\mathrm{Ai}}\left(\zeta\right)$ admits boundary values, which satisfy the following jump conditions:
\begin{gather}
M^{\mathrm{Ai}}_+\left(\zeta\right)=M^{\mathrm{Ai}}_-\left(\zeta\right)
\begin{cases}
\mathcal{L}\left[1\right], &\mathrm{for}\,\,\, \zeta\in\Sigma_1^0\cup\Sigma_3^0, \\[0.5em]
\mathrm{i}\sigma_2, &\mathrm{for}\,\,\, \zeta\in\Sigma_2^0, \\[0.5em]
\mathcal{U}\left[1\right], &\mathrm{for}\,\,\, \zeta\in\Sigma_3^0,
\end{cases}
\end{gather}
where $\Sigma_j^0 = \Sigma_j \setminus \{0\}$ for \(j = 1, 2, 3, 4\).

\item At the origin $\zeta = 0$, the matrix  $M^{\mathrm{Ai}}(\zeta)$ exhibits the following local behavior:
\begin{gather}
M^{\mathrm{Ai}}\left(\zeta\right)=
\mathcal{O}
\begin{pmatrix}
1 & 1\\
1 & 1
\end{pmatrix},
\quad \mathrm{as}\,\,\, \zeta\to 0.
\end{gather}
\end{itemize}

\end{RH}

\begin{figure}[!t]
\centering
\includegraphics[scale=0.3]{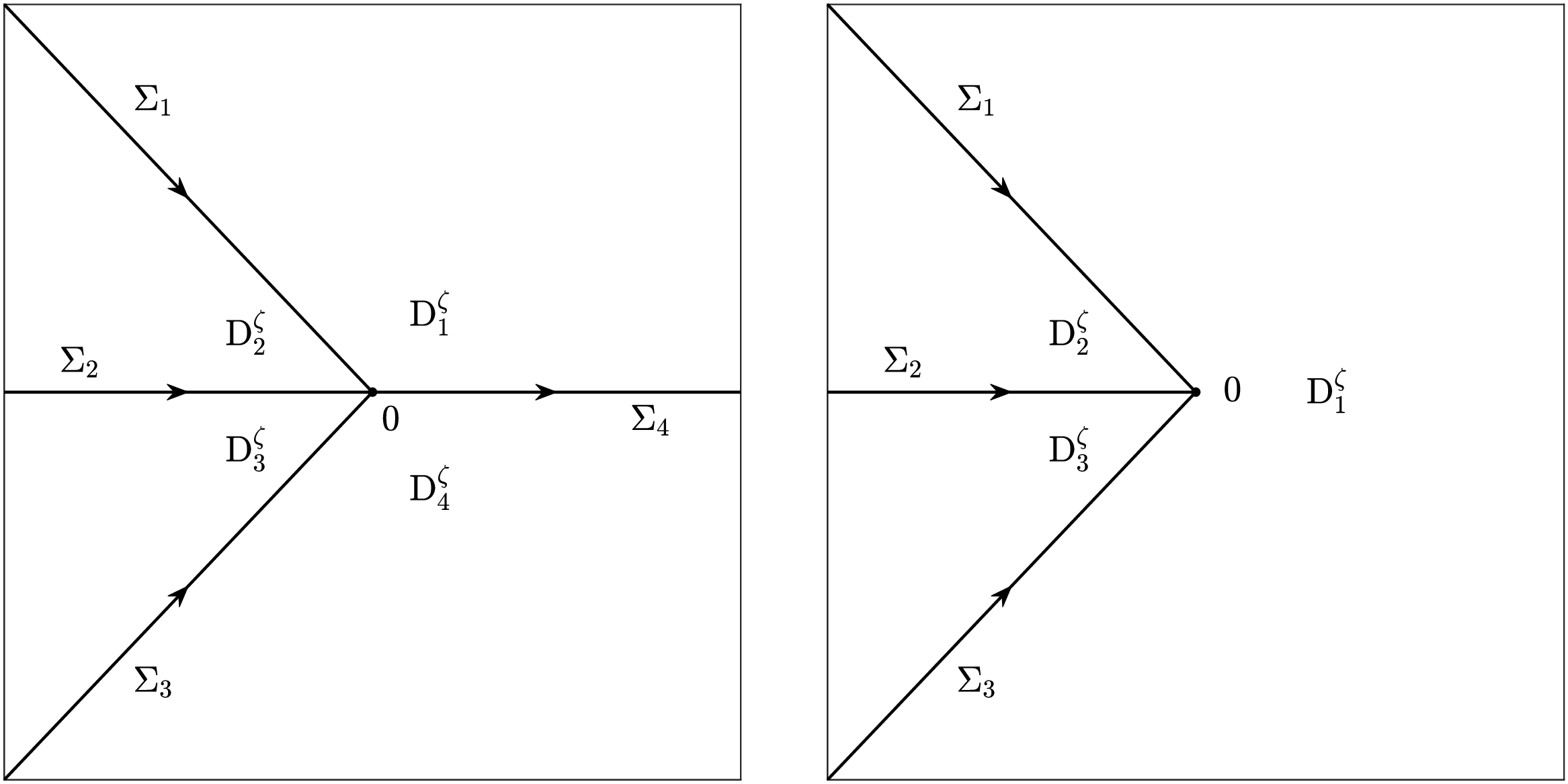}
\caption{Left: Jump contours for Airy parametrix $M^{\mathrm{mB}}$; Right: Jump contours for the first type of modified Bessel parametrix $M^{\mathrm{Ai}}$.}
\label{BesselAiry}
\end{figure}

\subsection{The first type of modified Bessel parametrix $M^{\mathrm{mB}}\left(\zeta; \beta\right)$}
The matrix \( M^{\mathrm{mB}}\left(\zeta; \beta\right) \) is constructed using the modified Bessel functions of the first and second kinds, denoted \( I_{\beta}(\zeta) \) and \( K_{\beta}(\zeta) \), respectively, where the index \(\beta\) lies within the range \((-1, +\infty)\). These functions provide solutions to the modified Bessel's differential equation:
\begin{gather}
\zeta^2 \frac{\mathrm{d}^2 y}{\mathrm{d}\zeta^2} +\zeta \frac{\mathrm{d}y}{\mathrm{d}\zeta}-\left(\zeta^2+\beta^2\right) y=0,
\end{gather}
where the first solution, \( I_{\beta}(\zeta) \), is given by the series expansion
\[
I_{\beta}(\zeta) = \left(\frac{\zeta}{2}\right)^{\beta} \sum_{n=0}^\infty \frac{\zeta^{2n}}{4^n \, \Gamma(\beta+n+1) \, n!}.
\]
Meanwhile, \( K_{\beta}(\zeta) \) is defined by its asymptotic behavior:
\[
K_{\beta}(\zeta) \sim \sqrt{\frac{\pi}{2\zeta}} \, \mathrm{e}^{-\zeta}, \quad \text{as} \quad \zeta \to \infty, \quad \text{for} \quad \arg(\zeta) \in (-3\pi/2, 3\pi/2).
\]
The modified Bessel parametrix, initially introduced by Kuijlaars, McLaughlin, Assche, and Vanlessen, is essential for capturing the local behavior near the endpoints in the analysis of orthogonal polynomials under modified Jacobi weights, as discussed in \cite{56}.
In the present context, with minor modifications as depicted in Figure \ref{BesselAiry}(Right), the parametrix \( M^{\mathrm{mB}}(\zeta; \beta) \) is constructed in the following manner:
For $\lambda\in\mathrm{D}_1^\zeta$, the matrix \( M^{\mathrm{mB}}\left(\zeta; \beta\right) \) is defined by
\begin{gather} \label{modified-Bessel-1}
M^{\mathrm{mB}}\left(\zeta; \beta\right)=-\mathrm{i}\sqrt{\pi}
\begin{pmatrix}
\mathrm{i}\sqrt{\zeta} & \\[0.5em]
&1
\end{pmatrix}
\begin{pmatrix}
I'_{\beta}\left(\sqrt{\zeta}\right)  & \mathrm{i}K'_{\beta}\left(\sqrt{\zeta}\right) \left/\right.\pi\\[0.5em]
I_{\beta}\left(\sqrt{\zeta}\right)  & \mathrm{i}K_{\beta}\left(\sqrt{\zeta}\right)\left/\right.\pi
\end{pmatrix};
\end{gather}
For $\lambda\in\mathrm{D}_2^\zeta$, the matrix is represented as
\begin{gather}\label{modified-Bessel-2}
M^{\mathrm{mB}}\left(\zeta; \beta\right)=\frac{1}{\sqrt{\pi}}
\begin{pmatrix}
\mathrm{i}\sqrt{\zeta} & \\[0.5em]
&1
\end{pmatrix}
\begin{pmatrix}
K'_{\beta}\left(\sqrt{\zeta}\,\mathrm{e}^{-\pi\mathrm{i}}\right)  & K'_{\beta}\left(\sqrt{\zeta}\right) \\[0.5em]
-K_{\beta}\left(\sqrt{\zeta}\,\mathrm{e}^{-\pi\mathrm{i}}\right)  & K_{\beta}\left(\sqrt{\zeta}\right)
\end{pmatrix};
\end{gather}
For $\lambda\in\mathrm{D}_3^\zeta$, the matrix is given by
\begin{gather}\label{modified-Bessel-3}
M^{\mathrm{mB}}\left(\zeta; \beta\right)=\frac{1}{\sqrt{\pi}}
\begin{pmatrix}
\mathrm{i}\sqrt{\zeta} & \\[0.5em]
&1
\end{pmatrix}
\begin{pmatrix}
-K'_{\beta}\left(\sqrt{\zeta}\,\mathrm{e}^{\pi\mathrm{i}}\right)  & K'_{\beta}\left(\sqrt{\zeta}\right) \\[0.5em]
K_{\beta}\left(\sqrt{\zeta}\,\mathrm{e}^{\pi\mathrm{i}}\right)  & K_{\beta}\left(\sqrt{\zeta}\right)
\end{pmatrix}.
\end{gather}

This formulation effectively encapsulates the structure of \( M^{\mathrm{mB}}(\zeta; \beta) \) across different regions, each defined by \(\mathrm{D}_j^\zeta\), which collectively characterize the modified Bessel parametrix in our setting.

\begin{RH}\label{RH-4} The matrix \( M^{\mathrm{mB}}\left(\zeta; \beta\right) \) satisfies a Riemann-Hilbert problem with the following properties:
\begin{itemize}

\item{} \( M^{\mathrm{mB}}\left(\zeta; \beta\right) \) is analytic in the complex plane for \( \zeta \in \mathbb{C} \setminus (\Sigma_1 \cup \Sigma_2 \cup \Sigma_3) \).

\item{} As \( \zeta \to \infty \), the matrix \( M^{\mathrm{mB}}\left(\zeta; \beta\right) \) exhibits the asymptotic behavior:
\begin{gather}
M^{\mathrm{mB}}\left(\zeta; \beta\right)=\zeta^{\sigma_3/4}C^{-1}\left(\mathbb{I}_2+\mathcal{O}\left(\frac{1}{\sqrt{\zeta}}\right)\right)\mathrm{e}^{\sqrt{\zeta}\,\sigma_3}.
\end{gather}

\item{} On the contours \( \zeta \in \Sigma_1^0 \cup \Sigma_2^0 \cup \Sigma_3^0 \), \( M^{\mathrm{mB}}\left(\zeta; \beta\right) \) has continuous boundary values, and these values satisfy the jump conditions:
\begin{gather}
M^{\mathrm{mB}}_+\left(\zeta; \beta\right)=M^{\mathrm{mB}}_-\left(\zeta; \beta\right)
\begin{cases}
\mathcal{L}\left[\mathrm{e}^{\beta \pi \mathrm{i}}\right], &\mathrm{for}\,\,\, \zeta\in\Sigma_1^0, \\[0.5em]
\mathrm{i}\sigma_2, &\mathrm{for}\,\,\, \zeta\in\Sigma_2^0, \\[0.5em]
\mathcal{L}\left[\mathrm{e}^{-\beta \pi \mathrm{i}}\right], &\mathrm{for}\,\,\, \zeta\in\Sigma_3^0,
\end{cases}
\end{gather}
where \( \Sigma_j^0 = \Sigma_j \setminus \{0\} \), for \( j = 1, 2, 3 \).
\item In the vicinity of the origin, \( \zeta = 0 \), the local behavior of \( M^{\mathrm{mB}}\left(\zeta; \beta\right) \) varies depending on the value of \(\beta\):
  For \(\beta > 0\), the matrix behaves as
  \begin{equation}
  M^{\mathrm{mB}}\left(\zeta; \beta\right) =
  \begin{cases}
  \mathcal{O}
  \begin{pmatrix}
  \left|\zeta\right|^{\beta/2} & \left|\zeta\right|^{-\beta/2}  \\[0.5em]
  \left|\zeta\right|^{\beta/2} & \left|\zeta\right|^{-\beta/2}
  \end{pmatrix},
  & \text{as} \,\,\, \zeta \in \mathrm{D}^\zeta_1 \to 0, \\[2em]
  \mathcal{O}
  \begin{pmatrix}
  \left|\zeta\right|^{-\beta/2} & \left|\zeta\right|^{-\beta/2}  \\[0.5em]
  \left|\zeta\right|^{-\beta/2} & \left|\zeta\right|^{-\beta/2}
  \end{pmatrix},
  & \text{as} \,\,\, \zeta \in \mathrm{D}^\zeta_2 \cup \mathrm{D}^\zeta_3 \to 0.
  \end{cases}
  \end{equation}

  For \(\beta = 0\), the behavior is given by:
  \begin{equation}
  M^{\mathrm{mB}}\left(\zeta; \beta\right) =
  \begin{cases}
  \mathcal{O}
  \begin{pmatrix}
  1 & \log\left|\zeta\right| \\[0.5em]
  1 & \log\left|\zeta\right|
  \end{pmatrix},
  & \text{as} \,\,\, \zeta \in \mathrm{D}^\zeta_1 \to 0, \\[2em]
  \mathcal{O}
  \begin{pmatrix}
  \log\left|\zeta\right| & \log\left|\zeta\right| \\[0.5em]
  \log\left|\zeta\right| & \log\left|\zeta\right|
  \end{pmatrix},
  & \text{as} \,\,\, \zeta \in \mathrm{D}^\zeta_2 \cup \mathrm{D}^\zeta_3 \to 0.
  \end{cases}
  \end{equation}

  For \( -1 < \beta < 0 \), the matrix displays the local behavior
  \begin{equation}
  M^{\mathrm{mB}}\left(\zeta; \beta\right) = \mathcal{O}
  \begin{pmatrix}
  \left|\zeta\right|^{\beta/2} & \left|\zeta\right|^{\beta/2} \\[0.5em]
  \left|\zeta\right|^{\beta/2} & \left|\zeta\right|^{\beta/2}
  \end{pmatrix}, \quad \text{as} \,\,\, \zeta \in \mathrm{D}^\zeta_1 \cup \mathrm{D}^\zeta_2 \cup \mathrm{D}^\zeta_3 \to 0.
  \end{equation}

\end{itemize}
\end{RH}

\subsection{The second type of modified Bessel parametrix $M^{\mathrm{mb}}(\zeta; \beta)$}

The matrix \( M^{\mathrm{mb}}(\zeta; \beta) \) is constructed based on the modified Bessel functions of the first and second kinds with indices \( \frac{\beta \pm 1}{2} \), where \( \beta \in (-1, 0) \cup (0, +\infty) \). It is essential to distinguish \( M^{\mathrm{mb}} \) from \( M^{\mathrm{mB}} \) as they represent distinct constructions; therefore, different notation is used for clarity.
A modified Bessel parametrix similar to \( M^{\mathrm{mb}} \) was initially proposed in \cite{61} to analyze strong asymptotic behavior of orthogonal polynomials under generalized Jacobi weights. In this study, we employ a slightly altered form, as illustrated in Figure \ref{BesselCH}(Left). The matrix \( M^{\mathrm{mb}}(\zeta; \beta) \) is constructed as follows:
For $\lambda\in\mathrm{D}_1^\zeta$, the matrix $M^{\mathrm{mb}}(\zeta; \beta)$ takes the form
\begin{gather}
M^{\mathrm{mb}}(\zeta; \beta)=C_1
\begin{pmatrix}
-G^+\left(\zeta\right) & G^+\left(\mathrm{e}^{-\pi\mathrm{i}}\zeta\right) \\[0.5em]
-G^-\left(\zeta\right) & G^-\left(\mathrm{e}^{-\pi\mathrm{i}}\zeta\right)
\end{pmatrix}
\mathrm{e}^{-\beta\pi\mathrm{i}\sigma_3/4};
\end{gather}
For $\lambda\in\mathrm{D}_2^\zeta$, the matrix $M^{\mathrm{mb}}(\zeta; \beta)$ is given by
\begin{gather}
M^{\mathrm{mb}}(\zeta; \beta)=C_1
\begin{pmatrix}
-H^+\left(\mathrm{e}^{-\pi\mathrm{i}}\zeta\right) & G^+\left(\mathrm{e}^{-\pi\mathrm{i}}\zeta\right) \\[0.5em]
-H^-\left(\mathrm{e}^{-\pi\mathrm{i}}\zeta\right) & -G^-\left(\mathrm{e}^{-\pi\mathrm{i}}\zeta\right)
\end{pmatrix}
\mathrm{e}^{-\beta\pi\mathrm{i}\sigma_3/4};
\end{gather}
For $\lambda\in\mathrm{D}_3^\zeta$, the matrix $M^{\mathrm{mb}}(\zeta; \beta)$ is defined by
\begin{gather}
M^{\mathrm{mb}}(\zeta; \beta)=C_1
\begin{pmatrix}
-H^+\left(\mathrm{e}^{-\pi\mathrm{i}}\zeta\right) & G^+\left(\mathrm{e}^{-\pi\mathrm{i}}\zeta\right) \\[0.5em]
-H^-\left(\mathrm{e}^{-\pi\mathrm{i}}\zeta\right) & -G^-\left(\mathrm{e}^{-\pi\mathrm{i}}\zeta\right)
\end{pmatrix}
\mathrm{e}^{\beta\pi\mathrm{i}\sigma_3/4};
\end{gather}
For $\lambda\in\mathrm{D}_4^\zeta$, the matrix $M^{\mathrm{mb}}(\zeta; \beta)$ takes the form
\begin{gather}
M^{\mathrm{mb}}(\zeta; \beta)=C_1
\begin{pmatrix}
-G^+\left(\mathrm{e}^{-2\pi\mathrm{i}}\zeta\right) & G^+\left(\mathrm{e}^{-\pi\mathrm{i}}\zeta\right) \\[0.5em]
-G^-\left(\mathrm{e}^{-2\pi\mathrm{i}}\zeta\right) & -G^-\left(\mathrm{e}^{-\pi\mathrm{i}}\zeta\right)
\end{pmatrix}
\mathrm{e}^{\beta\pi\mathrm{i}\sigma_3/4};
\end{gather}
For $\lambda\in\mathrm{D}_5^\zeta$, the matrix $M^{\mathrm{mb}}(\zeta; \beta)$ is in the form
\begin{gather}
M^{\mathrm{mb}}(\zeta; \beta)=C_1
\begin{pmatrix}
G^+\left(\mathrm{e}^{\pi\mathrm{i}}\zeta\right) & G^+\left(\zeta\right) \\[0.5em]
-G^-\left(\mathrm{e}^{\pi\mathrm{i}}\zeta\right) & G^-\left(\zeta\right)
\end{pmatrix}
\mathrm{e}^{-\beta\pi\mathrm{i}\sigma_3/4};
\end{gather}
For $\lambda\in\mathrm{D}_6^\zeta$, the matrix $M^{\mathrm{mb}}(\zeta; \beta)$ is defined by
\begin{gather}
M^{\mathrm{mb}}(\zeta; \beta)=C_1
\begin{pmatrix}
H^+\left(\zeta\right) & G^+\left(\zeta\right) \\[0.5em]
-H^-\left(\zeta\right) & G^-\left(\zeta\right)
\end{pmatrix}
\mathrm{e}^{-\beta\pi\mathrm{i}\sigma_3/4};
\end{gather}
For $\lambda\in\mathrm{D}_7^\zeta$, the matrix $M^{\mathrm{mb}}(\zeta; \beta)$ has the form
\begin{gather}
M^{\mathrm{mb}}(\zeta; \beta)=C_1
\begin{pmatrix}
H^+\left(\zeta\right) & G^+\left(\zeta\right) \\[0.5em]
-H^-\left(\zeta\right) & G^-\left(\zeta\right)
\end{pmatrix}
\mathrm{e}^{\beta\pi\mathrm{i}\sigma_3/4};
\end{gather}
For $\lambda\in\mathrm{D}_8^\zeta$, the matrix $M^{\mathrm{mb}}(\zeta; \beta)$ is given by
\begin{gather}
M^{\mathrm{mb}}(\zeta; \beta)=C_1
\begin{pmatrix}
G^+\left(\mathrm{e}^{-\pi\mathrm{i}}\zeta\right) & G^+\left(\zeta\right) \\[0.5em]
-G^-\left(\mathrm{e}^{-\pi\mathrm{i}}\zeta\right) & G^-\left(\zeta\right)
\end{pmatrix}
\mathrm{e}^{\beta\pi\mathrm{i}\sigma_3/4}.
\end{gather}
In these expressions, the functions \( G^\pm(\zeta) = \sqrt{\zeta/\pi} \, K_{(\beta \pm 1)/2}(\zeta) \) and \( H^\pm(\zeta) = \sqrt{\pi \zeta} \, I_{(\beta \pm 1)/2}(\zeta) \) are constructed using the modified Bessel functions \( K_{\nu}(\zeta) \) and \( I_{\nu}(\zeta) \) for appropriate indices, with \(\mathrm{arg}(\zeta)\) restricted to \( (-\pi/2, 3\pi/2) \).

\begin{RH}\label{RH-5}
The matrix \( M^{\mathrm{mb}}(\zeta; \beta) \) solves a \( 2 \times 2 \) Riemann-Hilbert problem with the following characteristics:

\begin{itemize}

\item{} $M^{\mathrm{mb}}\left(\zeta; \beta\right)$ is analytic in $\zeta$ for $\zeta\in\mathbb{C}\setminus(\cup_{j=1}^8\Sigma_j)$.

\item{} At infinity, \( M^{\mathrm{mb}}(\zeta; \beta) \) satisfies the normalization conditions:
\begin{gather}
M^{\mathrm{mb}}\left(\zeta; \beta\right)=
\begin{cases}
\left(\mathbb{I}_2+\mathcal{O}\left(\zeta^{-1}\right)\right)\mathrm{i}\sigma_2\,\mathrm{e}^{-\beta\pi\mathrm{i}\sigma_3/4}\mathrm{e}^{-\zeta\sigma_3}, &\mathrm{as}\,\,\, \zeta\in\mathrm{D}^\zeta_1\cup\mathrm{D}^\zeta_2\to\infty, \\[0.5em]
\left(\mathbb{I}_2+\mathcal{O}\left(\zeta^{-1}\right)\right)\mathrm{i}\sigma_2\,\mathrm{e}^{\beta\pi\mathrm{i}\sigma_3/4}\mathrm{e}^{-\zeta\sigma_3}, & \mathrm{as}\,\,\, \zeta\in\mathrm{D}^\zeta_3\cup\mathrm{D}^\zeta_4\to\infty, \\[0.5em]
\left(\mathbb{I}_2+\mathcal{O}\left(\zeta^{-1}\right)\right)\mathrm{e}^{-\beta\pi\mathrm{i}\sigma_3/4}\mathrm{e}^{\zeta\sigma_3}, &\mathrm{as}\,\,\, \zeta\in\mathrm{D}^\zeta_5\cup\mathrm{D}^\zeta_6\to\infty, \\[0.5em]
\left(\mathbb{I}_2+\mathcal{O}\left(\zeta^{-1}\right)\right)\mathrm{e}^{\beta\pi\mathrm{i}\sigma_3/4}\mathrm{e}^{\zeta\sigma_3}, & \mathrm{as}\,\,\, \zeta\in\mathrm{D}^\zeta_7\cup\mathrm{D}^\zeta_8\to\infty.
\end{cases}
\end{gather}

\item{} For \( \zeta \in \bigcup_{j=1}^8 \Sigma_j^0 \), \( M^{\mathrm{mb}}(\zeta; \beta) \) possesses continuous boundary values that are governed by the following jump conditions:
\begin{gather}
M^{\mathrm{mb}}_+\left(\zeta; \beta\right)=M^{\mathrm{mb}}_-\left(\zeta; \beta\right)
\begin{cases}
\mathcal{L}\left[\mathrm{e}^{-\beta \pi \mathrm{i}}\right], &\mathrm{for}\,\,\, \zeta\in\Sigma_1^0\cup\Sigma_5^0, \\[0.5em]
\mathrm{e}^{\beta\pi\mathrm{i}\sigma_3/2}, &\mathrm{for}\,\,\, \zeta\in\Sigma_2^0\cup\Sigma_6^0, \\[0.5em]
\mathcal{L}\left[\mathrm{e}^{\beta \pi \mathrm{i}}\right], &\mathrm{for}\,\,\, \zeta\in\Sigma_3^0\cup\Sigma_7^0, \\[0.5em]
\mathrm{i}\sigma_2, &\mathrm{for}\,\,\, \zeta\in\Sigma_4^0\cup\Sigma_8^0,
\end{cases}
\end{gather}
where $\Sigma_j^0=\Sigma_j\setminus \{0\}, \, j=1, 2, \cdots, 8$.

\item Near the origin, for \( \beta > -1 \) and \( \beta \neq 0 \), \( M^{\mathrm{mb}}(\zeta; \beta) \) displays the following local behaviors:
\begin{gather}
M^{\mathrm{mb}}\left(\zeta; \beta\right)=
\begin{cases}
\mathcal{O}
\begin{pmatrix}
\left|\zeta\right|^{\beta/2} & \left|\zeta\right|^{-\left|\beta\right|/2}  \\[0.5em]
\left|\zeta\right|^{\beta/2} & \left|\zeta\right|^{-\left|\beta\right|/2}
\end{pmatrix},
& \mathrm{as}\,\,\, \zeta\in \mathrm{D}^\zeta_2\cup\mathrm{D}^\zeta_3\cup\mathrm{D}^\zeta_6\cup\mathrm{D}^\zeta_7\to 0, \\[2em]
\mathcal{O}
\begin{pmatrix}
\left|\zeta\right|^{-\left|\beta\right|/2} & \left|\zeta\right|^{-\left|\beta\right|/2}  \\[0.5em]
\left|\zeta\right|^{-\left|\beta\right|/2} & \left|\zeta\right|^{-\left|\beta\right|/2}
\end{pmatrix},
& \mathrm{as}\,\,\, \zeta\in \mathrm{D}^\zeta_1\cup\mathrm{D}^\zeta_4\cup\mathrm{D}^\zeta_5\cup\mathrm{D}^\zeta_8\to 0.
\end{cases}
\end{gather}
\end{itemize}
\end{RH}

\begin{figure}[!t]
\centering
\includegraphics[scale=0.28]{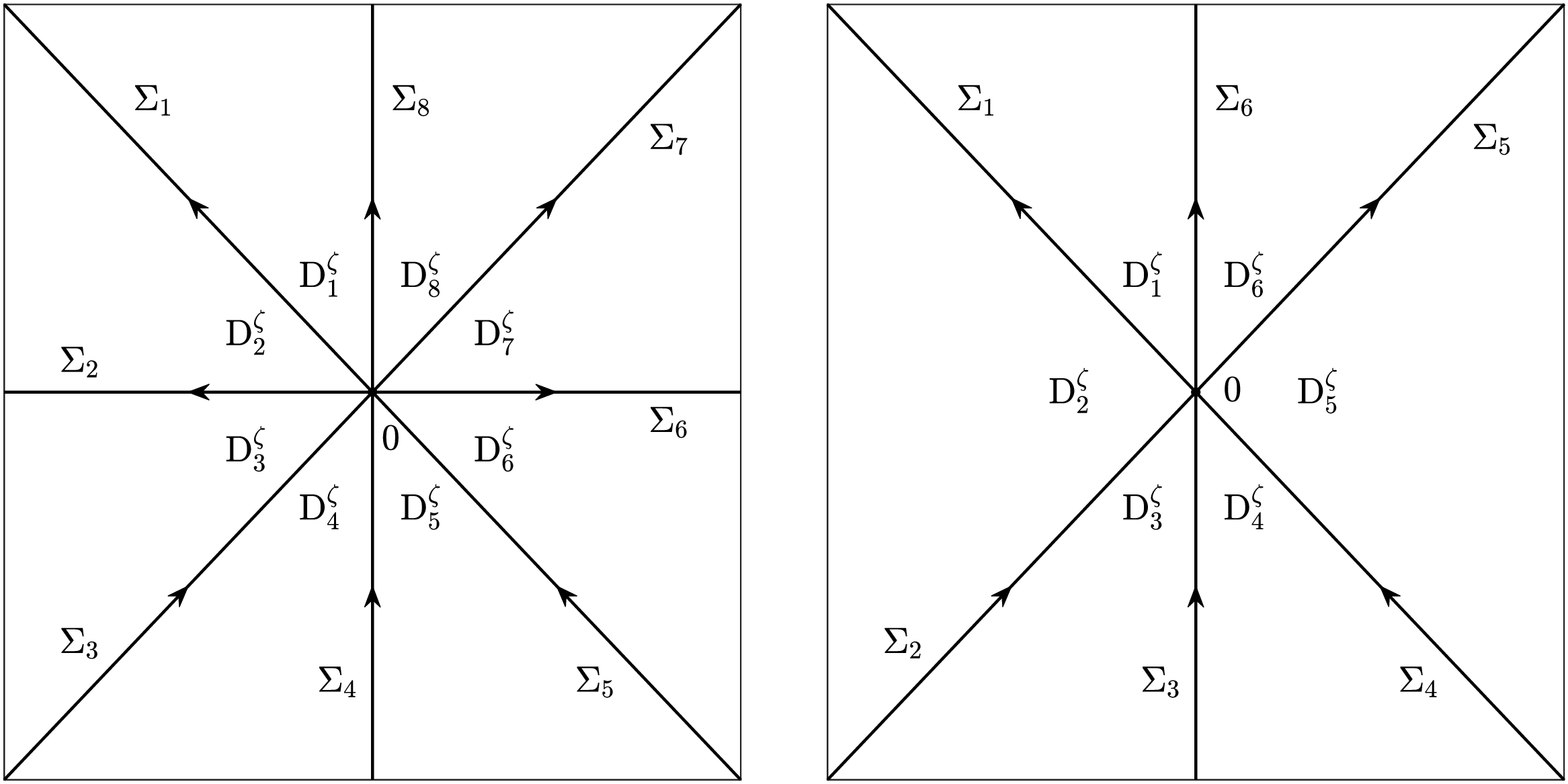}
\caption{Left: Jump contour for modified Bessel parametrix $M^{\mathrm{mb}}$; Right: Jump contour for Confluent Hypergeometric parametrix $M^{\mathrm{CH}}$.}
\label{BesselCH}
\end{figure}

\subsection{Confluent hypergeometric parametrix $M^{\mathrm{CH}}(\zeta; \kappa)$}

The confluent hypergeometric parametrix is built upon the confluent hypergeometric functions \( M(\zeta; \kappa) \) and \( U(\zeta; \kappa) \), which solve Kummer’s equation:
\begin{gather}
\zeta\frac{\mathrm{d}^2 y}{\mathrm{d}\zeta^2}+\left(1-\zeta\right)\frac{\mathrm{d}y}{\mathrm{d}\zeta}-\kappa y =0,
\end{gather}
The function \( M(\zeta; \kappa) \) is an entire solution of this equation and can be expressed as:
\[
M(\zeta; \kappa) = \sum_{n=1}^\infty \frac{\Gamma(\kappa + n)}{\Gamma(\kappa) \, n!} \zeta^n.
\]
In contrast, \( U(\zeta; \kappa) \) has a branch point at \(\zeta = 0\) and is defined by its asymptotic behavior:
\[
U(\zeta; \kappa) \sim \zeta^{-\kappa},
\]
as \( \zeta \to \infty \) for \( \mathrm{arg}(\zeta) \in (-3\pi / 2, 3\pi / 2) \). This parametrix was developed in \cite{62} to analyze the asymptotics of Hankel determinants and orthogonal polynomials under a Gaussian weight with a discontinuity.
In this work, we utilize a modified form of the confluent hypergeometric parametrix, denoted as \( M^{\mathrm{CH}}(\zeta; \kappa) \) with \( \kappa \in \mathrm{i} \mathbb{R} \), as illustrated in Figure \ref{BesselCH}(Right). The parametrix is formulated as follows:
For $\lambda\in\mathrm{D}_1^\zeta$, the matrix $M^{\mathrm{CH}}(\zeta; \kappa)$ is defined by
\begin{gather}
M^{\mathrm{CH}}(\zeta; \kappa)=
\begin{pmatrix}
-\dfrac{\Gamma\left(1+\kappa\right)}{\Gamma\left(-\kappa\right)}\,\mathrm{e}^{\kappa\pi\mathrm{i}}U\left(1+\kappa, \zeta\right) & U\left(-\kappa, \mathrm{e}^{-\pi\mathrm{i}}\zeta\right) \\[1em]
-\mathrm{e}^{\kappa\pi\mathrm{i}}U\left(\kappa, \zeta\right)& \dfrac{\Gamma\left(1-\kappa\right)}{\Gamma\left(\kappa\right)}U\left(1-\kappa,\mathrm{e}^{-\pi\mathrm{i}}\zeta\right)
\end{pmatrix}
\mathrm{e}^{-\zeta\sigma_3/2};
\end{gather}
For  $\lambda\in\mathrm{D}_2^\zeta$, it takes the form
\begin{gather}
M^{\mathrm{CH}}(\zeta; \kappa)=
\begin{pmatrix}
\Gamma\left(1+\kappa\right)M\left(-\kappa, \mathrm{e}^{-\pi\mathrm{i}}\zeta\right) & U\left(-\kappa, \mathrm{e}^{-\pi\mathrm{i}}\zeta\right) \\[1em]
-\Gamma\left(1-\kappa\right)M\left(1-\kappa, \mathrm{e}^{-\pi\mathrm{i}}\zeta\right)& \dfrac{\Gamma\left(1-\kappa\right)}{\Gamma\left(\kappa\right)}U\left(1-\kappa,\mathrm{e}^{-\pi\mathrm{i}}\zeta\right)
\end{pmatrix}
\mathrm{e}^{\zeta/2};
\end{gather}
For $\lambda\in\mathrm{D}_3^\zeta$, the expression is
\begin{gather}
M^{\mathrm{CH}}(\zeta; \kappa)=
\begin{pmatrix}
-\dfrac{\Gamma\left(1+\kappa\right)}{\Gamma\left(-\kappa\right)}\,\mathrm{e}^{-\kappa\pi\mathrm{i}}U\left(1+\kappa, \mathrm{e}^{-2\pi\mathrm{i}}\zeta\right) & U\left(-\kappa, \mathrm{e}^{-\pi\mathrm{i}}\zeta\right) \\[1em]
-\mathrm{e}^{-\kappa\pi\mathrm{i}}U\left(\kappa, \mathrm{e}^{-2\pi\mathrm{i}}\zeta\right)& \dfrac{\Gamma\left(1-\kappa\right)}{\Gamma\left(\kappa\right)}U\left(1-\kappa,\mathrm{e}^{-\pi\mathrm{i}}\zeta\right)
\end{pmatrix}
\mathrm{e}^{-\zeta\sigma_3/2};
\end{gather}
For  $\lambda\in\mathrm{D}_4^\zeta$, it is given by
\begin{gather}
M^{\mathrm{CH}}(\zeta; \kappa)=
\begin{pmatrix}
\mathrm{e}^{-\kappa\pi\mathrm{i}}U\left(-\kappa, \mathrm{e}^{\pi\mathrm{i}}\zeta\right) & \dfrac{\Gamma\left(1+\kappa\right)}{\Gamma\left(-\kappa\right)}U\left(1+\kappa, \zeta\right) \\[1em]
\dfrac{\Gamma\left(1-\kappa\right)}{\Gamma\left(\kappa\right)}\mathrm{e}^{-\kappa\pi\mathrm{i}}U\left(1-\kappa, \mathrm{e}^{\pi\mathrm{i}}\zeta\right) & U\left(\kappa, \zeta\right)
\end{pmatrix}
\mathrm{e}^{\zeta\sigma_3/2};
\end{gather}
For  $\lambda\in\mathrm{D}_5^\zeta$, the matrix is written as
\begin{gather}
M^{\mathrm{CH}}(\zeta; \kappa)=
\begin{pmatrix}
\Gamma\left(1+\kappa\right)M\left(1+\kappa, \zeta\right) & \dfrac{\Gamma\left(1+\kappa\right)}{\Gamma\left(-\kappa\right)}U\left(1+\kappa, \zeta\right) \\[1em]
-\Gamma\left(1-\kappa\right)M\left(\kappa, \zeta\right) & U\left(\kappa, \zeta\right)
\end{pmatrix}
\mathrm{e}^{-\zeta/2};
\end{gather}
For  $\lambda\in\mathrm{D}_6^\zeta$, the matrix form is
\begin{gather}
M^{\mathrm{CH}}(\zeta; \kappa)=
\begin{pmatrix}
\mathrm{e}^{\kappa\pi\mathrm{i}}U\left(-\kappa, \mathrm{e}^{-\pi\mathrm{i}}\zeta\right) & \dfrac{\Gamma\left(1+\kappa\right)}{\Gamma\left(-\kappa\right)}U\left(1+\kappa, \zeta\right) \\[1em]
\dfrac{\Gamma\left(1-\kappa\right)}{\Gamma\left(\kappa\right)}\mathrm{e}^{\kappa\pi\mathrm{i}}U\left(1-\kappa, \mathrm{e}^{-\pi\mathrm{i}}\zeta\right) & U\left(\kappa, \zeta\right)
\end{pmatrix}
\mathrm{e}^{\zeta\sigma_3/2}.
\end{gather}

\begin{RH}\label{RH-5}
The matrix \( M^{\mathrm{CH}}(\zeta; \kappa) \) satisfies a Riemann-Hilbert problem characterized by the following properties:

\begin{itemize}

\item{} $M^{\mathrm{CH}}\left(\zeta; \kappa\right)$ is analytic in $\zeta$ for $\zeta\in\mathbb{C}\setminus\left(\cup_{j=1}^6\Sigma_j\right)$.

    \item{} As \( \zeta \to \infty \), \( M^{\mathrm{CH}}(\zeta; \kappa) \) normalizes according to:
\begin{gather}
M^{\mathrm{CH}}\left(\zeta; \kappa\right)=
\begin{cases}
\left(\mathbb{I}_2+\mathcal{O}\left(\zeta^{-1}\right)\right)\zeta^{\kappa\sigma_3}\mathrm{e}^{\zeta\sigma_3/2}, &\mathrm{as}\,\,\, \zeta\in\mathrm{D}^\zeta_4\cup\mathrm{D}^\zeta_5\cup\mathrm{D}^\zeta_6\to\infty, \\[0.5em]
\left(\mathbb{I}_2+\mathcal{O}\left(\zeta^{-1}\right)\right)\mathrm{i}\sigma_2\mathrm{e}^{\kappa\pi\mathrm{i}\sigma_3}\zeta^{-\kappa\sigma_3}\mathrm{e}^{-\zeta\sigma_3}, & \mathrm{as}\,\,\, \zeta\in\mathrm{D}^\zeta_1\cup\mathrm{D}^\zeta_2\cup\mathrm{D}^\zeta_3\to\infty.
\end{cases}
\end{gather}

\item{}On \( \zeta \in \bigcup_{j=1}^6 \Sigma_j^0 \), \( M^{\mathrm{CH}}(\zeta; \kappa) \) has continuous boundary values denoted by \( M^{\mathrm{CH}}_+(\zeta; \kappa) \) and \( M^{\mathrm{CH}}_-(\zeta; \kappa) \), respectively. These values satisfy the following jump conditions:
\begin{gather}
M^{\mathrm{CH}}_+\left(\zeta; \kappa\right)=M^{\mathrm{CH}}_-\left(\zeta; \kappa\right)
\begin{cases}
\mathcal{L}\left[\mathrm{e}^{\kappa \pi \mathrm{i}}\right], &\mathrm{for}\,\,\, \zeta\in\Sigma_1^0\cup\Sigma_5^0, \\[0.5em]
\mathcal{L}\left[\mathrm{e}^{-\kappa \pi \mathrm{i}}\right], &\mathrm{for}\,\,\, \zeta\in\Sigma_2^0\cup\Sigma_4^0, \\[0.5em]
\mathrm{i}\sigma_2\,\mathrm{e}^{-\kappa\pi\mathrm{i}\sigma_3}, &\mathrm{for}\,\,\, \zeta\in\Sigma_3^0, \\[0.5em]
\mathrm{i}\sigma_2\,\mathrm{e}^{\kappa\pi\mathrm{i}\sigma_3}, &\mathrm{for}\,\,\, \zeta\in\Sigma_6^0,
\end{cases}
\end{gather}
 where $\Sigma_j^0=\Sigma_j\setminus \{0\}, \, j=1, 2, \cdots, 6$.
 \item Near the origin, \( M^{\mathrm{CH}}(\zeta; \kappa) \) exhibits the following local behaviors:
\begin{gather}
M^{\mathrm{CH}}\left(\zeta; \kappa\right)=
\begin{cases}
\mathcal{O}
\begin{pmatrix}
1 & \log\left|\zeta\right|  \\[0.5em]
1 & \log\left|\zeta\right|
\end{pmatrix},
& \mathrm{as}\,\,\, \zeta\in \mathrm{D}^\zeta_2\cup\mathrm{D}^\zeta_5\to 0, \\[2em]
\mathcal{O}
\begin{pmatrix}
\log\left|\zeta\right| & \log\left|\zeta\right|  \\[0.5em]
\log\left|\zeta\right| & \log\left|\zeta\right|
\end{pmatrix},
& \mathrm{as}\,\,\, \zeta\in \mathrm{D}^\zeta_1\cup\mathrm{D}^\zeta_3\cup\mathrm{D}^\zeta_4\cup\mathrm{D}^\zeta_6\to 0.
\end{cases}
\end{gather}

\end{itemize}
\end{RH}

\section{A generalized $g$-function construction}

The $g$-function method, originally developed by Deift, Venakides, and Zhou, emerged as a crucial tool in their study of the zero-dispersion limit of the KdV equation \cite{18}. This method has since proven fundamental in normalizing Riemann-Hilbert problems associated with sequences of orthogonal polynomials, particularly in controlling their behavior at infinity, using the equilibrium measure \cite{25, 44, 46}. To date, numerous asymptotic problems have been effectively tackled using Deift-Zhou’s steepest descent techniques, which often depend on the construction of a suitable $g$-function.

A notable example is the $g$-function constructed via a conformal mapping that transforms \(\mathbb{C} \setminus [-1, 1]\) onto the exterior of the unit circle, facilitating the analysis of strong asymptotics for orthogonal polynomials with modified Jacobi weights \cite{56}. Another variant involves deriving the $g$-function from the spectral curve, which is particularly relevant for studying the large-time asymptotics of infinite-order rogue waves \cite{31}.

In the case of the large-$x$ asymptotics for the initial value \( u(x, 0) \) of the kink-soliton gas solution of the sine-Gordon (sG) equation, constructing a \( g_0 \)-function for \( x < 0 \) becomes necessary. The \( g_0 \)-function form applied here is analogous to that in \cite{15}, and is given by:
\begin{gather}\label{g0}
g_0=\lambda-\int_{\eta_2}^\lambda \frac{\zeta^2-\rho}{R_0(\zeta)}\mathrm{d}\zeta,
\end{gather}
where \(\rho = \eta_2^2 \left(1 - \frac{eE(\eta_1 / \eta_2)}{eK(\eta_1 / \eta_2)}\right)\), and \( R_0(\zeta) \) represents a branch of \( \sqrt{(\zeta^2 - \eta_2^2)(\zeta^2 - \eta_1^2)} \) with branch cuts along \( (\eta_1, \eta_2) \cup (-\eta_2, -\eta_1) \). At infinity, \( R_0(\zeta) \) has the asymptotic form \( R_0(\zeta) = \zeta^2 + \mathcal{O}(1) \).

For a more detailed exposition on the derivation of such $g$-functions, one can consult references \cite{24, 27, 63, 64}. The function \( g_0 \) defined by \eqref{g0} is analytic in \(\lambda\) for \(\lambda \in \mathbb{C} \setminus [-\eta_2, \eta_2]\) and satisfies the normalization \( g_0(\lambda) = \mathcal{O}(\lambda^{-1}) \) as \(\lambda \to \infty\).

Additionally, \( g_0 \) adheres to the following jump conditions:
\begin{gather}
\left\{\begin{aligned}
g_{0+}+g_{0-}&=2\lambda,  && \mathrm{for}\,\,\, \lambda\in\left(\eta_1, \eta_2\right)\cup\left(-\eta_2, -\eta_1\right)\\
g_{0+}-g_{0-}&=\Omega_1,  &&\mathrm{for}\,\,\,\lambda\in\left(-\eta_1, \eta_1\right).
\end{aligned}\right.
\end{gather}

To analyze the long-time asymptotics of the kink-soliton gas \( u(x, t) \), we begin by examining the sign of the real part of \(\theta\), given by:
\begin{gather}
\Re(\theta)=\frac{\Re(\lambda)}{4}\left(\xi+\frac{1}{\left|\lambda\right|^2}\right).
\end{gather}
The sign charts corresponding to \(\Re(\theta)\) are illustrated in Figure \ref{Sign}. When \(\xi > -1/\eta_2^2\), a simple small-norm argument suffices. However, for \(\xi < -1/\eta_2^2\), constructing a $g$-function becomes necessary to achieve exponential decay.

\begin{figure}[!t]
\centering
\includegraphics[scale=0.3]{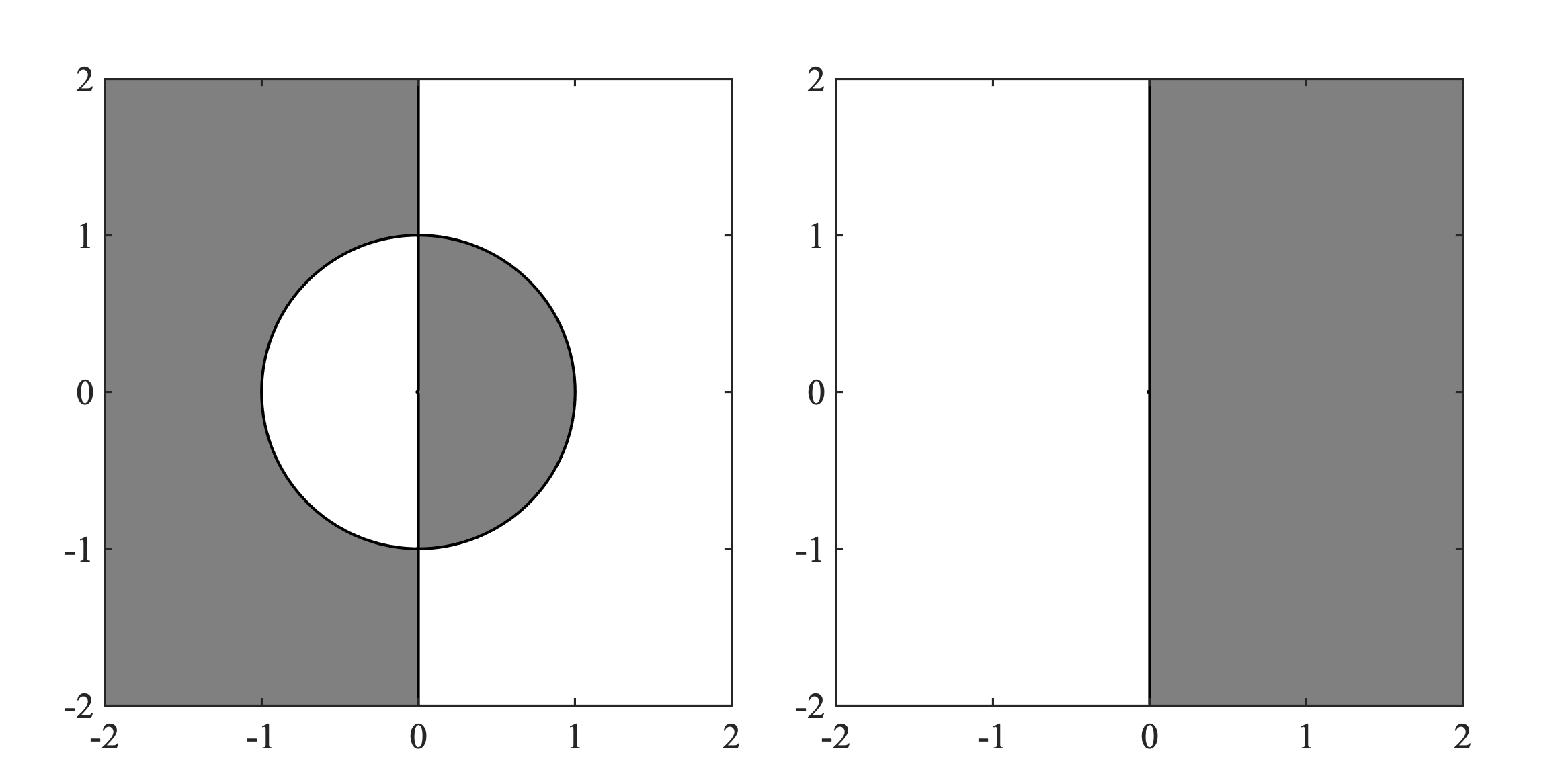}
\vspace{-0.05in}
\caption{Sign charts for $\Re\left(\theta\right)$ with $\xi=-1$(Left) and $\xi=0$(Right): $\Re\left(\theta\right)>0$ in greay regions and  $\Re\left(\theta\right)<0$ in white regions}
\label{Sign}
\end{figure}

Here, we construct a \(g\)-function \( g = g(\lambda; \xi) \) tailored to the sine-Gordon equation, defined by:
\begin{gather}\label{g-function}
g=\theta-p,
\end{gather}
where \( p = p(\lambda; \xi) \) is given as follows. For \( \lambda \in \{ \lambda \mid \Re(\lambda) \geq 0 \} \setminus [0, \eta_2) \), \( p(\lambda; \xi) \) is defined by:
\begin{gather}\label{p-right}
p(\lambda; \xi)=
\int_{\eta_2}^\lambda \frac{Q(y;\xi)}{4y^2R(y;\xi)}\mathrm{d}y.
\end{gather}
Similarly, for \( \lambda \in \{ \lambda \mid \Re(\lambda) \leq 0 \} \setminus (-\eta_2, 0] \), \( p(\lambda; \xi) \) is given by:
\begin{gather}\label{p-left}
p(\lambda; \xi)=
\int_{-\eta_2}^\lambda \frac{Q(y;\xi)}{4y^2R(y;\xi)}\mathrm{d}y,
\end{gather}
where the integration path extends from \(\pm \eta_2\) to \(\lambda\).
The functions \( R(y; \xi) \) and \( Q(y; \xi) \) are defined depending on the region \(\xi \in (\xi_{\mathrm{crit}}, -\eta_2^{-2}) \cup (-\infty, \xi_{\mathrm{crit}}) \). The function \( R(y; \xi) \) represents a branch of \( \sqrt{(y^2 - \eta_2^2)(y^2 - \alpha^2)} \), with branch cuts along \( (\alpha, \eta_2) \cup (-\eta_2, -\alpha) \), and behaves asymptotically as \( R(y; \xi) = y^2 + \mathcal{O}(1) \) at infinity. Here, \(\alpha\) is determined by the Whitham evolution equation \eqref{Whitham} when \( \xi \in (\xi_{\mathrm{crit}}, -\eta_2^{-2}) \), and by \(\alpha = \eta_1\) when \( \xi \in (-\infty, \xi_{\mathrm{crit}}) \).

For \( \xi \in (\xi_{\mathrm{crit}}, -\eta_2^{-2}) \), \( Q(y; \xi) \) is defined by:
\begin{gather}\label{Q1}
Q(y; \xi)=\left(y^2-\alpha^2\right)\left(\xi y^2-\frac{\eta_2}{\alpha}\right).
\end{gather}
For \( \xi \in (-\infty, \xi_{\mathrm{crit}}) \), \( Q(y; \xi) \) takes the form:
\begin{gather}\label{Q2}
Q(y; \xi)=\xi y^4+\eta_2^2\left(\xi+\frac{1}{\eta_1\eta_2}\right)\left(\frac{eE(\eta_1/\eta_2)}{eK(\eta_1/\eta_2)}-1\right)y^2+\eta_1\eta_2.
\end{gather}

This construction of the $g$-function provides the necessary framework for studying the exponential decay properties in the kink-soliton gas model for the sine-Gordon equation.

\begin{RH}\label{RH-7}
 The $g$-function defined by \eqref{g-function} satisfies the following Riemann-Hilbert problem:

 \begin{itemize}

 \item{} The $g$-function is analytic in \(\lambda\) for \(\lambda \in \mathbb{C} \setminus [-\eta_2, \eta_2]\).

  \item{}As \(\lambda \to \infty\), the $g$-function normalizes to \( g = \mathcal{O}(\lambda^{-1}) \).

  \item{} For \(\lambda \in (-\eta_2, -\alpha) \cup (-\alpha, \alpha) \cup (\alpha, \eta_2)\), the $g$-function has continuous boundary values, satisfying the following jump conditions:
\begin{gather}
\left\{\begin{aligned}
g_{+}+g_{-}&=2\theta,  && \mathrm{for}\,\,\, \lambda\in\left(\alpha, \eta_2\right)\cup\left(-\eta_2, -\alpha\right),\\[0.5em]
g_{+}-g_{-}&=\frac{\Omega^\alpha}{4}\left(\xi+\frac{1}{\alpha \eta_2}\right),  &&\mathrm{for}\,\,\,\lambda\in\left(-\alpha, \alpha\right),
\end{aligned}\right.
\end{gather}
where
$\Omega^\alpha=-\pi \mathrm{i}\eta_2\left/\right.eK\left(m_\alpha\right)$, if $\xi\in\left(\xi_{\mathrm{crit}}, -\eta_2^{-2}\right)$, and $\Omega^\alpha=\Omega_1$, if $\xi\in\left(-\infty, \xi_{\mathrm{crit}}\right)$.
\item Near the points \(\pm \eta_2\) and \(\pm \alpha\), the $g$-function exhibits the following local behavior:
\begin{gather}
g=\mathcal{O}\left(1\right), \quad \mathrm{as} \,\,\,\lambda\to\pm\eta_2, \pm\alpha.
\end{gather}
\end{itemize}
\end{RH}

\begin{proposition}\label{p-sign}
For $\xi\in\left(\xi_{\mathrm{crit}}, -\eta_2^{-2}\right)$, one has
\begin{gather}
\left\{\begin{array}{ll}
p_++p_->0, \quad &\mathrm{if}\,\,\, \lambda\in\left[\eta_1, \alpha \right),  \\[1em]
\Re\left(p_+\right)=0, \quad &\mathrm{if}\,\,\, \lambda\in\left[\alpha, \eta_2\right],  \\[1em]
\lim_{\Im(\lambda)\to 0^+}\dfrac{\partial \Re(p)}{\partial\Im(\lambda)}<0, \quad &\mathrm{if}\,\,\, \Re(\lambda)\in\left(\alpha, \eta_2\right).
\end{array}\right.
\end{gather}
\end{proposition}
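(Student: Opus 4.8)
The plan is to translate the jump relations of Riemann-Hilbert Problem~\ref{RH-7} into pointwise statements about $p$, and then read off the three assertions from the explicit sign of the integrand $f(y):=Q(y;\xi)/(4y^2 R(y;\xi))$. Since $g=\theta-p$ with $\theta$ analytic across $(-\eta_2,\eta_2)\setminus\{0\}$, the relation $g_++g_-=2\theta$ on $(\alpha,\eta_2)$ forces $p_++p_-=0$ there, while $g_+-g_-=\tfrac{\Omega^\alpha}{4}(\xi+\tfrac{1}{\alpha\eta_2})$ on $(-\alpha,\alpha)$ gives a constant jump for $p$. Because $Q$ and $R^2$ have real coefficients and $R\sim y^2$ at infinity, $R$ obeys the Schwarz symmetry $\overline{R(\bar\lambda)}=R(\lambda)$, so $p$ satisfies $\overline{p(\bar\lambda)}=p(\lambda)$ and hence $p_-(\lambda)=\overline{p_+(\lambda)}$ for real $\lambda$. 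Combining this with $p_++p_-=0$ on $(\alpha,\eta_2)$ yields $2\Re(p_+)=p_++\overline{p_+}=0$, which is the second assertion; the endpoints $\alpha,\eta_2$ follow from the $\mathcal{O}(1)$ boundedness of $p$ recorded in RH-7.

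For the first assertion I would study $\Phi(\lambda):=(p_++p_-)(\lambda)$ on $(\eta_1,\alpha)$. Since the branch cut of $R$ lies on $(\alpha,\eta_2)$, the integrand $f$ is analytic across $(\eta_1,\alpha)$ and real there, so $\Phi'(\lambda)=p_+'+p_-'=2f(\lambda)\in\mathbb{R}$. The sign of $f$ reduces to the signs of $Q$ and $R$: using $\xi<-\eta_2^{-2}<0$ one checks that $Q=(y^2-\alpha^2)(\xi y^2-\eta_2/\alpha)>0$ on $(\eta_1,\alpha)$ (both factors negative), while the branch of $R$ fixed by $R\sim y^2$ at infinity is negative on $(0,\alpha)$, which I would verify by continuing along the imaginary axis (where $R\sim-|\lambda|^2$) down to $R(0)=-\eta_2\alpha$. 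Hence $f<0$ on $(\eta_1,\alpha)$. Since $\Phi(\alpha)=2\Re(p_+(\alpha))=0$ by the second assertion, integrating $\Phi'=2f$ backwards from $\alpha$ gives $\Phi(\lambda)=-\int_\lambda^\alpha 2f(y)\,\mathrm{d}y>0$, proving $p_++p_->0$ on $[\eta_1,\alpha)$.

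For the third assertion I would use the Cauchy-Riemann equations: writing $\lambda=u+\mathrm{i}v$ and $p=\Re(p)+\mathrm{i}\,\Im(p)$, analyticity gives $\partial_v\Re(p)=-\Im(p')=-\Im(f)$, so the one-sided limit equals $-\Im(f_+)$ on the cut. On $(\alpha,\eta_2)$ the factor $R$ is purely imaginary; tracking the branch from its negativity on $(0,\alpha)$ around the endpoint $\alpha$ shows $R_+(y)=\mathrm{i}\sqrt{(\eta_2^2-y^2)(y^2-\alpha^2)}$ has positive imaginary part, while $Q<0$ on $(\alpha,\eta_2)$ (again since $\xi<0$). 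Therefore $f_+=Q/(4y^2 R_+)$ is a negative real number divided by $\mathrm{i}$ times a positive real, i.e.\ purely positive imaginary, giving $\Im(f_+)>0$ and $\lim_{v\to0^+}\partial_v\Re(p)=-\Im(f_+)<0$.

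The main obstacle is the consistent bookkeeping of the branch of $R$: every sign in the argument hinges on correctly fixing $R<0$ on $(0,\alpha)$ and $\arg R_+=\pi/2$ on the cut, and a single sign error there would flip the conclusions. I would pin this down once and for all by the explicit continuation $R\sim y^2$ (on the real axis for $y>\eta_2$) through the imaginary axis to $R(0)=-\eta_2\alpha$, together with the local expansion $R(\lambda)\approx\mathrm{i}\sqrt{2\alpha(\eta_2^2-\alpha^2)}\,(\lambda-\alpha)^{1/2}$ near $\alpha$ with the cut taken along $(\alpha,\eta_2)$, from which both the sign on $(0,\alpha)$ and the positivity of $\Im R_+$ follow simultaneously.
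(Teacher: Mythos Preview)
Your proposal is correct and follows essentially the same route as the paper: both arguments boil down to the sign of $Q(y;\xi)$ on $(\eta_1,\alpha)$ and $(\alpha,\eta_2)$ together with the phase of $R$ (real negative on $(0,\alpha)$, positive imaginary for $R_+$ on the cut), and the paper's direct integral identities $p_++p_-=\int_\lambda^\alpha Q/(2y^2|R|)\,\mathrm{d}y$ and $\lim\partial_v\Re(p)=Q/(4x^2|R_+|)$ are exactly what your differentiation $\Phi'=2f$ and Cauchy--Riemann computation $\partial_v\Re(p)=-\Im(f_+)$ produce. Your more explicit bookkeeping of the branch of $R$ is a useful addition, since the paper simply writes $|R|$ without justifying the sign convention.
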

\begin{proof}
As $\lambda\in\left[\eta_1, \alpha \right)$, one observes that
\begin{gather*}
p_++p_-=
\int_{\lambda}^\alpha \frac{Q(y;\xi)}{2y^2\left|R(y;\xi)\right|}\mathrm{d}y.
\end{gather*}
Together with the definition of $Q(y; \xi)$ in \eqref{Q1}, then $p_++p_->0$ follows.
As $\lambda\in\left[\alpha, \eta_2\right]$, a straightforward calculation shows that
\begin{gather*}
p_+=\mathrm{i}\int_{\lambda}^{\eta_2} \frac{Q(y;\xi)}{4y^2\left|R_+(y;\xi)\right|}\mathrm{d}y\in\mathrm{i}\mathbb{R}.
\end{gather*}
As $\Re(\lambda)\in\left(\alpha, \eta_2\right)$, it follows from \eqref{p-right} that
\begin{gather*}
\lim_{\Im(\lambda)\to 0^+}\frac{\partial \Re(p)}{\partial\Im(\lambda)}=\frac{Q(\Re(\lambda); \xi)}{4\Re(\lambda)^2\left|R_+\left(\Re(\lambda); \xi\right)\right|}<0.
\end{gather*}
We complete the proof.
\end{proof}

\begin{proposition}\label{p-sign-1}
for $\xi\in\left(-\infty, \xi_{\mathrm{crit}}\right)$,
\begin{gather}
\left\{\begin{array}{ll}
\Re\left(p_+\right)=0, \quad &\mathrm{if}\,\,\, \lambda\in\left[\eta_1, \eta_2\right];  \\[1em]
\lim_{\Im(\lambda)\to 0^+}\dfrac{\partial \Re(p)}{\partial\Im(\lambda)}<0, \quad &\mathrm{if}\,\,\, \lambda\in\left(\eta_1, \eta_2\right).
\end{array}\right.
\end{gather}
\end{proposition}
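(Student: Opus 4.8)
The plan is to mirror the proof of Proposition \ref{p-sign}, using the structural simplification that in this regime $\alpha=\eta_1$, so that the entire interval $[\eta_1,\eta_2]$ sits on the branch cut of $R$ and $R_+$ is purely imaginary there. In particular, the auxiliary band $[\eta_1,\alpha)$ appearing in Proposition \ref{p-sign} is now empty, which is why only two assertions remain rather than three.

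For the first assertion I would argue as in the second line of Proposition \ref{p-sign}. Since $Q(y;\xi)$ is real on the real axis and $R_+(y;\xi)\in\mathrm{i}\mathbb{R}$ for $y\in(\eta_1,\eta_2)$, the integrand in \eqref{p-right} is purely imaginary, so integrating along the real segment from $\eta_2$ to $\lambda$ gives
\[
p_+=\mathrm{i}\int_\lambda^{\eta_2}\frac{Q(y;\xi)}{4y^2\left|R_+(y;\xi)\right|}\,\mathrm{d}y\in\mathrm{i}\mathbb{R},
\]
whence $\Re(p_+)=0$ on $[\eta_1,\eta_2]$; the square-root vanishing of $R$ at $\eta_1,\eta_2$ keeps the integral convergent and the path avoids the pole at $y=0$. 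For the second assertion the Cauchy–Riemann computation from Proposition \ref{p-sign} applies verbatim: since $p$ is analytic off the cut with $\partial p/\partial\lambda=Q/(4\lambda^2R)$, letting $\Im(\lambda)\to0^+$ yields
\[
\lim_{\Im(\lambda)\to0^+}\frac{\partial\Re(p)}{\partial\Im(\lambda)}=\frac{Q(\Re(\lambda);\xi)}{4\Re(\lambda)^2\left|R_+(\Re(\lambda);\xi)\right|},
\]
so the whole problem reduces to proving $Q(y;\xi)<0$ for $y\in(\eta_1,\eta_2)$ and $\xi<\xi_{\mathrm{crit}}$.

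To control the sign of $Q$ I would combine monotonicity in $\xi$ with the boundary value at $\xi_{\mathrm{crit}}$. Writing $e=eE(m_1)/eK(m_1)$ with $m_1=\eta_1/\eta_2$, the formula \eqref{Q2} is affine in $\xi$ with
\[
\frac{\partial Q}{\partial\xi}=y^2\left(y^2-\eta_2^2(1-e)\right).
\]
The elementary elliptic inequality $eE(m)>(1-m^2)eK(m)$ for $0<m<1$, obtained by comparing the two integrands pointwise, gives $e>1-m_1^2$, hence $\eta_2^2(1-e)<\eta_1^2<y^2$ on $(\eta_1,\eta_2)$ and $\partial_\xi Q>0$; thus $Q(y;\xi)$ is strictly increasing in $\xi$, and it suffices to bound $Q$ at $\xi=\xi_{\mathrm{crit}}$. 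There a direct comparison of coefficients shows that the defining relation $\xi_{\mathrm{crit}}=-\eta_2^{-2}W(m_1)$ is exactly the condition making \eqref{Q2} agree with \eqref{Q1} at $\alpha=\eta_1$ (the $y^4$ and constant coefficients match automatically, and equating the $y^2$ coefficients solves to $\xi_{\mathrm{crit}}$), i.e.
\[
Q(y;\xi_{\mathrm{crit}})=\left(y^2-\eta_1^2\right)\left(\xi_{\mathrm{crit}}y^2-\frac{\eta_2}{\eta_1}\right),
\]
whose first factor is positive and second factor negative on $(\eta_1,\eta_2)$ (since $\xi_{\mathrm{crit}}<0$). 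Combining this with the monotonicity gives $Q(y;\xi)<Q(y;\xi_{\mathrm{crit}})<0$ for every $\xi<\xi_{\mathrm{crit}}$, completing the argument.

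The main obstacle is this sign analysis of $Q$, where two points deserve care: verifying the coefficient-matching identity that pins the factorization of $Q(\,\cdot\,;\xi_{\mathrm{crit}})$ to the exact value $\xi_{\mathrm{crit}}=-\eta_2^{-2}W(m_1)$, and establishing the $\xi$-monotonicity, which rests on the positivity $m^2-1+eE(m)/eK(m)>0$ (equivalently $eE/eK>1-m^2$). This same positivity is what guarantees $\xi_{\mathrm{crit}}<0$, so it enters the problem at both ends. Everything else is a transcription of the corresponding steps in Proposition \ref{p-sign}.
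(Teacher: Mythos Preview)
Your argument is correct and follows essentially the same line as the paper's: both reduce the second assertion to $Q(y;\xi)<0$ on $(\eta_1,\eta_2)$, establish $\xi$-monotonicity of $Q$ via the elliptic inequality $1-m_1^2<eE(m_1)/eK(m_1)$, and use the boundary value at $\xi=\xi_{\mathrm{crit}}$. The paper's only organizational difference is that it checks monotonicity solely at the single point $y=\eta_1$ (where $Q(\eta_1;\xi_{\mathrm{crit}})=0$) and then appeals to the quartic-even structure of $Q$ together with $Q(0;\xi)=\eta_1\eta_2>0$ and $\xi<0$ to propagate the sign across the whole interval, whereas you verify both the monotonicity and the factorization $Q(y;\xi_{\mathrm{crit}})=(y^2-\eta_1^2)(\xi_{\mathrm{crit}}y^2-\eta_2/\eta_1)$ pointwise on $(\eta_1,\eta_2)$; this is a fair trade of the polynomial-shape argument for the coefficient-matching identity.
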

\begin{proof}
Performing the same way as that in Proposition \ref{p-sign}, we can prove that $p_+\in\mathrm{i}\mathbb{R}$, for $\lambda\in\left[\eta_1, \eta_2\right]$.
To prove the second property, we only need to claim the defining $Q$ in \eqref{Q2} satisfies the following inequality
\begin{gather*}
Q\left(\lambda;\xi \right)<0, \quad \text{for all}\,\,\, \lambda\in\left(\eta_1, \eta_2\right).
\end{gather*}
Note that $Q(0;\xi)=\eta_1\eta_2>0$ and $Q(\lambda; \xi)$ is a quartic and even polynomial, thus, $Q\left(\lambda;\xi \right)<0$ can be proven if $Q(\eta_1; \xi)<0$.
Taking the derivative for $Q(\eta_1; \xi)$ with respect to $\xi$, one obtains that
\begin{gather*}
\frac{\partial }{\partial \xi}Q(\eta_1, \xi)=\eta_1^2\eta_2^2\left(\frac{eE(\eta_1/\eta_2)}{eK(\eta_1/\eta_2)}+\frac{\eta_1^2}{\eta_2^2}-1\right).
\end{gather*}
Recalling the inequality
\begin{gather*}
1-\zeta^2<\frac{eE(\zeta)}{eK(\zeta)}<1, \quad \text{for all}\,\,\, \zeta\in\left(0, 1\right),
\end{gather*}
we derive $\partial Q(\eta_1, \xi)/\partial \xi>0$.
Then $Q(\eta_1; \xi)<0$ follows from the fact $Q(\eta_1; \xi_\mathrm{crit})=0$.
We complete the proof.
\end{proof}

\section{Riemann-Hilbert problem for the sine-Gordon kink-soliton gas}

The Riemann-Hilbert problem for $M^\infty$ reveals that the jump matrices \eqref{Minfty} are composed of Cauchy integrals. To simplify the problem, one can deform the jump contours and define a $2\times 2$ matrix-valued function $Y=Y(\lambda; x, t)$ as follows:
\begin{gather}
Y(\lambda; x, t)=\begin{cases}
M^\infty \mathcal{L}^{t\theta}\left[\mathrm{i}\left(\mathcal{P}_1+\mathcal{P}_2\right)
\right], &\mathrm{for}\,\,\, \lambda\,\,\, \text{interior to} \,\,\,\Gamma_+,  \\[0.5em]
M^\infty \mathcal{U}^{t\theta}\left[\mathrm{i}\left(\mathcal{P}_{-1}+\mathcal{P}_{-2}\right)
\right], &\mathrm{for}\,\,\, \lambda \,\,\, \text{interior to} \,\,\, \Gamma_-, \\[0.5em]
M^\infty, & \mathrm{for}\,\,\,\lambda\,\,\, \text{exterior to}\,\,\, \Gamma_+\cup\Gamma_-.
\end{cases}
\end{gather}

\begin{RH} The matrix function $Y(\lambda; x, t)$ satisfies the following properties:

\begin{itemize}

\item{}$Y$ is analytic in $\lambda$ for $\lambda\in\mathbb{C}\setminus\left(\left[-\eta_2, -\eta_1\right]\cup\left[\eta_1, \eta_2\right]\right)$;

\item{}It normalizes to $\mathbb{I}_2$ as $\lambda\to\infty$;

\item{} For $\lambda\in\left(\eta_1, \eta_0\right)\cup\left(\eta_0, \eta_2\right)\cup\left(-\eta_2, -\eta_0\right)\cup\left(-\eta_0, -\eta_1\right)$,  $Y$ admits continuous boundary values  denoted by $Y_+$ and $Y_-$,  respectively.
Utilizing the Sokhotski-Plemelj formula, these values  are related by the following jump conditions
\begin{gather}\label{Y-Jump}
Y_+(\lambda; x, t)=Y_-(\lambda; x, t)
\begin{cases}
\mathcal{L}^{t\theta}\left[\mathrm{i}r\right], &\mathrm{for}\,\,\, \lambda\in\left(\eta_1, \eta_0\right)\cup\left(\eta_0, \eta_2\right),\\[0.5em]
\mathcal{U}^{t\theta}\left[\mathrm{i}r\right],  &\mathrm{for}\,\,\, \lambda\in\left(-\eta_2, -\eta_0\right)\cup\left(-\eta_0, -\eta_1\right),
\end{cases}
\end{gather}
where the values of $r$ over the interval $\left(-\eta_2, -\eta_0\right)\cup\left(-\eta_0, -\eta_1\right)$ are determined by the symmetry $r(-\lambda)=r(\lambda)$.
\end{itemize}
\end{RH}
Specially, for the first type of generalized reflection coefficient $r=r_0$ in the case of $\beta_0=0$, $Y$ admits continuous boundary values for $\lambda\in\left(\eta_1, \eta_2\right)\cup\left(-\eta_2, -\eta_1\right)$, and they are related by
\begin{gather}\label{Y-Jump-Special}
Y_+(\lambda; x, t)=Y_-(\lambda; x, t)
\begin{cases}
\mathcal{L}^{t\theta}\left[\mathrm{i}r\right], &\mathrm{for}\,\,\, \lambda\in\left(\eta_1, \eta_2\right),\\[0.5em]
\mathcal{U}^{t\theta}\left[\mathrm{i}r\right],  &\mathrm{for}\,\,\, \lambda\in\left(-\eta_2, -\eta_1\right).
\end{cases}
\end{gather}
Note that $\pm\eta_1$ and $\pm\eta_2$ are endpoints, and thus the continuous boundary values $Y_\pm$ cannot be well-defined.
The difference in jump conditions from \cite{15} lies at $\lambda=\pm\eta_0$.
Since $r_0$ and $r_c$ are singular at these points, the limits of $Y$ as $\lambda\to\pm\eta_0$ do not exist.
To ensure a unique solution of $Y$, local behaviors at both the endpoints and the singularities $\pm\eta_0$ must be addressed.
Near each endpoint \( \eta_j \) for \( j=1, 2 \), as $\lambda\to\eta_j$, \( Y \) exhibits the following local behavior
\begin{gather}\label{local-Y-eta-1}
Y(\lambda; x, t)=
\begin{cases}
\mathcal{O}
\begin{pmatrix}
1 &1  \\
1 &1
\end{pmatrix}
, & \text{if} \,\,\, \beta_j\in\left(0, +\infty\right),  \\[1.5em]
\mathcal{O}
\begin{pmatrix}
\log \left|\lambda-\eta_j\right| &1  \\
\log \left|\lambda-\eta_j\right|  &1
\end{pmatrix}
, & \text{if} \,\,\, \beta_j=0,  \\[1.5em]
\mathcal{O}
\begin{pmatrix}
\left|\lambda-\eta_j\right|^{\beta_j} &1\\
\left|\lambda-\eta_j\right|^{\beta_j} &1
\end{pmatrix}
,  & \text{if} \,\,\, \beta_j\in\left(-1, 0\right),
\end{cases}
\end{gather}
with the \( \mathcal{O} \)-term interpreted element-wise.
As $\lambda\to -\eta_j$, the local behavior of $Y$ near each endpoint \( -\eta_j \) for \( j=1, 2 \) is
\begin{gather}\label{local-Y-eta-2}
Y(\lambda; x, t)=
\begin{cases}
\mathcal{O}
\begin{pmatrix}
1 &1  \\
1 &1
\end{pmatrix}
, & \text{if} \,\,\, \beta_j\in\left(0, +\infty\right),  \\[1.5em]
\mathcal{O}
\begin{pmatrix}
1 & \log\left|\lambda+\eta_j\right|  \\
1 & \log\left|\lambda+\eta_j\right|
\end{pmatrix}
, & \text{if} \,\,\, \beta_j=0,  \\[1.5em]
\mathcal{O}
\begin{pmatrix}
1& \left|\lambda+\eta_j\right|^{\beta_j}\\
1& \left|\lambda+\eta_j\right|^{\beta_j}
\end{pmatrix}
,  & \text{if} \,\,\, \beta_j\in\left(-1, 0\right).
\end{cases}
\end{gather}
It can be observed that near the endpoints $\pm\eta_j$ with $j=1, 2$, Y has the same local behavior for both $r_0$ and $r_c$, while near the $\pm\eta_0$, Y shows different local behaviors for the two cases.
For the first generalized reflection coefficient $r_0$, one only needs to consider the case of $\beta_0\ne 0$ due to the jump condition \eqref{Y-Jump-Special}, where
Y exhibits the following local behaviors:
if $\beta_0>0$,
\begin{gather}
Y=\mathcal{O}
\begin{pmatrix}
1 &1  \\
1 & 1
\end{pmatrix}, \quad
\mathrm{as}\,\,\, \lambda\to\pm\eta_0;
\end{gather}
if $\beta_0\in\left(-1, 0\right)$,
\begin{gather}\label{local-Y-eta-01}
Y(\lambda; x, t)=
\begin{cases}
\mathcal{O}
\begin{pmatrix}
\left|\lambda-\eta_0\right|^{\beta_0} &1 \\
\left|\lambda-\eta_0\right|^{\beta_0} &1
\end{pmatrix}
, & \text{as} \,\,\, \lambda\to\eta_0,  \\[1.5em]
\mathcal{O}
\begin{pmatrix}
1& \left|\lambda+\eta_0\right|^{\beta_0} \\
1& \left|\lambda+\eta_0\right|^{\beta_0}
\end{pmatrix}
,  & \text{as} \,\,\, \lambda\to-\eta_0.
\end{cases}
\end{gather}

For the second generalized coefficient $r=r_c$, Y has the following behaviors
\begin{gather}\label{local-Y-eta-02}
Y(\lambda; x, t)=
\begin{cases}
\mathcal{O}
\begin{pmatrix}
\log\left|\lambda-\eta_0\right| &1\\
\log\left|\lambda-\eta_0\right| &1
\end{pmatrix}
, & \mathrm{as} \,\,\,\lambda\to\eta_0,\\[1.5em]
\mathcal{O}
\begin{pmatrix}
1 & \log\left|\lambda+\eta_0\right| \\
1 & \log\left|\lambda+\eta_0\right|
\end{pmatrix}
, & \mathrm{as} \,\,\,\lambda\to-\eta_0.
\end{cases}
\end{gather}
Furthermore, the kink-soliton gas, denoted as $u=u(x, t)$ can be recovered from $Y$ by
\begin{gather}\label{potential-formula}
\begin{aligned}
&\frac{\partial \,u(x, t)}{\partial x}=4 \lim_{\lambda\to\infty} \lambda Y_{1, 2}(\lambda; x, t), \\[0.5em]
&\cos u(x, t)=1-2Y_{1, 2}(0; x, t)^2, \\[0.5em]
&\sin u(x, t)=-2Y_{1, 1}(0; x, t)Y_{1, 2}(0; x, t).
\end{aligned}
\end{gather}
Specially, $Y(\lambda; x, 0)$ is related with the initial value $u(x, 0)$ of the kink-soliton gas $u(x, t)$. Its jump condition is written as
\begin{gather}\label{Y0-Jump}
Y_+(\lambda; x, 0)=Y_-(\lambda; x, 0)
\begin{cases}
\mathcal{L}^{\lambda x}\left[\mathrm{i}r\right], &\mathrm{for}\,\,\, \lambda\in\left(\eta_1, \eta_0\right)\cup\left(\eta_0, \eta_2\right),\\[0.5em]
\mathcal{U}^{\lambda x}\left[\mathrm{i}r\right],  &\mathrm{for}\,\,\, \lambda\in\left(-\eta_2, -\eta_0\right)\cup\left(-\eta_0, -\eta_1\right),
\end{cases}
\end{gather}
if the reflection coefficient is taken as $r=r_0, r_c$ with $\beta_0\ne 0$.
For the first type of reflection coefficient $r=r_0$ with $\beta_0=0$, the jump condition is
\begin{gather}\label{Y0-Jump-Special}
Y_+(\lambda; x, 0)=Y_-(\lambda; x, 0)
\begin{cases}
\mathcal{L}^{\lambda x}\left[\mathrm{i}r\right], &\mathrm{for}\,\,\, \lambda\in\left(\eta_1, \eta_2\right),\\[0.5em]
\mathcal{U}^{\lambda x}\left[\mathrm{i}r\right],  &\mathrm{for}\,\,\, \lambda\in\left(-\eta_2, -\eta_1\right).
\end{cases}
\end{gather}

\section{Large-$x$ asymptotics for the initial value $u(x, 0)$ of the sine-Gordon kink-soliton gas}
As \( x \to +\infty \), the asymptotic behavior can be derived using a standard small-norm argument, leading to \eqref{large-x-right}. In this section, we derive the asymptotic behavior for the regime \( x \to -\infty \). To achieve this, we employ a series of transformations: \( Y(\lambda; x, 0) \to T(\lambda; x, 0) \to S(\lambda; x, 0) \to E(\lambda; x, 0) \). These transformations ensure that \( E(\lambda; x, 0) \) normalizes to the identity matrix \( \mathbb{I}_2 \) as \(\lambda \to \infty\) and that the associated jump matrices decay exponentially and uniformly to \( \mathbb{I}_2 \).

Following the Deift-Zhou steepest descent method, we proceed by performing triangular decompositions, which facilitate contour deformation:
\begin{equation}
\begin{aligned}
\mathcal{L}^{\lambda x}[\mathrm{i}r] &= \mathcal{U}^{\lambda x}[-\mathrm{i}r^{-1}] \left( \mathcal{L}^{\lambda x}[\mathrm{i}r] + \mathcal{U}^{\lambda x}[\mathrm{i}r^{-1}] - 2\mathbb{I}_2 \right) \mathcal{U}^{\lambda x}[-\mathrm{i}r^{-1}], \\[0.5em]
\mathcal{U}^{\lambda x}[\mathrm{i}r] &= \mathcal{L}^{\lambda x}[-\mathrm{i}r^{-1}] \left( \mathcal{L}^{\lambda x}[\mathrm{i}r^{-1}] + \mathcal{U}^{\lambda x}[\mathrm{i}r] - 2\mathbb{I}_2 \right) \mathcal{L}^{\lambda x}[-\mathrm{i}r^{-1}].
\end{aligned}
\end{equation}
However, since the sign of \(\Re(\lambda)\) indicates that exponential decay fails along certain parts of the corresponding lenses, it is necessary to introduce a conjugation operation prior to contour deformation. This operation is achieved by incorporating an appropriate \( g_0 \)-function to manage the exponential terms.

To further simplify the jump matrices, a scalar \( f_0 \)-function is also introduced. This function works in conjunction with the \( g_0 \)-function to yield constant jump matrices across the relevant contours

\subsection{Riemann-Hilbert problem for $T(\lambda; x, 0)$}
The $f_0$-function is analytic in $\lambda$ for $\lambda\in\mathbb{C}\setminus\left[-\eta_2, \eta_2\right]$ and normalizes to 1 as $\lambda\to\infty$.
Its continuous boundary values $f_{0\pm}$ are related by
\begin{gather}
\left\{
\begin{aligned}
&f_{0+}f_{0-} =r, &&\text{for} \,\,\,\lambda\in\left(\eta_1, \eta_0\right)\cup\left(\eta_0, \eta_2\right);    \\[0.5em]
&f_{0+}f_{0-} =r^{-1},   && \text{for} \,\,\,\lambda\in\left(-\eta_2, -\eta_0\right)\cup\left(-\eta_0, -\eta_1\right); \\[0.5em]
&f_{0+}^{-1}f_{0-}=\mathrm{e}^{\Omega_1\phi_1},  && \text{for} \,\,\,\lambda\in\left(-\eta_1, \eta_1\right).
\end{aligned}\right.
\end{gather}
In particular, for the first type of generalized reflection coefficient $r=r_0$ in the case of $\beta_0=0$, the jump condition is slightly changed to
\begin{gather}
\left\{\begin{aligned}
&f_{0+}f_{0-} =r, &&\text{for} \,\,\,\lambda\in\left(\eta_1, \eta_2\right);    \\[0.5em]
&f_{0+}f_{0-} =r^{-1},   && \text{for} \,\,\,\lambda\in\left(-\eta_2,  -\eta_1\right); \\[0.5em]
&f_{0+}^{-1}f_{0-}=\mathrm{e}^{\Omega_1\phi_1},  && \text{for} \,\,\,\lambda\in\left(-\eta_1, \eta_1\right).
\end{aligned}\right.
\end{gather}
Divided by $R_{0+}$, the $f_0$-function is obtained by taking the logarithm and using Plemelj's formula as follows
\begin{gather}
f_0=\exp\left\{\frac{R_0}{\pi \mathrm{i}}\left(\int_{\eta_1}^{\eta_2}\frac{\log r(s)}{R_{0+}\left(s\right)}\frac{\lambda}{s^2-\lambda^2}\mathrm{d}s
-\int_{0}^{\eta_1}\frac{\Omega_1\phi_1}{R_0(s)}\frac{\lambda}{s^2-\lambda^2}\mathrm{d}s\right)\right\}.
\end{gather}
With the definitions of the $g_0$- and $f_0$-functions, the following conjugation is performed
\begin{gather}\label{Conjugation-0}
T(\lambda; x, 0)=Y(\lambda; x, 0)\mathrm{e}^{xg_0\sigma_3}f_0^{-\sigma_3}.
\end{gather}

\begin{RH} The $2\times 2$ matrix-valued function $T(\lambda; x, 0)$ solves the following Riemann-Hilbert problem:

\begin{itemize}

\item{} $T$ is analytic in $\lambda$ for $\lambda\in\mathbb{C}\setminus\left(\left[-\eta_2, \eta_2\right]\right)$;

 \item{}It normalizes to the identity matrix $\mathbb{I}_2$ as $\lambda\to\infty$;

 \item{} For  $\lambda\in\left(-\eta_2, \eta_2\right)\setminus\left\{\pm\eta_1, \pm\eta_0\right\}$, $T$ admits continuous boundary values, which are related by the following jump conditions
\begin{gather}
T_+(\lambda; x, 0)=T_-(\lambda; x, 0)
\begin{cases}
\mathcal{U}^{xp_{0-}}_{f_{0-}}\left[-\mathrm{i}r^{-1}\right]\left(\mathrm{i}\sigma_1\right)\mathcal{U}^{xp_{0+}}_{f_{0+}}\left[-\mathrm{i}r^{-1}\right], &\mathrm{for}\,\,\, \lambda\in\left(\eta_1, \eta_0\right)\cup\left(\eta_0, \eta_2\right),\\[0.5em]
\mathcal{L}^{xp_{0-}}_{f_{0-}}\left[-\mathrm{i}r^{-1}\right]\left(\mathrm{i}\sigma_1\right)\mathcal{L}^{xp_{0+}}_{f_{0+}}\left[-\mathrm{i}r^{-1}\right], &\mathrm{for}\,\,\, \lambda\in\left(-\eta_2, -\eta_0\right)\cup\left(-\eta_0, -\eta_1\right),\\[0.5em]
\mathrm{e}^{\Delta_1^0\sigma_3}, &\mathrm{for}\,\,\, \lambda\in\left(-\eta_1, \eta_1\right),
\end{cases}
\end{gather}
where $p_0$ is defined by $p_0=\lambda-g_0$.
\end{itemize}
\end{RH}
In the above jump condition, the reflection coefficient $r$ is taken as $r_0$ with $\beta_0\ne 0$, and $r_c$. When $r$ is taken as the first type $r_0$ with $\beta_0=0$, the corresponding jump condition is slightly changed to
\begin{gather}
T_+(\lambda; x, 0)=T_-(\lambda; x, 0)
\begin{cases}
\mathcal{U}^{xp_{0-}}_{f_{0-}}\left[-\mathrm{i}r^{-1}\right]\left(\mathrm{i}\sigma_1\right)\mathcal{U}^{xp_{0+}}_{f_{0+}}\left[-\mathrm{i}r^{-1}\right], &\mathrm{for}\,\,\, \lambda\in\left(\eta_1, \eta_2\right),\\[0.5em]
\mathcal{L}^{xp_{0-}}_{f_{0-}}\left[-\mathrm{i}r^{-1}\right]\left(\mathrm{i}\sigma_1\right)\mathcal{L}^{xp_{0+}}_{f_{0+}}\left[-\mathrm{i}r^{-1}\right], &\mathrm{for}\,\,\, \lambda\in\left(-\eta_2, -\eta_1\right),\\[0.5em]
\mathrm{e}^{\Delta_1^0\sigma_3}, &\mathrm{for}\,\,\, \lambda\in\left(-\eta_1, \eta_1\right).
\end{cases}
\end{gather}
Near the endpoint $\eta_j's$, $Y(\lambda; x, 0)$ exhibits the following local behaviors as $\lambda\to\eta_j, j=1, 2$,
\begin{gather}
T(\lambda; x, 0)=
\begin{cases}
\mathcal{O}
\begin{pmatrix}
\left|\lambda-\eta_j\right|^{-\beta_j/2} &\left|\lambda-\eta_j\right|^{\beta_j/2}  \\[0.5em]
\left|\lambda-\eta_j\right|^{-\beta_j/2} &\left|\lambda-\eta_j\right|^{\beta_j/2}
\end{pmatrix}
, & \text{if} \,\,\, \beta_j\in\left(0, +\infty\right),  \\[1.5em]
\mathcal{O}
\begin{pmatrix}
\log \left|\lambda-\eta_j\right| &1  \\[0.5em]
\log \left|\lambda-\eta_j\right|  &1
\end{pmatrix}
, & \text{if} \,\,\, \beta_j=0,  \\[1.5em]
\mathcal{O}
\begin{pmatrix}
\left|\lambda-\eta_j\right|^{\beta_j/2} &\left|\lambda-\eta_j\right|^{\beta_j/2}\\[0.5em]
\left|\lambda-\eta_j\right|^{\beta_j/2} &\left|\lambda-\eta_j\right|^{\beta_j/2}
\end{pmatrix}
,  & \text{if} \,\,\, \beta_j\in\left(-1, 0\right).
\end{cases}
\end{gather}
Near the endpoint $-\eta_j's$, $Y(\lambda; x, 0)$ exhibits the following local behaviors as $\lambda\to -\eta_j, j=1, 2$,
\begin{gather}
T(\lambda; x, 0)=
\begin{cases}
\mathcal{O}
\begin{pmatrix}
\left|\lambda+\eta_j\right|^{\beta_j/2} &\left|\lambda+\eta_j\right|^{-\beta_j/2}  \\[0.5em]
\left|\lambda+\eta_j\right|^{\beta_j/2} &\left|\lambda+\eta_j\right|^{-\beta_j/2}
\end{pmatrix}
, & \text{if} \,\,\, \beta_j\in\left(0, +\infty\right),  \\[1.5em]
\mathcal{O}
\begin{pmatrix}
1& \log \left|\lambda+\eta_j\right|   \\[0.5em]
1& \log \left|\lambda+\eta_j\right|
\end{pmatrix}
, & \text{if} \,\,\, \beta_j=0,  \\[1.5em]
\mathcal{O}
\begin{pmatrix}
\left|\lambda+\eta_j\right|^{\beta_j/2} &\left|\lambda+\eta_j\right|^{\beta_j/2}\\[0.5em]
\left|\lambda+\eta_j\right|^{\beta_j/2} &\left|\lambda+\eta_j\right|^{\beta_j/2}
\end{pmatrix}
,  & \text{if} \,\,\, \beta_j\in\left(-1, 0\right).
\end{cases}
\end{gather}
Near the singularities $\pm \eta_0$, different types of reflection coefficients have different local behaviors.
For the first  type of  generalized reflection coefficient $r=r_0$ with $\beta_0\ne 0$, $T(\lambda; x, 0)$ exhibits the following local behaviors
\begin{gather}
T(\lambda; x, 0)=
\begin{cases}
\mathcal{O}
\begin{pmatrix}
 \left|\lambda\mp\eta_0\right|^{\mp\beta_0/2}&\left|\lambda\mp\eta_0\right|^{\pm\beta_0/2}\\[0.5em]
  \left|\lambda\mp\eta_0\right|^{\mp\beta_0/2}&\left|\lambda\mp\eta_0\right|^{\pm\beta_0/2}
\end{pmatrix}
, & \mathrm{if} \,\,\,\beta_0\in\left(0, +\infty\right), \\[1.5em]
\mathcal{O}
\begin{pmatrix}
 \left|\lambda\mp\eta_0\right|^{\beta_0/2}&\left|\lambda\mp\eta_0\right|^{\beta_0/2}\\[0.5em]
  \left|\lambda\mp\eta_0\right|^{\beta_0/2}&\left|\lambda\mp\eta_0\right|^{\beta_0/2}
\end{pmatrix}
, & \mathrm{if} \,\,\,\beta_0\in\left(-1, 0\right),
\end{cases}
\end{gather}
while for the second  type of  generalized reflection coefficient $r=r_c$, $T(\lambda; x, 0)$ exhibits the local behaviors
\begin{gather}
T(\lambda; x, 0)=
\begin{cases}
\mathcal{O}
\begin{pmatrix}
\log\left|\lambda-\eta_0\right| &1\\
\log\left|\lambda-\eta_0\right| &1
\end{pmatrix}
, & \mathrm{as} \,\,\,\lambda\to\eta_0,\\[1.5em]
\mathcal{O}
\begin{pmatrix}
1 & \log\left|\lambda+\eta_0\right| \\
1 & \log\left|\lambda+\eta_0\right|
\end{pmatrix}
, & \mathrm{as} \,\,\,\lambda\to-\eta_0.
\end{cases}
\end{gather}

\subsection{Riemann-Hilbert problem for $S(\lambda; x, 0)$}

The subsequent step involves deforming the original Riemann-Hilbert problem \( Y(\lambda; x, 0) \) to a modified problem, denoted by \( S(\lambda; x, 0) \). This deformation is achieved by opening lenses around specified intervals, as depicted in Figure \ref{Deformation-0}. We consider general reflection coefficients, specifically \( r = r_0 \) and \( r = r_c \), where \(\beta_0 \neq 0\).

The contour deformation opens lenses above and below specific intervals. The lens domains are defined as follows:
- The domains \( \mathrm{D}^+_{1, l} \) and \( \mathrm{D}^-_{1, l} \) are lenses situated above and below the interval \( (\eta_1, \eta_0) \), respectively.
The domains \( \mathrm{D}^+_{1, r} \) and \( \mathrm{D}^-_{1, r} \) are lenses above and below \( (\eta_0, \eta_2) \).
 The domains \( \mathrm{D}^+_{2, l} \) and \( \mathrm{D}^-_{2, l} \) are lenses above and below \( (-\eta_2, -\eta_0) \).
 The domains \( \mathrm{D}^+_{2, r} \) and \( \mathrm{D}^-_{2, r} \) are lenses above and below \( (-\eta_0, -\eta_1) \).
For convenience, the following notations are introduced for these lens domains:
\[
\mathrm{D}_{j, l} = \mathrm{D}_{j, l}^+ \cup \mathrm{D}_{j, l}^-, \quad \mathrm{D}_{j, r} = \mathrm{D}_{j, r}^+ \cup \mathrm{D}_{j, r}^-, \quad \mathrm{D}_j^+ = \mathrm{D}_{j, l}^+ \cup \mathrm{D}_{j, r}^+, \quad \mathrm{D}_j^- = \mathrm{D}_{j, l}^- \cup \mathrm{D}_{j, r}^-,
\]
\[
\mathrm{D}_{j} = \mathrm{D}_{j, l} \cup \mathrm{D}_{j, r}, \quad j = 1, 2.
\]

For the particular case where \( r = r_0 \) with \(\beta_0 = 0\), a different contour deformation is applied to the Riemann-Hilbert problem for \( T(\lambda; x, 0) \). This specific deformation is shown in Figure \ref{Deformation-00}, with lenses opened around:
\( \mathrm{D}^+_1 \) and \( \mathrm{D}^-_1 \), above and below \( (\eta_1, \eta_2) \), respectively.
 \( \mathrm{D}^+_2 \) and \( \mathrm{D}^-_2 \), above and below \( (-\eta_2, -\eta_1) \).

We define these domains as:
\[
\mathrm{D}_1 = \mathrm{D}^+_1 \cup \mathrm{D}^-_1, \quad \mathrm{D}_2 = \mathrm{D}^+_2 \cup \mathrm{D}^-_2.
\]
Additionally, we denote the domain outside these lenses as \( \mathrm{D}_\mathrm{o} \), given by:
\[
\mathrm{D}_\mathrm{o} = \mathbb{C} \setminus \overline{\mathrm{D}_1 \cup \mathrm{D}_2 \cup (-\eta_1, \eta_1)},
\]
with \( \mathrm{D}_\mathrm{o} = \mathrm{D}_\mathrm{o}^+ \cup \mathrm{D}_\mathrm{o}^- \cup (\eta_2, +\infty) \cup (-\infty, -\eta_2) \), where \( \mathrm{D}_\mathrm{o}^+ \) and \( \mathrm{D}_\mathrm{o}^- \) denote the parts in the upper and lower half-planes, respectively. This configuration aids in handling the Riemann-Hilbert problem with lenses opened for different cases, ensuring suitable decay properties across the contours.

\begin{figure}[!t]
\centering
\includegraphics[scale=0.38]{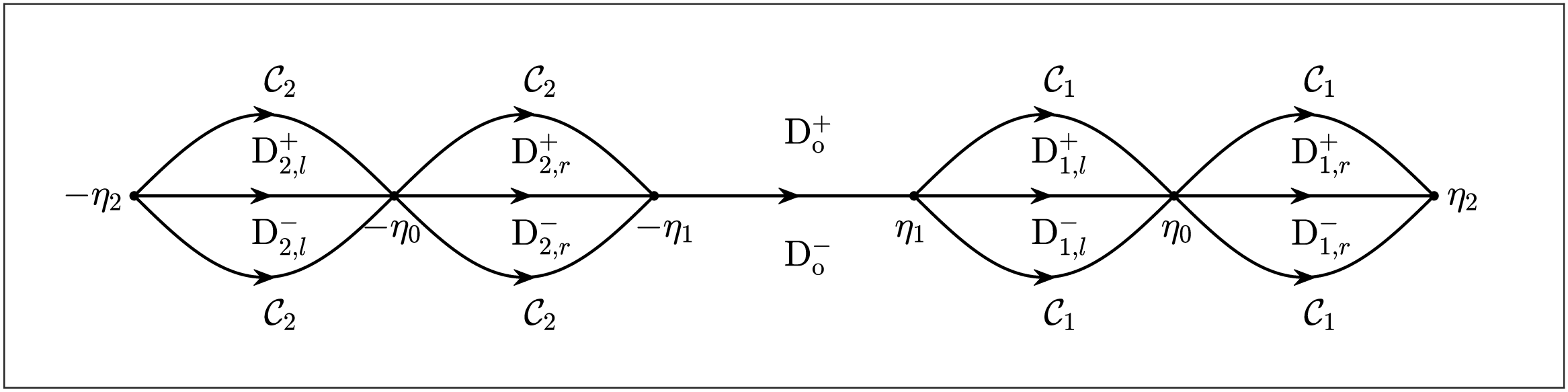}
\caption{Contour deformation by opening lenses for $r=r_0, r_c$ with $\beta_0\ne 0$.}
\label{Deformation-0}
\end{figure}

\begin{figure}[!t]
\centering
\includegraphics[scale=0.38]{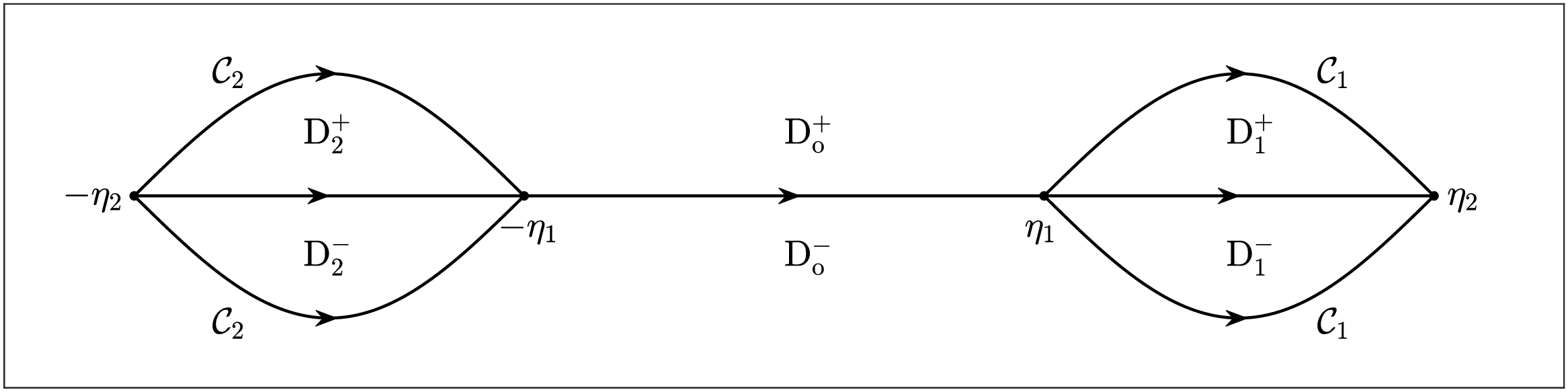}
\caption{Contour deformation by opening lenses for $r=r_0$ with $\beta_0=0$ }
\label{Deformation-00}
\end{figure}

The $2\times 2$ matrix-valued function $S(\lambda; x, 0)$ is defined as follows:
\begin{gather}\label{T0-To-S0}
S(\lambda; x, 0)=\begin{cases}
T(\lambda; x, 0)\mathcal{U}^{xp_0}_{f_0}\left[-\mathrm{i}r^{-1}\right]^{\mp 1}, & \mathrm{for}\,\,\, \lambda\in\mathrm{D}_1^\pm,  \\[0.5em]
T(\lambda; x, 0)\mathcal{L}^{xp_0}_{f_0}\left[-\mathrm{i}r^{-1}\right]^{\mp 1}, & \mathrm{for}\,\,\,  \lambda\in\mathrm{D}_2^\pm,  \\[0.5em]
T(\lambda; x, 0), & \mathrm{for}\,\,\,  \lambda\in\mathrm{D}_\mathrm{o}.
\end{cases}
\end{gather}

\begin{RH}
The matrix \( S(\lambda; x, 0) \) solves the following Riemann-Hilbert problem:

\begin{itemize}
\item{} $S(\lambda; x, 0)$ is analytic in $\lambda$ for $\lambda\in\mathbb{C}\setminus\left(\left[-\eta_2, \eta_2\right]\cup\mathcal{C}_1\cup\mathcal{C}_2\right)$;

\item{}It normalizes to the identity matrix $\mathbb{I}_2$
as $\lambda\to\infty$;

\item{}For \( \lambda \in \left(-\eta_2, \eta_2\right) \cup \mathcal{C}_1 \cup \mathcal{C}_2 \setminus \{\pm \eta_1, \pm \eta_0\} \), \( S(\lambda; x, 0) \) has continuous boundary values denoted by \( S_+(\lambda; x, 0) \) and \( S_-(\lambda; x, 0) \), which satisfy the following jump conditions:
\begin{gather}
S_+(\lambda; x, 0)=S_-(\lambda; x, 0)
\begin{cases}
\mathcal{U}^{xp_0}_{f_0}\left[-\mathrm{i}r^{-1}\right], &\mathrm{for}\,\,\, \lambda\in\mathcal{C}_1\setminus\left\{\eta_1, \eta_2, \eta_0\right\},\\[0.5em]
\mathcal{L}^{xp_0}_{f_0}\left[-\mathrm{i}r^{-1}\right], &\mathrm{for}\,\,\, \lambda\in\mathcal{C}_2\setminus\left\{-\eta_1, -\eta_2, -\eta_0\right\},\\[0.5em]
\mathrm{i}\sigma_1, &\mathrm{for}\,\,\, \lambda\in\left(\eta_1, \eta_2\right)\cup\left(-\eta_2, -\eta_1\right)\setminus\{\pm\eta_0\},\\[0.5em]
\mathrm{e}^{\Delta_1^0\sigma_3}, &\mathrm{for}\,\,\, \lambda\in\left(-\eta_1, \eta_1\right).
\end{cases}
\end{gather}
\end{itemize}
\end{RH}
For the specific case where \( r = r_0 \) with \( \beta_0 = 0 \), the jump conditions simplify as follows:
\begin{gather}
S_+(\lambda; x, 0)=S_-(\lambda; x, 0)
\begin{cases}
\mathcal{U}^{xp_0}_{f_0}\left[-\mathrm{i}r^{-1}\right], &\mathrm{for}\,\,\, \lambda\in\mathcal{C}_1\setminus\left\{\eta_1, \eta_2\right\},\\[0.5em]
\mathcal{L}^{xp_0}_{f_0}\left[-\mathrm{i}r^{-1}\right], &\mathrm{for}\,\,\, \lambda\in\mathcal{C}_2\setminus\left\{-\eta_1, -\eta_2\right\},\\[0.5em]
\mathrm{i}\sigma_1, &\mathrm{for}\,\,\, \lambda\in\left(\eta_1, \eta_2\right)\cup \left(-\eta_2, -\eta_1\right),\\[0.5em]
\mathrm{e}^{\Delta_1^0\sigma_3}, &\mathrm{for}\,\,\, \lambda\in\left(-\eta_1, \eta_1\right).
\end{cases}
\end{gather}

The local behavior of the matrix function \( S(\lambda; x, 0) \) near the endpoints \( \pm \eta_j \), \( j=1, 2 \), varies depending on whether \(\lambda\) is in the exterior region \( \mathrm{D}_{\mathrm{o}} \) or within the lens regions \( \mathrm{D}_1 \cup \mathrm{D}_2 \). These behaviors also depend on the parameter \(\beta_j\):
If $\beta_j\in \left(0, +\infty\right)$, $S(\lambda; x, 0)$ exhibits the following local behavior
\begin{gather}
S(\lambda; x, 0)=
\begin{cases}
\mathcal{O}
\begin{pmatrix}
 \left|\lambda\mp\eta_j\right|^{\mp\beta_j/2}&\left|\lambda\mp\eta_j\right|^{\pm\beta_j/2}\\[0.5em]
  \left|\lambda\mp\eta_j\right|^{\mp\beta_j/2}&\left|\lambda\mp\eta_j\right|^{\pm\beta_j/2}
\end{pmatrix}
, & \mathrm{as} \,\,\,\lambda\in\mathrm{D}_{\mathrm{o}}\to\pm\eta_j, \\[1.5em]
\mathcal{O}
\begin{pmatrix}
 \left|\lambda\mp\eta_j\right|^{-\beta_j/2}&\left|\lambda\mp\eta_j\right|^{-\beta_j/2}\\[0.5em]
 \left|\lambda\mp\eta_j\right|^{-\beta_j/2}&\left|\lambda\mp\eta_j\right|^{-\beta_j/2}
\end{pmatrix}
, & \mathrm{as} \,\,\,\lambda\in\mathrm{D}_1\cup\mathrm{D}_2\to\pm\eta_j,
\end{cases}
\end{gather}
 if $\beta_j=0$, the local behavior is
 \begin{gather}
S(\lambda; x, 0)=
\begin{cases}
\mathcal{O}
\begin{pmatrix}
\log\left|\lambda-\eta_j\right| &1\\[0.5em]
\log\left|\lambda-\eta_j\right| &1
\end{pmatrix}
, & \mathrm{as} \,\,\,\lambda\in\mathrm{D}_{\mathrm{o}}\to\eta_j,\\[1.5em]
\mathcal{O}
\begin{pmatrix}
1 & \log\left|\lambda+\eta_j\right| \\[0.5em]
1 & \log\left|\lambda+\eta_j\right|
\end{pmatrix}
, & \mathrm{as} \,\,\,\lambda\in\mathrm{D}_{\mathrm{o}}\to-\eta_j, \\[1.5em]
\mathcal{O}
\begin{pmatrix}
\log\left|\lambda\mp\eta_j\right| & \log\left|\lambda\mp\eta_j\right| \\[0.5em]
\log\left|\lambda\mp\eta_j\right| & \log\left|\lambda\mp\eta_j\right|
\end{pmatrix}
, & \mathrm{as} \,\,\,\lambda\in\mathrm{D}_1\cup\mathrm{D}_2\to \pm\eta_j,
\end{cases}
\end{gather}
 and  if $\beta_j\in\left(-1, 0\right)$, the local behavior is formulated as
\begin{gather}
S(\lambda; x, 0)=
\mathcal{O}
\begin{pmatrix}
 \left|\lambda\mp\eta_j\right|^{\beta_j/2}&\left|\lambda\mp\eta_j\right|^{\beta_j/2}  \\[0.5em]
 \left|\lambda\mp\eta_j\right|^{\beta_j/2}&\left|\lambda\mp\eta_j\right|^{\beta_j/2}
\end{pmatrix}, \quad
\mathrm{as} \,\,\,\lambda\in\mathrm{D}_1\cup\mathrm{D}_2\cup \mathrm{D}_{\mathrm{o}}\to\pm\eta_j.
\end{gather}

For the singularities at \(\pm \eta_0\), the local behavior of \( S(\lambda; x, 0) \) also depends on the specific type of reflection coefficient:
For the first type of generalized reflection coefficient $r=r_0$ with $\beta_0\ne 0$, $S(\lambda; x, 0)$ exhibits the following local behavior:
If $\beta_0\in \left(0, +\infty\right)$, the  local behavior is
\begin{gather}
S(\lambda; x, 0)=
\begin{cases}
\mathcal{O}
\begin{pmatrix}
 \left|\lambda\mp\eta_0\right|^{\mp\beta_0/2}&\left|\lambda\mp\eta_0\right|^{\pm\beta_0/2}\\[0.5em]
  \left|\lambda\mp\eta_0\right|^{\mp\beta_0/2}&\left|\lambda\mp\eta_0\right|^{\pm\beta_0/2}
\end{pmatrix}
, & \mathrm{as} \,\,\,\lambda\in\mathrm{D}_{\mathrm{o}}\to\pm\eta_0, \\[1.5em]
\mathcal{O}
\begin{pmatrix}
 \left|\lambda\mp\eta_0\right|^{-\beta_0/2}&\left|\lambda\mp\eta_0\right|^{-\beta_0/2}\\[0.5em]
 \left|\lambda\mp\eta_0\right|^{-\beta_0/2}&\left|\lambda\mp\eta_0\right|^{-\beta_0/2}
\end{pmatrix}
, & \mathrm{as} \,\,\,\lambda\in\mathrm{D}_1\cup\mathrm{D}_2\to\pm\eta_0,
\end{cases}
\end{gather}
and  if $\beta_0\in\left(-1, 0\right)$, the local behavior is
\begin{gather}
S(\lambda; x, 0)=
\mathcal{O}
\begin{pmatrix}
 \left|\lambda\mp\eta_0\right|^{\beta_0/2}&\left|\lambda\mp\eta_0\right|^{\beta_0/2}  \\[0.5em]
 \left|\lambda\mp\eta_0\right|^{\beta_0/2}&\left|\lambda\mp\eta_0\right|^{\beta_0/2}
\end{pmatrix}, \quad
\mathrm{as} \,\,\,\lambda\in\mathrm{D}_1\cup\mathrm{D}_2\cup \mathrm{D}_{\mathrm{o}}\to\pm\eta_0.
\end{gather}
For the second type of generalized reflection coefficient $r=r_c$, $S(\lambda; x, 0)$ exhibits the following local behavior
 \begin{gather}
S(\lambda; x, 0)=
\begin{cases}
\mathcal{O}
\begin{pmatrix}
\log\left|\lambda-\eta_0\right| &1\\[0.5em]
\log\left|\lambda-\eta_0\right| &1
\end{pmatrix}
, & \mathrm{as} \,\,\,\lambda\in\mathrm{D}_{\mathrm{o}}\to\eta_0,\\[1.5em]
\mathcal{O}
\begin{pmatrix}
1 & \log\left|\lambda+\eta_0\right| \\[0.5em]
1 & \log\left|\lambda+\eta_0\right|
\end{pmatrix}
, & \mathrm{as} \,\,\,\lambda\in\mathrm{D}_{\mathrm{o}}\to-\eta_0, \\[1.5em]
\mathcal{O}
\begin{pmatrix}
\log\left|\lambda\mp\eta_0\right| & \log\left|\lambda\mp\eta_0\right| \\[0.5em]
\log\left|\lambda\mp\eta_0\right| & \log\left|\lambda\mp\eta_0\right|
\end{pmatrix}
, & \mathrm{as} \,\,\,\lambda\in\mathrm{D}_1\cup\mathrm{D}_2\to \pm\eta_0.
\end{cases}
\end{gather}

\subsection{Riemann-Hilbert problem for $E(\lambda; x, 0)$}

The error matrix $E(\lambda; x, 0)$ is defined as:
\begin{gather}\label{E0}
E(\lambda; x, 0)=S(\lambda; x, 0)P(\lambda; x, 0)^{-1},
\end{gather}
where \( P(\lambda; x, 0) \) is the global parametrix, which differs based on the reflection coefficients.

For general reflection coefficients \( r = r_0 \) or \( r = r_c \) with \(\beta_0 \neq 0\), \( P(\lambda; x, 0) \) is defined as:
\begin{gather}
P(\lambda; x, 0)=
\begin{cases}
P^\infty\left(\lambda; x, 0\right), &\mathrm{for}\,\,\,\lambda\in\mathbb{C}\setminus\overline{B\left(\pm\eta_2, \pm\eta_0, \pm\eta_1\right)},  \\[0.5em]
P^{\eta_j}\left(\lambda; x, 0\right), & \mathrm{for}\,\,\,\lambda\in B\left(\eta_j\right),\,\, j=0,1,2,  \\[0.5em]
\sigma_2P^{\eta_j}\left(-\lambda; x, 0\right)\sigma_2, & \mathrm{for}\,\,\,\lambda\in B\left(-\eta_j\right),\,\, j=0,1,2,
\end{cases}
\end{gather}
with $B\left(\pm\eta_2, \pm\eta_0, \pm\alpha\right)=B(\eta_2)\cup B(-\eta_2)\cup B(\eta_0)\cup B(-\eta_0)\cup B(\alpha)\cup B(-\alpha)$;
For the special case of the first type of reflection coefficient \( r = r_0 \) with \(\beta_0 = 0\), \( P(\lambda; x, 0) \) is defined as:
\begin{gather}
P(\lambda; x, 0)=
\begin{cases}
P^\infty\left(\lambda; x, 0\right), &\mathrm{for}\,\,\,\lambda\in\mathbb{C}\setminus\overline{B\left(\pm\eta_2, \pm\eta_1\right)},  \\[0.5em]
P^{\eta_j}\left(\lambda; x, 0\right), & \mathrm{for}\,\,\,\lambda\in B\left(\eta_j\right),\,\, j=1,2,  \\[0.5em]
\sigma_2P^{\eta_j}\left(-\lambda; x, 0\right)\sigma_2, & \mathrm{for}\,\,\,\lambda\in B\left(-\eta_j\right),\,\, j=1,2,
\end{cases}
\end{gather}
with $B\left(\pm\eta_2, \pm\alpha\right)=B(\eta_2)\cup B(-\eta_2)\cup B(\alpha)\cup B(-\alpha)$.

The outer parametrix \( P^\infty(\lambda; x, 0) \) is constructed as follows:
\begin{gather}
\begin{aligned}
P^\infty_{1, 1}(\lambda; x, 0)&=\frac{\delta_1+\delta_1^{-1}}{2}
\frac{\vartheta_3\left(w_1+\frac{1}{4}+\frac{\Delta^0_1}{2\pi\mathrm{i}}; \tau_1\right)}{\vartheta_3\left(w_1+\frac{1}{4}; \tau_1\right)}
\frac{\vartheta_3\left(0; \tau_1\right)}{\vartheta_3\left(\frac{\Delta^0_1}{2\pi\mathrm{i}}; \tau_1\right)}, \\
P^\infty_{1, 2}(\lambda; x, 0)&=\frac{\delta_1-\delta_1^{-1}}{2}
\frac{\vartheta_3\left(w_1-\frac{1}{4}-\frac{\Delta^0_1}{2\pi\mathrm{i}}; \tau_1\right)}{\vartheta_3\left(w_1-\frac{1}{4}; \tau_1\right)}
\frac{\vartheta_3\left(0; \tau_1\right)}{\vartheta_3\left(\frac{\Delta^0_1}{2\pi\mathrm{i}}; \tau_1\right)},\\
P^\infty_{2, 1}(\lambda; x, 0)&=\frac{\delta_1-\delta_1^{-1}}{2}
\frac{\vartheta_3\left(w_1-\frac{1}{4}+\frac{\Delta^0_1}{2\pi\mathrm{i}}; \tau_1\right)}{\vartheta_3\left(w_1-\frac{1}{4}; \tau_1\right)}
\frac{\vartheta_3\left(0; \tau_1\right)}{\vartheta_3\left(\frac{\Delta^0_1}{2\pi\mathrm{i}}; \tau_1\right)}, \\
P^\infty_{2, 2}(\lambda; x, 0)&=\frac{\delta_1+\delta_1^{-1}}{2}
\frac{\vartheta_3\left(w_1+\frac{1}{4}-\frac{\Delta^0_1}{2\pi\mathrm{i}}; \tau_1\right)}{\vartheta_3\left(w_1+\frac{1}{4}; \tau_1\right)}
\frac{\vartheta_3\left(0; \tau_1\right)}{\vartheta_3\left(\frac{\Delta^0_1}{2\pi\mathrm{i}}; \tau_1\right)}.
\end{aligned}
\end{gather}
Here, \(\delta_1 = \delta_1(\lambda)\) is a branch of \( \left((\lambda + \eta_1)(\lambda - \eta_2) / (\lambda + \eta_2)(\lambda - \eta_1)\right)^{1/4} \), with branch cuts along \([ \eta_1, \eta_2 ] \cup [-\eta_2, -\eta_1]\), and it satisfies \(\delta_1 = 1 + \mathcal{O}(\lambda^{-1})\) as \(\lambda \to \infty\). The term \(w_1 = w_1(\lambda)\) is defined by:
\begin{gather}
w_1=-\frac{\eta_2}{4eK\left(m_1\right)}\int_{\eta_2}^\lambda \frac{\mathrm{d}\zeta}{R_0\left(\zeta\right)}.
\end{gather}
This construction ensures that \( P(\lambda; x, 0) \) closely approximates the solution \( S(\lambda; x, 0) \) around the points \( \pm\eta_j \) and satisfies the necessary jump conditions to facilitate the solution of the Riemann-Hilbert problem through the error matrix \( E(\lambda; x, 0) \).

The conformal mapping \(\mathcal{F}_0^{\eta_2} = p_0^2\) is defined in the vicinity of \(\lambda = \eta_2\), mapping specific regions as follows:
\(\mathrm{D}_\mathrm{o} \cap B(\eta_2)\) is mapped to \(\mathrm{D}_1^\zeta \cap B^\zeta(0)\),
 \(\mathrm{D}_{1}^+ \cap B(\eta_2)\) to \(\mathrm{D}_2^\zeta \cap B^\zeta(0)\),
 \(\mathrm{D}_{1}^- \cap B(\eta_2)\) to \(\mathrm{D}_3^\zeta \cap B^\zeta(0)\).
The local parametrix $P^{\eta_2}(\lambda; x, 0)$ for $\lambda\in\left(\mathrm{D}_1\cup\mathrm{D}_\mathrm{o}\right)\cap B(\eta_2)$,  is constructed as:
\begin{gather}
P^{\eta_2}(\lambda; x, 0)=P^\infty(\lambda; x, 0)A_0^{\eta_2}C\zeta_{\eta_2}^{-\sigma_3/4}M^{\mathrm{mB}}\left(\zeta_{\eta_2}; \beta_2\right)\mathrm{e}^{-\sqrt{\zeta_{\eta_2}}\sigma_3}\left(A_0^{\eta_2}\right)^{-1},
\end{gather}
where $\zeta_{\eta_2}=x^2\mathcal{F}_0^{\eta_2}$, $A_0^{\eta_2}=\left(\mathrm{e}^{\pi\mathrm{i}/4/}/f_0d^{\eta_2}\right)^{\sigma_3}\sigma_2$.
The function \( d^{\eta_2} \) varies with the reflection coefficient:
for the first generalized reflection coefficient, $r=r_0$,
\bee d^{\eta_2}=(\lambda-\eta_1)^{\beta_1/2}(\lambda-\eta_2)^{\beta_2/2}\left|\lambda-\eta_0\right|^{\beta_0/2}\gamma(\lambda)^{1/2},
\ene
and for the second generalized reflection coefficient, $r=r_c$, \bee d^{\eta_2}=(\lambda-\eta_1)^{\beta_1/2}(\lambda-\eta_2)^{\beta_2/2}\chi_c(\lambda)^{1/2}\gamma(\lambda)^{1/2}.
\ene

Similarly, the conformal mapping in the neighborhood of \(\lambda = \eta_1\), denoted \(\mathcal{F}_0^{\eta_1}\), is defined as \(\mathcal{F}_0^{\eta_1} = (p_0 \pm \Omega_1 / 2)^2\) for \(\lambda \in B(\eta_1) \cap \mathbb{C}^\pm\), mapping:
\(\mathrm{D}_\mathrm{o} \cap B(\eta_1)\) to \(\mathrm{D}_1^\zeta \cap B^\zeta(0)\),
 \(\mathrm{D}_{1}^- \cap B(\eta_1)\) to \(\mathrm{D}_2^\zeta \cap B^\zeta(0)\),
 \(\mathrm{D}_{1}^+ \cap B(\eta_1)\) to \(\mathrm{D}_3^\zeta \cap B^\zeta(0)\).
The local parametrix \( P^{\eta_1}(\lambda; x, 0) \) is formulated for \(\lambda \in (\mathrm{D}_{1}^\pm \cup \mathrm{D}_\mathrm{o}^\pm) \cap B(\eta_1)\) as:
\begin{gather}
P^{\eta_1}(\lambda; x, 0)=P^\infty(\lambda; x, 0)A_{0\pm}^{\eta_1}C\zeta_{\eta_1}^{-\sigma_3/4}M^{\mathrm{mB}}(\zeta_{\eta_1}; \beta_1)\mathrm{e}^{-\sqrt{\zeta_{\eta_1}}\sigma_3}\left(A_{0\pm}^{\eta_1}\right)^{-1},
\end{gather}
where  $\zeta_{\eta_1}=x^2\mathcal{F}_0^{\eta_1}$, and $A_{0\pm}^{\eta_1}=\left(\mathrm{e}^{\pi\mathrm{i}/4\mp x\Omega_1 /2}/f_0d^{\eta_1}\right)^{\sigma_3}\sigma_1$.
The function \( d^{\eta_1} \) is defined as:
for the first type of generalized reflection coefficient, $r=r_0$, \bee
d^{\eta_1}=(\eta_1-\lambda)^{\beta_1/2}(\eta_2-\lambda)^{\beta_2/2}\left|\lambda-\eta_0\right|^{\beta_0/2}\gamma(\lambda)^{1/2};
\ene
for the second type of generalized reflection coefficient, $r=r_c$, \bee
d^{\eta_1}=(\eta_1-\lambda)^{\beta_1/2}(\eta_2-\lambda)^{\beta_2/2}\chi_c(\lambda)^{1/2}\gamma(\lambda)^{1/2}.
\ene

For the first type of generalized reflection coefficient \( r = r_0 \) with \( \beta_0 \neq 0 \), the conformal mapping in the vicinity of \( \lambda = \eta_0 \) is defined by
\[
\mathcal{F}_0^{\eta_0} = \mp \left( p_0 - p_{0\pm}(\eta_0) \right),
\]
for \(\lambda \in B(\eta_0) \cap \mathbb{C}^\pm\). The mapping corresponds as follows:
 $\mathrm{D}_{1, r}^+\cap B(\eta_0)$ to $\mathrm{D}_1^\zeta\cap B^\zeta(0)$,
$\mathrm{D}_{1, l}^+\cap B(\eta_0)$ to $\mathrm{D}_4^\zeta\cap B^\zeta(0)$,
 $\mathrm{D}_{1, l}^-\cap B(\eta_0)$ to $\mathrm{D}_5^\zeta\cap B^\zeta(0)$,
 and
 $\mathrm{D}_{1, r}^-\cap B(\eta_0)$ to $\mathrm{D}_8^\zeta\cap B^\zeta(0)$.
 The domains are defined as:
 $\mathrm{D}_{\mathrm{o}, 1, r}^+=\left(\mathcal{F}_0^{\eta_0}\right)^{-1} \left(\mathrm{D}^\zeta_2\cap B^\zeta(0)\right)$,
 $\mathrm{D}_{\mathrm{o}, 1, l}^+=\left(\mathcal{F}_0^{\eta_0}\right)^{-1} \left(\mathrm{D}^\zeta_3\cap B^\zeta(0)\right)$,
 $\mathrm{D}_{\mathrm{o}, 1, l}^-=\left(\mathcal{F}_0^{\eta_0}\right)^{-1} \left(\mathrm{D}^\zeta_6\cap B^\zeta(0)\right)$,
and
 $\mathrm{D}_{\mathrm{o}, 1, r}^-=\left(\mathcal{F}_0^{\eta_0}\right)^{-1} \left(\mathrm{D}^\zeta_7\cap B^\zeta(0)\right)$.
The local parametrix \( P^{\eta_0}(\lambda; x, 0) \) is constructed as follows:
For $\lambda\in\left(\mathrm{D}_{1, r}^+\cup\mathrm{D}_{\mathrm{o}, 1, r}^+\right)\cap B(\eta_0)$,
\begin{gather}
P^{\eta_0}(\lambda; x, 0)=P^\infty(\lambda; x, 0) A_{0r+}^{\eta_0}\mathrm{e}^{\beta_0\pi\mathrm{i}\sigma_3/4}\left(-\mathrm{i}\sigma_2\right)M^{\mathrm{mb}}(\zeta_{\eta_0}; \beta_0)\mathrm{e}^{\zeta_{\eta_0}\sigma_3}\left(A_{0r+}^{\eta_0}\right)^{-1},
\end{gather}
with $A_{0r+}^{\eta_0}=\left(\mathrm{e}^{\pi\mathrm{i}/4+xp_{0+}(\eta_0)}/f_0d_r^{\eta_0}\right)^{\sigma_3}\sigma_2$;
For $\lambda\in \left(\mathrm{D}_{1, l}^+\cup\mathrm{D}_{\mathrm{o}, 1, l}^+\right)\cap B(\eta_0)$, $P^{\eta_0}$ is formulated as
\begin{gather}
P^{\eta_0}(\lambda; x, 0)=P^\infty (\lambda; x, 0)A_{0l+}^{\eta_0}\mathrm{e}^{-\beta_0\pi\mathrm{i}\sigma_3/4}\left(-\mathrm{i}\sigma_2\right)M^{\mathrm{mb}}(\zeta_{\eta_0}; \beta_0)\mathrm{e}^{\zeta_{\eta_0}\sigma_3}\left(A_{0l+}^{\eta_0}\right)^{-1},
\end{gather}
with $A_{0l+}^{\eta_0}=\left(\mathrm{e}^{\pi\mathrm{i}/4+xp_{0+}(\eta_0)}/f_0d_l^{\eta_0}\right)^{\sigma_3}\sigma_2$;
For $\lambda\in\left(\mathrm{D}_{1, l}^-\cup\mathrm{D}_{\mathrm{o}, 1, l}^-\right)\cap B(\eta_0)$,  $P^{\eta_0}$ is written as
\begin{gather}
P^{\eta_0}(\lambda; x, 0)=P^\infty(\lambda; x, 0) A_{0l-}^{\eta_0}\mathrm{e}^{\beta_0\pi\mathrm{i}\sigma_3/4}M^{\mathrm{mb}}(\zeta_{\eta_0}; \beta_0)\mathrm{e}^{-\zeta_{\eta_0}\sigma_3}\left(A_{0l-}^{\eta_0}\right)^{-1},
\end{gather}
with $A_{0l-}^{\eta_0}=\left(\mathrm{e}^{\pi\mathrm{i}/4+xp_{0-}(\eta_0)}/f_0d_l^{\eta_0}\right)^{\sigma_3}\sigma_2$;
For $\lambda\in \left(\mathrm{D}_{1, r}^-\cup\mathrm{D}_{\mathrm{o}, 1, r}^-\right)\cap B(\eta_0)$, $P^{\eta_0}$ is expressed as
\begin{gather}
P^{\eta_0}(\lambda; x, 0)=P^\infty(\lambda; x, 0) A_{0r-}^{\eta_0} \mathrm{e}^{-\beta_0\pi\mathrm{i}\sigma_3/4}M^{\mathrm{mb}}(\zeta_{\eta_0}; \beta_0)\mathrm{e}^{-\zeta_{\eta_0}\sigma_3}\left(A_{0r-}^{\eta_0}\right)^{-1},
\end{gather}
with $A_{0r-}^{\eta_0}=\left(\mathrm{e}^{\pi\mathrm{i}/4+xp_{0-}(\eta_0)}/f_0d_r^{\eta_0}\right)^{\sigma_3}\sigma_2$,
where $\zeta_{\eta_0}=-x\mathcal{F}^{\eta_0}_0$.
The functions \( d_l^{\eta_0} \) and \( d_r^{\eta_0} \) are defined as:
\[
d_l^{\eta_0} = (\lambda - \eta_1)^{\beta_1/2} (\eta_2 - \lambda)^{\beta_2/2} (\lambda - \eta_0)^{\beta_0/2} \gamma(\lambda)^{1/2},
\]
\[
d_r^{\eta_0} = (\lambda - \eta_1)^{\beta_1/2} (\eta_2 - \lambda)^{\beta_2/2} (\eta_0 - \lambda)^{\beta_0/2} \gamma(\lambda)^{1/2}.
\]

For the second type of generalized reflection coefficient \( r = r_c \), the conformal mapping around \( \lambda = \eta_0 \) is defined by:
\[
\mathcal{F}_0^{\eta_0} = \mp 2 \left( p_0 - p_{0\pm}(\eta_0) \right),
\]
for \(\lambda \in B(\eta_0) \cap \mathbb{C}^\pm\). This mapping has the following correspondences:
 $\mathrm{D}_{1, r}^+\cap B(\eta_0)$ to $\mathrm{D}_1^\zeta\cap B^\zeta(0)$,
$\mathrm{D}_\mathrm{o}^+\cap B(\eta_0)$ to $\mathrm{D}_2^\zeta\cap B^\zeta(0)$,
$\mathrm{D}_{1, l}^+\cap B(\eta_0)$ to $\mathrm{D}_3^\zeta\cap B^\zeta(0)$,
 $\mathrm{D}_{1, l}^-\cap B(\eta_0)$ to $\mathrm{D}_4^\zeta\cap B^\zeta(0)$,
 $\mathrm{D}_\mathrm{o}^-\cap B(\eta_0)$ to $\mathrm{D}_5^\zeta\cap B^\zeta(0)$,
 and
 $\mathrm{D}_{1, r}^-\cap B(\eta_0)$ to $\mathrm{D}_6^\zeta\cap B^\zeta(0)$.
The local parametrix  in the neighborhood of $\lambda=\eta_0$ is constructed as follows:
For $\lambda\in \left(\mathrm{D}_1^+\cup \mathrm{D}_\mathrm{o}^+\right)\cap B(\eta_0)$, the local parametrix is formulated as
\begin{gather}
P^{\eta_0}(\lambda; x, 0)=P^\infty (\lambda; x, 0) A^{\eta_0}_{0+}\left(\zeta_{\eta_0}^{\kappa_0\sigma_3}\mathrm{i}\sigma_2\mathrm{e}^{\kappa_0\pi\mathrm{i}\sigma_3}\right)^{-1}M^{\mathrm{CH}}(\zeta_{\eta_0}; \kappa_0)\mathrm{e}^{\zeta_{\eta_0}\sigma_3/2}\left(A^{\eta_0}_{0+}\right)^{-1};
\end{gather}
For $\lambda\in \left(\mathrm{D}_1^-\cup \mathrm{D}_\mathrm{o}^-\right)\cap B(\eta_0)$, the local parametrix is expressed as
\begin{gather}
P^{\eta_0}(\lambda; x, 0)=P^\infty (\lambda; x, 0)A^{\eta_0}_{0-}\mathrm{e}^{-\kappa_0\pi\mathrm{i}\sigma_3}M^{\mathrm{CH}}(\zeta_{\eta_0};  \kappa_0)\mathrm{e}^{-\zeta_{\eta_0}\sigma_3/2} \left(A^{\eta_0}_{0-}\right)^{-1}.
\end{gather}
Here:
\( A^{\eta_0}_{0\pm} = \left( \mathrm{e}^{\pi \mathrm{i} / 4 + x p_{0\pm}(\eta_0)} / f_0 d^{\eta_0} \right)^{\sigma_3} \sigma_2 \),
 \(\zeta_{\eta_0} = -x \mathcal{F}_0^{\eta_0}\),
 \(\kappa_0 = \frac{\mathrm{i}}{\pi} \log c \in \mathrm{i}\mathbb{R}\), where \( c \) is a constant parameter,
 \( d^{\eta_0} = \left( \lambda - \eta_1 \right)^{\beta_1 / 2} \left( \eta_2 - \lambda \right)^{\beta_2 / 2} c^{1/2} \gamma(\lambda)^{1/2} \).

\begin{figure}[!t]
\centering
\includegraphics[scale=0.35]{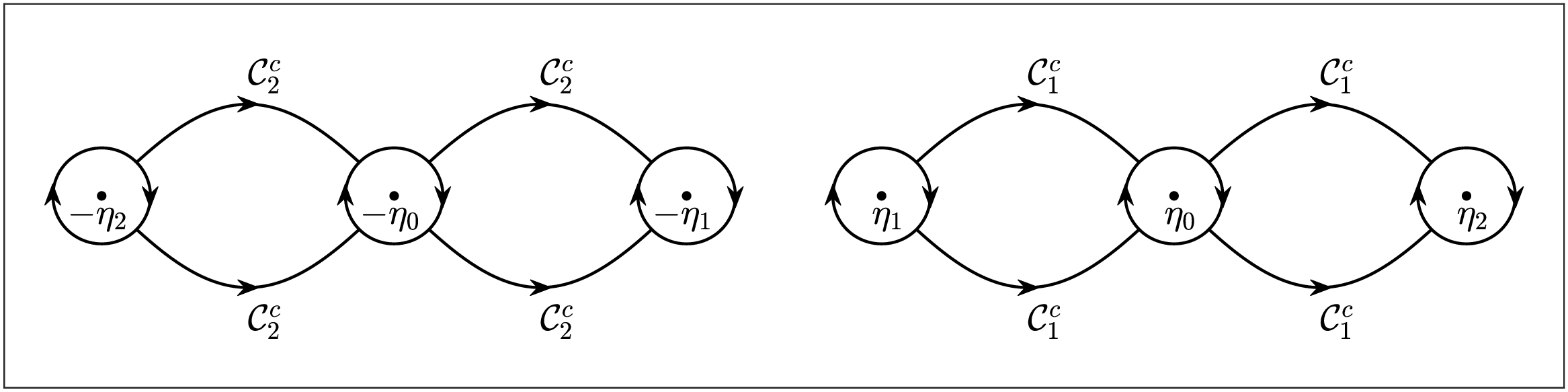}
\caption{Jump contours of the error matrix $E(\lambda; x, 0)$ for $r=r_0, r_c$ with $\beta_0\ne 0$}
\label{ErrorE0}
\end{figure}

\begin{figure}[!t]
\vspace{0.1in}
\centering
\includegraphics[scale=0.35]{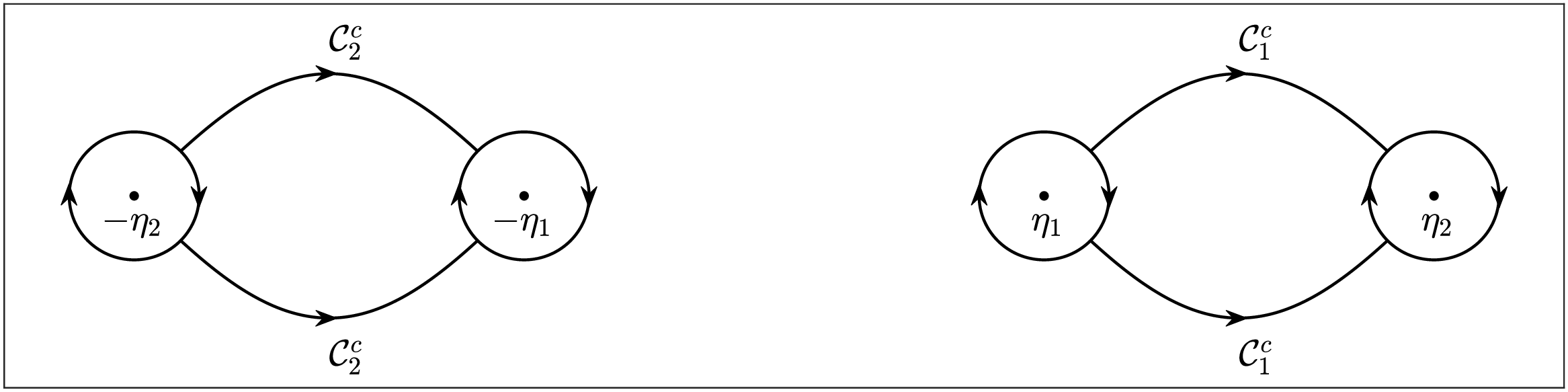}
\caption{Jump contours of the error vector $E(\lambda; x, 0)$ for $r=r_0$ with $\beta_0=0$}
\label{ErrorE01}
\end{figure}

\begin{RH} For the general reflection coefficients $r=r_0, r_c$ with $\beta_0\ne 0$, the error matrix $E(\lambda; x, 0)$ satisfies a Riemann-Hilbert problem, as depicted in Figure \ref{ErrorE0}:

 \begin{itemize}

 \item{} $E(\lambda; x, 0)$ is analytic in $\lambda$ for $\lambda\in\mathbb{C}\setminus \left(B(\pm\eta_2, \pm\eta_0, \pm\eta_1)\cup \mathcal{C}_1^c\cup\mathcal{C}_2^c\right)$;

 \item{}It normalizes to the identity matrix $\mathbb{I}_2$ at infinity;

  \item{} For $\lambda\in B(\pm\eta_2, \pm\eta_0, \pm\eta_1)\cup \mathcal{C}_1^c\cup\mathcal{C}_2^c$, $E(\lambda; x, 0)$ admits continuous boundary values, and they are related by the following jump conditions
\begin{gather}\label{E0-jump}
E_+(\lambda; x, 0)=E_-(\lambda; x, 0)V_0^E,
\end{gather}
where the jump matrices are
\begin{gather}
V_0^E=
\begin{cases}
P^\infty (\lambda; x, 0)\mathcal{U}^{xp_0}_{f_0}\left[-ir^{-1}\right] \left(P^\infty (\lambda; x, 0)\right)^{-1}, &\mathrm{for} \,\,\, \lambda\in\mathcal{C}^c_1,  \\[0.5em]
P^\infty (\lambda; x, 0)\mathcal{L}^{xp_0}_{f_0}\left[-ir^{-1}\right] \left(P^\infty (\lambda; x, 0)\right)^{-1}, &\mathrm{for} \,\,\, \lambda\in\mathcal{C}^c_2,  \\[0.5em]
P^{\eta_j}(\lambda; x, 0)\left(P^\infty(\lambda; x, 0)\right)^{-1}, &\mathrm{for} \,\,\, \lambda\in\partial B\left(\eta_j\right), \,\, j=0,1,2, \\[0.5em]
\sigma_2P^{\eta_j}(-\lambda; x, 0)\left(P^\infty\left(-\lambda; x, 0\right)\right)^{-1}\sigma_2, &\mathrm{for} \,\,\, \lambda\in\partial B\left(-\eta_j\right),\,\, j=0,1,2.
\end{cases}
\end{gather}
\end{itemize}
\end{RH}
For the special case of the first type of generalized reflection coefficient $r=r_0$ with $\beta_0=0$, a slightly different Riemann-Hilbert problem arises, as depicted in Figure \ref{ErrorE01}.
$E(\lambda; x, 0)$ is analytic in $\lambda$ for $\lambda\in\mathbb{C}\setminus \left(B(\pm\eta_2, \pm\eta_1)\cup \mathcal{C}_1^c\cup\mathcal{C}_2^c\right)$, and normalizes to the identity matrix $\mathbb{I}_2$ at infinity. For $\lambda\in B(\pm\eta_2, \pm\eta_1)\cup \mathcal{C}_1^c\cup\mathcal{C}_2^c$, $E(\lambda; x, 0)$ admits continuous boundary values, and they are related by \eqref{E0-jump} with jump matrices
\begin{gather}
V_0^E=
\begin{cases}
P^\infty (\lambda; x, 0)\mathcal{U}^{xp_0}_{f_0}\left[-ir^{-1}\right] \left(P^\infty (\lambda; x, 0)\right)^{-1}, &\mathrm{for} \,\,\, \lambda\in\mathcal{C}^c_1,  \\[0.5em]
P^\infty (\lambda; x, 0)\mathcal{L}^{xp_0}_{f_0}\left[-ir^{-1}\right] \left(P^\infty (\lambda; x, 0)\right)^{-1}, &\mathrm{for} \,\,\, \lambda\in\mathcal{C}^c_2,  \\[0.5em]
P^{\eta_j}(\lambda; x, 0)\left(P^\infty(\lambda; x, 0)\right)^{-1}, &\mathrm{for} \,\,\, \lambda\in\partial B\left(\eta_j\right),  \,\,j=1,2, \\[0.5em]
\sigma_2P^{\eta_j}(-\lambda; x, 0)\left(P^\infty\left(-\lambda; x, 0\right)\right)^{-1}\sigma_2, &\mathrm{for} \,\,\, \lambda\in\partial B\left(-\eta_j\right),\,\, j=0,2.
\end{cases}
\end{gather}
Near each self-intersection point, $E$ exhibits the following local behavior:
\begin{gather}
E(\lambda; x, 0)=\mathcal{O}
\begin{pmatrix}
1 & 1  \\
1 & 1
\end{pmatrix},
\end{gather}
as $\lambda$ tends to each self-intersection  point.

\begin{proposition}[Small norm estimate]
For the parameters \(\beta_0, \beta_1, \beta_2 > -1\), as \( x \to -\infty \), the jump matrices \( V_0^E \) satisfy the following small norm estimates:
\begin{gather}\begin{gathered}
\left\|V_0^E-\mathbb{I}_2\right\|_{L^1\cap L^2\cap L^\infty \left(\mathcal{C}_1^c\cup \mathcal{C}_2^c\right)}=\mathcal{O}\left(\mathrm{e}^{-\mu_0\left|x\right|}\right),   \\[0.5em]
\left\|V_0^E-\mathbb{I}_2\right\|_{L^1\cap L^2\cap L^\infty \left(B\left(\pm\eta_2, \pm\eta_1, \pm\eta_0\right)\right)}=\mathcal{O}\left(\left|x\right|^{-1}\right),
\end{gathered}
\end{gather}
which imply:
\begin{gather}\label{E0-Estimate}
E(0; x, 0)=\mathbb{I}_2+\mathcal{O}\left(\left|x\right|^{-1}\right), \quad E^{[1]}(x, 0)=\mathcal{O}\left(\left|x\right|^{-1}\right), \quad x\to-\infty,
\end{gather}
where $E^{[1]}(x, 0)=\lim_{\lambda\to \infty}\lambda \left(E(\lambda; x, 0)-\mathbb{I}_2\right)$.
\end{proposition}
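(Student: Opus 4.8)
The plan is to read $E(\lambda;x,0)=S(\lambda;x,0)P(\lambda;x,0)^{-1}$ as the solution of a small-norm Riemann--Hilbert problem on $\Sigma^E:=\mathcal{C}_1^c\cup\mathcal{C}_2^c\cup\partial B(\pm\eta_2,\pm\eta_1,\pm\eta_0)$ (with the circle $\partial B(\pm\eta_0)$ omitted in the case $r=r_0,\ \beta_0=0$), and to bound $W:=V_0^E-\mathbb{I}_2$ contour by contour. Because every lens edge and every circle sits at a fixed positive distance from the branch points and from $\lambda=0$, the conjugating scalars $f_0$, the unimodular factors $\mathrm{e}^{xp_0}$ and $\mathrm{e}^{\mp x\Omega_1/2}$ (recall $\Omega_1\in\mathrm{i}\mathbb{R}$), and the entries of $P^\infty$ and of $A_0^{\eta_j}$ are all uniformly bounded with bounded inverses on $\Sigma^E$; the only genuine source of smallness is the explicit $x$-dependence in the exponentials and in the parametrix tails. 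On the lens edges $\mathcal{C}_1^c\cup\mathcal{C}_2^c$ the jump $V_0^E=P^\infty\,\mathcal{U}^{xp_0}_{f_0}[-\mathrm{i}r^{-1}]\,(P^\infty)^{-1}$ (and its lower-triangular analogue on $\mathcal{C}_2^c$) has a single nontrivial entry proportional to $\mathrm{e}^{2xp_0}$. Since the lenses are opened exactly so that $\Re(p_0)$ carries the decay-producing sign there, one has $\Re(2xp_0)\le-\mu_0|x|$ uniformly as $x\to-\infty$ for some $\mu_0>0$, and $r^{-1},f_0$ are bounded away from $\eta_0,\eta_1,\eta_2$, whence $\|W\|_{L^1\cap L^2\cap L^\infty(\mathcal{C}_1^c\cup\mathcal{C}_2^c)}=\mathcal{O}(\mathrm{e}^{-\mu_0|x|})$.

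On the circles $V_0^E=P^{\eta_j}(P^\infty)^{-1}$, with the $\sigma_2$-symmetry reducing $-\eta_j$ to $\eta_j$. Each $P^{\eta_j}$ was built as $P^\infty$ times the appropriate model solution dressed by $A_0^{\eta_j}$, so the leading terms cancel identically and $W$ is controlled by the subleading term of the model at large argument. At the endpoints $\eta_1,\eta_2$ the modified Bessel parametrix $M^{\mathrm{mB}}$ contributes its $\mathcal{O}(\zeta^{-1/2})$ tail with $\zeta_{\eta_j}=x^2\mathcal{F}_0^{\eta_j}\sim x^2$, so $\zeta_{\eta_j}^{-1/2}\sim|x|^{-1}$; at $\eta_0$ the parametrices $M^{\mathrm{mb}}$ (for $r_0$) and $M^{\mathrm{CH}}$ (for $r_c$) contribute their $\mathcal{O}(\zeta^{-1})$ tails with $\zeta_{\eta_0}=-x\mathcal{F}_0^{\eta_0}\sim|x|$, so $\zeta_{\eta_0}^{-1}\sim|x|^{-1}$. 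Conjugating these tails by the bounded matrices $A_0^{\eta_j}$ and $P^\infty$ preserves the order, giving $\|W\|_{L^1\cap L^2\cap L^\infty(B(\pm\eta_2,\pm\eta_1,\pm\eta_0))}=\mathcal{O}(|x|^{-1})$; this circle contribution is what fixes the global rate.

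Combining the two bounds yields $\|W\|_{L^1\cap L^2\cap L^\infty(\Sigma^E)}=\mathcal{O}(|x|^{-1})$. Standard small-norm theory then applies: for $|x|$ large the Cauchy operator $C_W$ has operator norm $o(1)$ on $L^2(\Sigma^E)$, so $(1-C_W)^{-1}$ exists by Neumann series, the density $\mu=(1-C_W)^{-1}C_W\mathbb{I}_2$ satisfies $\|\mu\|_{L^2}\lesssim\|W\|_{L^2}$, and
\[
E(\lambda;x,0)=\mathbb{I}_2+\frac{1}{2\pi\mathrm{i}}\int_{\Sigma^E}\frac{(\mathbb{I}_2+\mu(s))W(s)}{s-\lambda}\,\mathrm{d}s .
\]
Evaluating at $\lambda=0$, which lies at a fixed distance from $\Sigma^E$, and extracting the $\mathcal{O}(\lambda^{-1})$ coefficient $E^{[1]}(x,0)=-\tfrac{1}{2\pi\mathrm{i}}\int_{\Sigma^E}(\mathbb{I}_2+\mu)W\,\mathrm{d}s$, both quantities are bounded by $\|W\|_{L^1}+\|\mu\|_{L^2}\|W\|_{L^2}=\mathcal{O}(|x|^{-1})$, which is precisely \eqref{E0-Estimate}.

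The step I expect to be the main obstacle is the circle estimate at $\eta_0$, uniformly over the full singular range $\beta_0\in(-1,0)$. There the conformal radius scales only linearly in $x$ (so $\zeta_{\eta_0}\sim|x|$, in contrast with $\zeta_{\eta_{1,2}}\sim x^2$ at the endpoints), which means one must confirm that the $\mathcal{O}(\zeta^{-1})$ expansions of $M^{\mathrm{mb}}$ and $M^{\mathrm{CH}}$ hold \emph{uniformly} on $\partial B(\eta_0)$ and that the dressing factors $A_0^{\eta_0}$, $f_0$, $\mathrm{e}^{xp_0}$ stay bounded with bounded inverses despite the singular index $\beta_0$ and the discontinuity parameter $\kappa_0$. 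It is this linear-in-$x$ scaling at $\eta_0$ that prevents a faster decay and pins the overall error at $\mathcal{O}(|x|^{-1})$.
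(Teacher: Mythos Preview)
Your argument is correct and follows exactly the standard small-norm machinery the paper invokes; since the paper's own proof reads in full ``The proof follows standard arguments and is omitted here for brevity,'' your write-up is in fact more detailed than what appears there. The only minor imprecision is calling $\mathrm{e}^{xp_0}$ ``unimodular'' in general---what matters (and what you use) is that the specific boundary values $p_{0\pm}(\eta_0)\in\mathrm{i}\mathbb{R}$ and $\Omega_1\in\mathrm{i}\mathbb{R}$ make the conjugating factors in $A_0^{\eta_j}$ unimodular, while the full $p_0$ dependence is absorbed into the rescaled variables $\zeta_{\eta_j}$; also note the proposition is stated for fixed $\beta_0>-1$, so no uniformity in $\beta_0$ is needed.
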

\begin{proof}
The proof follows standard arguments and is omitted here for brevity.
\end{proof}

\subsection{Large-$x$ asymptotics of the initial values $u(x, 0)$ as $x\to-\infty$}

It follows from \eqref{potential-formula} that the initial value $u(x, 0)$ can be recovered from $Y(\lambda; x, 0)$
\begin{gather}
\begin{aligned}
&\frac{\partial \,u(x, 0)}{\partial x}=4 \lim_{\lambda\to\infty} \lambda Y_{1, 2}(\lambda; x, 0), \\[0.5em]
&\cos u(x, 0)=1-2Y_{1, 2}(0; x, 0)^2, \\[0.5em]
&\sin u(x, 0)=-2Y_{1, 1}(0; x, 0)Y_{1, 2}(0; x, 0).
\end{aligned}
\end{gather}
Recalling the transformations $Y(\lambda; x, 0)\mapsto T(\lambda; x, 0)$ in \eqref{Conjugation-0},
$T(\lambda; x, 0)\mapsto S(\lambda; x, 0)$ in \eqref{T0-To-S0}, and
$S(\lambda; x, 0)\mapsto E(\lambda; x, 0)$ in \eqref{E0},
together with the estimate \eqref{E0-Estimate},
we obtain that
\begin{gather}
\begin{aligned}
&\frac{\partial \,u(x, 0)}{\partial x}=4 \lim_{\lambda\to\infty} \lambda P^\infty_{1, 2}(\lambda; x, 0)+\mathcal{O}\left(\left|x\right|^{-1}\right), \\[0.5em]
&\cos u(x, 0)=1-2\left(P^\infty_{\pm1, 2}(0; x, 0)\mathrm{e}^{xg_{0\pm}\left(0\right)}f_{0\pm}^{-1}\left(0\right)\right)^2+\mathcal{O}\left(\left|x\right|^{-1}\right), \\[0.5em]
&\sin u(x, 0)=-2P^\infty_{\pm 1, 1}(0; x, 0)P^\infty_{\pm 1, 2}(0; x, 0)+\mathcal{O}\left(\left|x\right|^{-1}\right),
\end{aligned}
\end{gather}
where two ways, $\pm$, leads to the same result. The fact is clear by noting the jump condition for \( P^\infty(\lambda; x, 0) \):
\begin{gather}
P^\infty_+(\lambda; x, 0)=P^\infty_-(\lambda; x, 0)\mathrm{e}^{\Delta^0_1\sigma_3}, \quad \mathrm{for}\,\,\, \lambda\in\left(-\eta_1, \eta_1\right),
\end{gather}
and using \( \mathrm{e}^{x g_{0\pm}(0)} f_{0\pm}^{-1} = \mathrm{e}^{\pm \Delta^0_1 / 2} \).
A straightforward calculation leads to \eqref{large-x-left} as stated in Theorem \ref{large-x}.

\section{Long-time asymptotics for the sine-Gordon kink-soliton gas $u(x, t)$}
In this section, we explore the long-time asymptotic behavior of the kink-soliton gas as \( t \to +\infty \). For the case where \( \xi > -\eta_2^{-2} \), the asymptotics follow directly from a standard small norm argument, leading to the result in \eqref{large-right}.

When \( \xi < -\eta_2^{-2} \), we consider two distinct cases: first, the cases with reflection coefficients \( r = r_0 \) and \( r_c \), where \( \beta_0 \neq 0 \), and second, the case with \( r = r_0 \) and \( \beta_0 = 0 \). Given that the derivation for \( \beta_0 = 0 \) is straightforward and follows similarly from the more general case \( \beta_0 \neq 0 \), we will omit the details for the special case and leave it as an exercise for the reader.

The analysis proceeds by focusing on the long-time asymptotic behavior within the intervals \( \xi \in \left( -\infty, \xi_\mathrm{crit} \right) \cup \left( \xi_\mathrm{crit}, \xi_0 \right) \cup \left( \xi_0, -\eta_2^{-2} \right) \).
To prepare for the asymptotic analysis, we apply a sequence of transformations:
\[
Y(\lambda; x, t) \mapsto T(\lambda; x, t) \mapsto S(\lambda; x, t) \mapsto E(\lambda; x, t),
\]
ensuring that \( E(\lambda; x, t) \) normalizes to the identity matrix \( \mathbb{I}_2 \) as \( \lambda \to \infty \) and that its jump matrices decay exponentially and uniformly towards \( \mathbb{I}_2 \).

Following the Deift-Zhou steepest descent method, we decompose the matrices to facilitate the contour deformation. Specifically, we utilize the triangular decompositions:
\begin{gather}
\begin{gathered}
\mathcal{L}^{t\theta}\left[\mathrm{i}r\right]=\mathcal{U}^{t\theta}\left[-\mathrm{i}r^{-1}\right]\left(\mathcal{L}^{t\theta}\left[\mathrm{i}r\right]+\mathcal{U}^{t\theta}\left[\mathrm{i}r^{-1}\right]-2\,\mathbb{I}_2\right)\mathcal{U}^{t\theta}\left[-\mathrm{i}r^{-1}\right],\\[0.5em]
\mathcal{U}^{t\theta}\left[\mathrm{i}r\right]=\mathcal{L}^{t\theta}\left[-\mathrm{i}r^{-1}\right]\left(\mathcal{L}^{t\theta}\left[\mathrm{i}r^{-1}\right]+\mathcal{U}^{t\theta}\left[\mathrm{i}r\right]-2\,\mathbb{I}_2\right)\mathcal{L}^{t\theta}\left[-\mathrm{i}r^{-1}\right].
\end{gathered}
\end{gather}

Due to the sign of \( \Re(\theta) \), as shown in Figure \ref{Sign}, exponential decay is not achieved on certain lenses. To address this, we first perform a conjugation using a suitable \( g \)-function that modifies the problem such that decay is maintained across the lenses. Additionally, we introduce a scalar \( f \)-function to further simplify the jump matrices to constants.

The rigorous construction of the local parametrices for the transformed problem depends on the local behavior near specific points, as discussed in prior sections. While these details are crucial for the complete analysis, we assume that the necessary local properties of \( T \), \( S \), and \( E \) have already been established. Therefore, we focus on the main asymptotic results without re-examining the local asymptotics in depth.

\subsection{Riemann-Hilbert problem for $T(\lambda; x, t)$}

The \( f \)-function \( f = f(\lambda; \xi) \) is analytic for \( \lambda \in \mathbb{C} \setminus [-\eta_2, \eta_2] \) and asymptotically normalizes to \( 1 \) as \( \lambda \to \infty \). The continuous boundary values \( f_\pm \) on the real line satisfy specific jump conditions, which depend on the parameter \( \xi \):
\begin{gather}
\left\{\begin{aligned}
&f_+(\lambda; \xi)f_-(\lambda; \xi) =r, && \mathrm{for}\,\,\, \lambda\in\left(\alpha, \eta_2\right),   \\[0.5em]
&f_+(\lambda; \xi)f_- (\lambda; \xi)=r^{-1},&& \mathrm{for}\,\,\, \lambda\in\left(-\eta_2, -\alpha\right),  \\[0.5em]
&f_+^{-1}(\lambda; \xi)f_-(\lambda; \xi)=\mathrm{e}^{\Omega^\alpha\phi^\alpha}, &&\mathrm{for}\,\,\, \lambda\in\left(-\alpha, \alpha\right),
\end{aligned}\right.
\end{gather}
if $\xi\in\left(\xi_0, -\eta_2^{-2}\right)$; while $\xi<\xi_0$, the continuous boundary values are related by
\begin{gather}
\left\{\begin{aligned}
&f_+(\lambda; \xi)f_-(\lambda; \xi) =r, && \mathrm{for}\,\,\, \lambda\in\left(\alpha, \eta_0\right)\cup\left(\eta_0, \eta_2\right),   \\[0.5em]
&f_+(\lambda; \xi)f_-(\lambda; \xi) =r^{-1},&& \mathrm{for}\,\,\, \lambda\in\left(-\eta_2, -\eta_0\right)\cup\left(-\eta_0, -\alpha\right),  \\[0.5em]
&f_+^{-1}(\lambda; \xi)f_-(\lambda; \xi)=\mathrm{e}^{\Omega^\alpha\phi^\alpha}, &&\mathrm{for}\,\,\, \lambda\in\left(-\alpha, \alpha\right).
\end{aligned}\right.
\end{gather}
To construct \( f(\lambda; \xi) \), we divide by \( R(\lambda; \xi) \) and take the logarithmic form. Using Plemelj’s formula, we have:
\begin{gather}
f(\lambda; \xi)=\exp\left\{\frac{R(\lambda; \xi)}{\pi \mathrm{i}}\left(\int_{\alpha}^{\eta_2}\frac{\log r(s)}{R_{+}\left(s; \xi\right)}\frac{\lambda}{s^2-\lambda^2}\mathrm{d}s
-\int_{0}^{\alpha}\frac{\Omega^\alpha\phi^\alpha}{R(s; \xi)}\frac{\lambda}{s^2-\lambda^2}\mathrm{d}s\right)\right\}.
\end{gather}
With \( f(\lambda; \xi) \) defined, we can proceed to conjugate \( Y(\lambda; x, t) \) to \( T(\lambda; x, t) \). This conjugation is achieved by:
\begin{gather}\label{Conjugation}
T(\lambda; x, t)=Y(\lambda; x, t)\mathrm{e}^{tg(\lambda; \xi)\sigma_3}f(\lambda; \xi)^{-\sigma_3}.
\end{gather}

\begin{RH}
$T(\lambda; x, t)$ is a $2\times 2$ matrix-valued function  that satisfies a Riemann-Hilbert problem.

\begin{itemize}
\item{} $T(\lambda; x, t)$ is analytic in $\lambda$ for $\lambda\in\mathbb{C}\setminus\left(\left[-\eta_2, \eta_2\right]\right)$;

\item{}It normalizes to the identity matrix $\mathbb{I}_2$ as $\lambda\to\infty$;

\item{} For  $\lambda\in\left(-\eta_2, \eta_2\right)\setminus\left\{\pm\eta_1, \pm\eta_0, \pm\alpha\right\}$, $T(\lambda; x, t)$ admits continuous boundary values, and they are related by the following jump conditions
\begin{gather}
T_+=T_-
\begin{cases}
\mathcal{U}^{tp_-}_{f_-}\left[-\mathrm{i}r^{-1}\right]\left(\mathrm{i}\sigma_1\right)\mathcal{U}^{tp_+}_{f_+}\left[-\mathrm{i}r^{-1}\right], &\mathrm{for}\,\,\, \lambda\in\left(\alpha, \eta_2\right),\\[0.5em]
\mathcal{L}^{tp_-}_{f_-}\left[-\mathrm{i}r^{-1}\right]\left(\mathrm{i}\sigma_1\right)\mathcal{L}^{tp_+}_{f_+}\left[-\mathrm{i}r^{-1}\right], &\mathrm{for}\,\,\, \lambda\in\left(-\eta_2, -\alpha\right),\\[0.5em]
f_-^{\sigma_3}\mathrm{e}^{tp_-\sigma_3}\mathcal{L}\left[\mathrm{i}r\right]\mathrm{e}^{-tp_+\sigma_3}f_+^{-\sigma_3}, &\mathrm{for}\,\,\, \lambda\in\left(\eta_1, \eta_0\right)\cup\left(\eta_0, \alpha\right),\\[0.5em]
f_-^{\sigma_3}\mathrm{e}^{tp_-\sigma_3}\mathcal{U}\left[\mathrm{i}r\right]\mathrm{e}^{-tp_+\sigma_3}f_+^{-\sigma_3}, &\mathrm{for}\,\,\, \lambda\in\left(-\alpha, -\eta_0\right)\cup\left(-\eta_0, -\eta_1\right),\\[0.5em]
\mathrm{e}^{\Delta^\alpha\sigma_3}, &\mathrm{for}\,\,\, \lambda\in\left(-\eta_1, \eta_1\right).
\end{cases}
\end{gather}
if $\xi\in\left(\xi_0, -\eta_2^{-2}\right)$; if $\xi\in\left(\xi_\mathrm{crit}, \xi_0\right)$, the continuous boundary values are related by
\begin{gather}
T_+=T_-
\begin{cases}
\mathcal{U}^{tp_-}_{f_-}\left[-\mathrm{i}r^{-1}\right]\left(\mathrm{i}\sigma_1\right)\mathcal{U}^{tp_+}_{f_+}\left[-\mathrm{i}r^{-1}\right], &\mathrm{for}\,\,\, \lambda\in\left(\alpha, \eta_0\right)\cup\left(\eta_0, \eta_2\right),\\[0.5em]
\mathcal{L}^{tp_-}_{f_-}\left[-\mathrm{i}r^{-1}\right]\left(\mathrm{i}\sigma_1\right)\mathcal{L}^{tp_+}_{f_+}\left[-\mathrm{i}r^{-1}\right], &\mathrm{for}\,\,\, \lambda\in\left(-\eta_2, -\eta_0\right)\cup\left(-\eta_0, -\alpha\right),\\[0.5em]
f_-^{\sigma_3}\mathrm{e}^{tp_-\sigma_3}\mathcal{L}\left[\mathrm{i}r\right]\mathrm{e}^{-tp_+\sigma_3}f_+^{-\sigma_3}, &\mathrm{for}\,\,\, \lambda\in\left(\eta_1, \alpha\right),\\[0.5em]
f_-^{\sigma_3}\mathrm{e}^{tp_-\sigma_3}\mathcal{U}\left[\mathrm{i}r\right]\mathrm{e}^{-tp_+\sigma_3}f_+^{-\sigma_3}, &\mathrm{for}\,\,\, \lambda\in\left(-\alpha,  -\eta_1\right),\\[0.5em]
\mathrm{e}^{\Delta^\alpha\sigma_3}, &\mathrm{for}\,\,\, \lambda\in\left(-\eta_1, \eta_1\right),
\end{cases}
\end{gather}
and if $\xi<\xi_\mathrm{crit}$, the continuous boundary values are related by
\begin{gather}
T_+=T_-
\begin{cases}
\mathcal{U}^{tp_-}_{f_-}\left[-\mathrm{i}r^{-1}\right]\left(\mathrm{i}\sigma_1\right)\mathcal{U}^{tp_+}_{f_+}\left[-\mathrm{i}r^{-1}\right], &\mathrm{for}\,\,\, \lambda\in\left(\eta_1, \eta_0\right)\cup\left(\eta_0, \eta_2\right),\\[0.5em]
\mathcal{L}^{tp_-}_{f_-}\left[-\mathrm{i}r^{-1}\right]\left(\mathrm{i}\sigma_1\right)\mathcal{L}^{tp_+}_{f_+}\left[-\mathrm{i}r^{-1}\right], &\mathrm{for}\,\,\, \lambda\in\left(-\eta_2, -\eta_0\right)\cup\left(-\eta_0, -\eta_1\right),\\[0.5em]
\mathrm{e}^{\Delta_1\sigma_3}, &\mathrm{for}\,\,\, \lambda\in\left(-\eta_1, \eta_1\right).
\end{cases}
\end{gather}
\end{itemize}
\end{RH}

\subsection{Riemann-Hilbert problem for $S(\lambda; x, t)$}
The transformation of the Riemann-Hilbert problem \( Y(\lambda; x, t) \) into a new Riemann-Hilbert problem \( S(\lambda; x, t) \) is achieved by performing contour deformations that involve the introduction of lens-shaped regions around specific intervals of the real line. The specifics of this transformation vary based on the region of the parameter \( \xi \), with three cases to consider: \( \xi \in (\xi_0, -\eta_2^{-2}) \), \( \xi \in (\xi_\mathrm{crit}, \xi_0) \), and \( \xi < \xi_\mathrm{crit} \).

For this region \(\xi \in (\xi_0, -\eta_2^{-2})\), the contour deformation involves the intervals \( (\alpha, \eta_2) \) and \( (-\eta_2, -\alpha) \). Lenses are opened symmetrically above and below each interval:
Domains \( \mathrm{D}^+_1 \) and \( \mathrm{D}^-_1 \) are lenses above and below \( (\alpha, \eta_2) \), respectively.
Domains \( \mathrm{D}^+_2 \) and \( \mathrm{D}^-_2 \) are lenses above and below \( (-\eta_2, -\alpha) \), respectively.
We define \( \mathrm{D}_1 = \mathrm{D}^+_1 \cup \mathrm{D}^-_1 \) and \( \mathrm{D}_2 = \mathrm{D}^+_2 \cup \mathrm{D}^-_2 \) as the full lens-shaped regions encompassing each interval.

In this interval \(\xi \in (\xi_\mathrm{crit}, \xi_0)\), the contour deformation is more intricate due to the presence of the intermediate point \( \eta_0 \). The intervals are divided into segments surrounding \( \eta_0 \):

Domains \( \mathrm{D}^+_{1, l} \) and \( \mathrm{D}^-_{1, l} \): Lenses are positioned above and below \( (\alpha, \eta_0) \).
Domains \( \mathrm{D}^+_{1, r} \) and \( \mathrm{D}^-_{1, r} \): Lenses are positioned above and below \( (\eta_0, \eta_2) \).
Domains \( \mathrm{D}^+_{2, l} \) and \( \mathrm{D}^-_{2, l} \): Lenses are positioned above and below \( (-\eta_2, -\eta_0) \).
Domains \( \mathrm{D}^+_{2, r} \) and \( \mathrm{D}^-_{2, r} \): Lenses are positioned above and below \( (-\eta_0, -\alpha) \).

The definitions \( \mathrm{D}_{j, l} = \mathrm{D}_{j, l}^+ \cup \mathrm{D}_{j, l}^- \) and \( \mathrm{D}_{j, r} = \mathrm{D}_{j, r}^+ \cup \mathrm{D}_{j, r}^- \) are used to group the left and right segments of the lenses for each interval \( j = 1, 2 \). The complete lens regions are then \( \mathrm{D}_j = \mathrm{D}_{j, l} \cup \mathrm{D}_{j, r} \), and \( \mathrm{D}_2 = \mathrm{D}_{2, l} \cup \mathrm{D}_{2, r} \).

In this scenario \(\xi < \xi_\mathrm{crit}\), the lens contours are as described in Figure \ref{Deformation-0}, which likely involves a standard lens structure similar to those seen in the previous cases.

Across all cases, we define the region \( \mathrm{D}_\mathrm{o} \) as the area outside all lenses:
\[
\mathrm{D}_\mathrm{o} = \mathbb{C} \setminus \overline{\mathrm{D}_1 \cup \mathrm{D}_2 \cup (-\eta_1, \eta_1)},
\]
with further subdivision as:
\[
\mathrm{D}_\mathrm{o} = \mathrm{D}_\mathrm{o}^+ \cup \mathrm{D}_\mathrm{o}^- \cup (\eta_2, +\infty) \cup (-\infty, -\eta_2),
\]
where \( \mathrm{D}_\mathrm{o}^+ \) and \( \mathrm{D}_\mathrm{o}^- \) refer to portions in the upper and lower half-planes, respectively. This comprehensive contour structure allows for the introduction of an appropriate \( g \)-function, which, along with further transformations, will facilitate the simplification of the Riemann-Hilbert problem and provide the necessary exponential decay properties in the respective regions.

\begin{figure}[!t]
\centering
\includegraphics[scale=0.35]{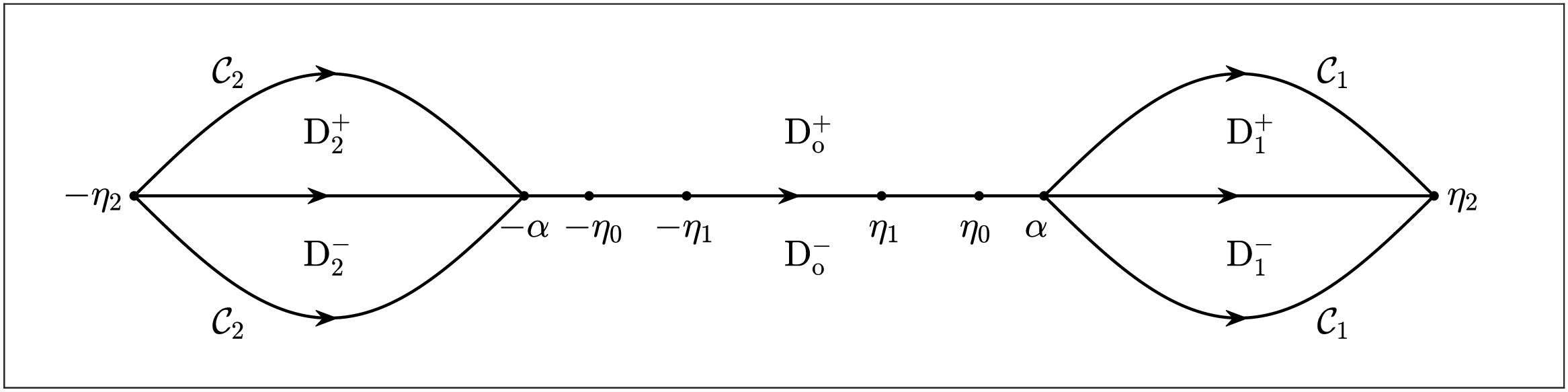}
\caption{Contour deformation in the region $\xi\in\left(\xi_0, -\eta_2^{-2}\right)$}
\label{Deformation-2}
\end{figure}

\begin{figure}[!t]
\vspace{0.1in}
\centering
\includegraphics[scale=0.35]{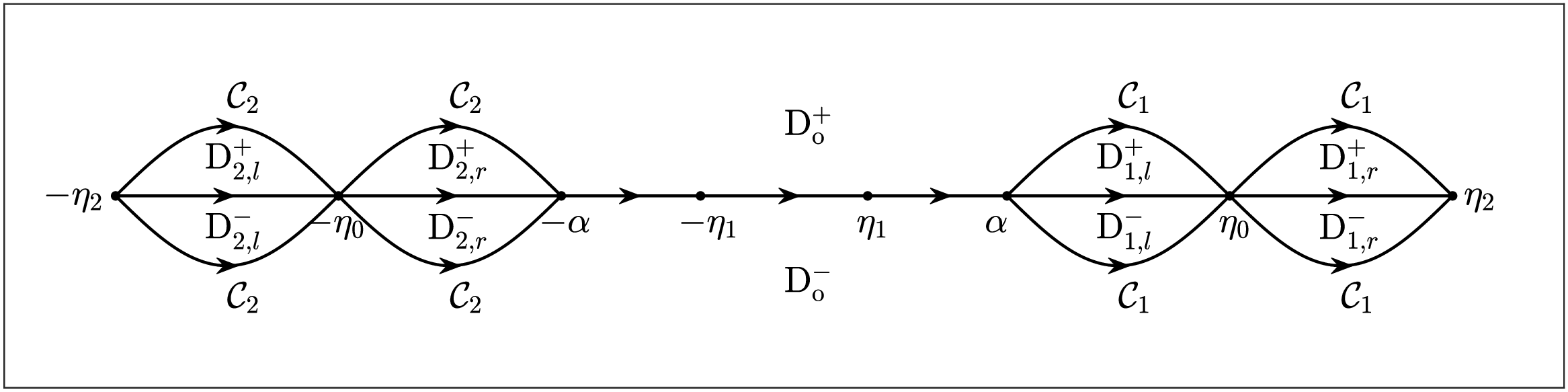}
\caption{Contour deformation in the region $\xi\in\left(\xi_\mathrm{crit}, \xi_0\right)$ }
\label{Deformation-3}
\end{figure}

The $2\times 2$ matrix-valued function $S(\lambda; x, t)$ is defined as follows:
\begin{gather}\label{T-To-S}
S(\lambda; x, t)=\begin{cases}
T(\lambda; x, t)\mathcal{U}^{tp}_{f}\left[-\mathrm{i}r^{-1}\right]^{\mp 1}, & \mathrm{for}\,\,\, \lambda\in\mathrm{D}_1^\pm,  \\[0.5em]
T(\lambda; x, t)\mathcal{L}^{tp}_{f}\left[-\mathrm{i}r^{-1}\right]^{\mp 1}, & \mathrm{for}\,\,\,  \lambda\in\mathrm{D}_2^\pm,  \\[0.5em]
T(\lambda; x, t), & \mathrm{for}\,\,\,  \lambda\in\mathrm{D}_\mathrm{o}.
\end{cases}
\end{gather}

\begin{RH} The function $S(\lambda; x, t)$ solves the following Riemann-Hilbert problem:

\begin{itemize}

\item{} $S(\lambda; x, t)$ is analytic in $\lambda$ for $\lambda\in\mathbb{C}\setminus\left(\left[-\eta_2, \eta_2\right]\cup\mathcal{C}_1\cup\mathcal{C}_2\right)$;

\item{} It normalizes to the identity matrix $\mathbb{I}_2$
as $\lambda\to\infty$;

\item{} For  $\lambda\in\left(-\eta_2, \eta_2\right)\cup\mathcal{C}_1\cup\mathcal{C}_2\setminus\left\{\pm\eta_1, \pm\eta_0\right\}$, $S(\lambda; x, 0)$ admits continuous boundary values  denoted by $S_+(\lambda; x, t)$ and $S_-(\lambda; x, t)$, respectively. These values are related by the following jump conditions: if $\xi\in\left(\xi_0, -\eta_2^{-2}\right)$, the jump condition is
\begin{gather}
S_+=S_-
\begin{cases}
\mathcal{U}^{tp}_{f}\left[-\mathrm{i}r^{-1}\right], &\mathrm{for}\,\,\, \lambda\in\mathcal{C}_1\setminus\left\{\alpha, \eta_2\right\},\\[0.5em]
\mathcal{L}^{tp}_{f}\left[-\mathrm{i}r^{-1}\right], &\mathrm{for}\,\,\, \lambda\in\mathcal{C}_2\setminus\left\{-\alpha, -\eta_2\right\},\\[0.5em]
\mathrm{i}\sigma_1, &\mathrm{for}\,\,\, \lambda\in\left(\alpha, \eta_2\right)\cup \left(-\eta_2, -\alpha\right),\\[0.5em]
f_-^{\sigma_3}\mathrm{e}^{tp_-\sigma_3}\mathcal{L}\left[-\mathrm{i}r\right]\mathrm{e}^{-tp_+\sigma_3}f_+^{-\sigma_3}, &\mathrm{for}\,\,\, \lambda\in\left(\eta_1, \eta_0\right)\cup\left(\eta_0, \alpha\right),\\[0.5em]
f_-^{\sigma_3}\mathrm{e}^{tp_-\sigma_3}\mathcal{U}\left[\mathrm{i}r\right]\mathrm{e}^{-tp_+\sigma_3}f_+^{-\sigma_3}, &\mathrm{for}\,\,\, \lambda\in\left(-\alpha, -\eta_0\right)\cup\left(-\eta_0, -\eta_1\right),\\[0.5em]
\mathrm{e}^{\Delta^\alpha\sigma_3}, &\mathrm{for}\,\,\, \lambda\in\left(-\eta_1, \eta_1\right);
\end{cases}
\end{gather}
if $\xi\in\left(\xi_\mathrm{crit}, \xi_0\right)$, the jump condition is
\begin{gather}
S_+=S_-
\begin{cases}
\mathcal{U}^{tp}_{f}\left[-\mathrm{i}r^{-1}\right], &\mathrm{for}\,\,\, \lambda\in\mathcal{C}_1\setminus\left\{\eta_0, \alpha, \eta_2\right\},\\[0.5em]
\mathcal{L}^{tp}_{f}\left[-\mathrm{i}r^{-1}\right], &\mathrm{for}\,\,\, \lambda\in\mathcal{C}_2\setminus\left\{-\eta_0, -\alpha, -\eta_2\right\},\\[0.5em]
\mathrm{i}\sigma_1, &\mathrm{for}\,\,\, \lambda\in\left(\alpha, \eta_2\right)\cup\left(-\eta_2, -\alpha\right)\setminus\{\pm\eta_0\},\\[0.5em]
f_-^{\sigma_3}\mathrm{e}^{tp_-\sigma_3}\mathcal{L}\left[-\mathrm{i}r\right]\mathrm{e}^{-tp_+\sigma_3}f_+^{-\sigma_3}, &\mathrm{for}\,\,\, \lambda\in\left(\eta_1, \alpha\right),\\[0.5em]
f_-^{\sigma_3}\mathrm{e}^{tp_-\sigma_3}\mathcal{U}\left[\mathrm{i}r\right]\mathrm{e}^{-tp_+\sigma_3}f_+^{-\sigma_3}, &\mathrm{for}\,\,\, \lambda\in\left(-\alpha, -\eta_1\right),\\[0.5em]
\mathrm{e}^{\Delta^\alpha\sigma_3}, &\mathrm{for}\,\,\, \lambda\in\left(-\eta_1, \eta_1\right);
\end{cases}
\end{gather}
and if $\xi<\xi_\mathrm{crit}$, the jump condition is
\begin{gather}
S_+=S_-
\begin{cases}
\mathcal{U}^{tp}_{f}\left[-\mathrm{i}r^{-1}\right], &\mathrm{for}\,\,\, \lambda\in\mathcal{C}_1\setminus\left\{\eta_0, \eta_1, \eta_2\right\},\\[0.5em]
\mathcal{L}^{tp}_{f}\left[-\mathrm{i}r^{-1}\right], &\mathrm{for}\,\,\, \lambda\in\mathcal{C}_2\setminus\left\{-\eta_0, -\eta_1, -\eta_2\right\},\\[0.5em]
\mathrm{i}\sigma_1, &\mathrm{for}\,\,\, \lambda\in\left(\eta_1, \eta_2\right)\cup\left(-\eta_2, -\eta_1\right)\setminus\{\pm\eta_0\},\\[0.5em]
\mathrm{e}^{\Delta_1\sigma_3}, &\mathrm{for}\,\,\, \lambda\in\left(-\eta_1, \eta_1\right).
\end{cases}
\end{gather}
\end{itemize}
\end{RH}

\subsection{Riemann-Hilbert problem for $E(\lambda; x, t)$}
We define the error matrix $E(\lambda; x, t)$ by
\begin{gather}\label{E}
E(\lambda; x, t)=S(\lambda; x, t)P(\lambda; x, t)^{-1},
\end{gather}
with the global parametrix $P(\lambda; x, t)$, which has different form in different regions.

\subsubsection{The region $\xi\in\left(\xi_0, -\eta_2^{-2}\right)$}
In the region $\xi\in\left(\xi_0, -\eta_2^{-2}\right)$, the global parametrix $P(\lambda; x, t)$ is constructed by
\begin{gather}
P(\lambda; x, t)=
\begin{cases}
P^\infty\left(\lambda; x, t\right), &\mathrm{for}\,\,\,\lambda\in\mathbb{C}\setminus\overline{B\left(\pm\eta_2, \pm\alpha\right)},  \\
P^{\eta_2}\left(\lambda; x, t\right), & \mathrm{for}\,\,\,\lambda\in B\left(\eta_2\right),  \\
P^{\alpha}\left(\lambda; x, t\right), & \mathrm{for}\,\,\,\lambda\in B\left(\alpha\right),  \\
\sigma_2P^{\eta_2}\left(-\lambda; x, t\right)\sigma_2, & \mathrm{for}\,\,\,\lambda\in B\left(-\eta_2\right),  \\
\sigma_2P^{\alpha}\left(-\lambda; x, t\right)\sigma_2, & \mathrm{for}\,\,\,\lambda\in B\left(-\alpha\right).
\end{cases}
\end{gather}
The outer parametrix $P^\infty(\lambda; x, t)=\left(P_{i, j}^\infty(\lambda; x, t)\right)_{2\times 2}$ is formulated as
\begin{gather}\label{out-alpha}
\begin{aligned}
P^\infty_{1, 1}(\lambda; x, t)&=\frac{\delta_\alpha+\delta_\alpha^{-1}}{2}
\frac{\vartheta_3\left(w_\alpha+\frac{1}{4}+\frac{\Delta^\alpha}{2\pi\mathrm{i}}; \tau_\alpha\right)}{\vartheta_3\left(w_\alpha+\frac{1}{4}; \tau_\alpha\right)}
\frac{\vartheta_3\left(0; \tau_\alpha\right)}{\vartheta_3\left(\frac{\Delta^\alpha}{2\pi\mathrm{i}}; \tau_\alpha\right)}, \\[0.5em]
P^\infty_{1, 2}(\lambda; x, t)&=\frac{\delta_\alpha-\delta_\alpha^{-1}}{2}
\frac{\vartheta_3\left(w_\alpha-\frac{1}{4}-\frac{\Delta^\alpha}{2\pi\mathrm{i}}; \tau_\alpha\right)}{\vartheta_3\left(w_\alpha-\frac{1}{4}; \tau_\alpha\right)}
\frac{\vartheta_3\left(0; \tau_\alpha\right)}{\vartheta_3\left(\frac{\Delta^\alpha}{2\pi\mathrm{i}}; \tau_\alpha\right)},\\[0.5em]
P^\infty_{2, 1}(\lambda; x, t)&=\frac{\delta_\alpha-\delta_\alpha^{-1}}{2}
\frac{\vartheta_3\left(w_\alpha-\frac{1}{4}+\frac{\Delta^\alpha}{2\pi\mathrm{i}}; \tau_\alpha\right)}{\vartheta_3\left(w_\alpha-\frac{1}{4}; \tau_\alpha\right)}
\frac{\vartheta_3\left(0; \tau_\alpha\right)}{\vartheta_3\left(\frac{\Delta^\alpha}{2\pi\mathrm{i}}; \tau_\alpha\right)}, \\[0.5em]
P^\infty_{2, 2}(\lambda; x, t)&=\frac{\delta_\alpha+\delta_\alpha^{-1}}{2}
\frac{\vartheta_3\left(w_\alpha+\frac{1}{4}-\frac{\Delta^\alpha}{2\pi\mathrm{i}}; \tau_\alpha\right)}{\vartheta_3\left(w_\alpha+\frac{1}{4}; \tau_\alpha\right)}
\frac{\vartheta_3\left(0; \tau_\alpha\right)}{\vartheta_3\left(\frac{\Delta^\alpha}{2\pi\mathrm{i}}; \tau_\alpha\right)},
\end{aligned}
\end{gather}
where $\delta_\alpha=\delta_\alpha(\lambda)$ is a branch of $\left((\lambda+\alpha)(\lambda-\eta_2)/(\lambda+\eta_2)(\lambda-\alpha)\right)^{1/4}$, with the branch cut $\left[\alpha, \eta_2\right]\cup\left[-\eta_2, -\alpha\right]$, normalizing to $\delta_\alpha=1+\mathcal{O}\left(\lambda^{-1}\right)$ as $\lambda\to \infty$, and $w_\alpha=w_\alpha(\lambda)$ is defined by
\begin{gather}
w_\alpha=-\frac{\eta_2}{4eK\left(m_\alpha\right)}\int_{\eta_2}^\lambda \frac{\mathrm{d}\zeta}{R(\zeta; \xi)}.
\end{gather}
The local parametrix $P^{\eta_2}(\lambda; x, t)$ for $\lambda\in\left(\mathrm{D}_1\cup\mathrm{D}_\mathrm{o}\right)\cap B(\eta_2)$,  is expressed as
\begin{gather}\label{parametrix-eta2}
P^{\eta_2}(\lambda; x, t)=P^\infty(\lambda; x, t)A^{\eta_2}C\zeta_{\eta_2}^{-\sigma_3/4}M^{\mathrm{mB}}\left(\zeta_{\eta_2}; \beta_2\right)\mathrm{e}^{-\sqrt{\zeta_{\eta_2}}\sigma_3}\left(A^{\eta_2}\right)^{-1},
\end{gather}
where $\zeta_{\eta_2}=t^2\mathcal{F}^{\eta_2}$, $A^{\eta_2}=\left(\mathrm{e}^{\pi\mathrm{i}/4}/fd^{\eta_2}\right)^{\sigma_3}\sigma_2$, and $\mathcal{F}^{\eta_2}=p(\lambda; \xi)^2$ is the conformal mapping.
The local parametrix $P^{\alpha}(\lambda; x, t)$ for $\lambda\in\left(\mathrm{D}_\mathrm{o}^\pm\cup\mathrm{D}_1^\pm\right)\cap B(\alpha)$, is expressed as follows
\begin{gather}\label{parametrix-alpha}
P^{\alpha}(\lambda; x, t)=P^\infty(\lambda; x, t)A_{\pm}^{\alpha}C\zeta_{\alpha}^{-\sigma_3/4}M^{\mathrm{Ai}}(\zeta_{\alpha})\mathrm{e}^{\zeta_\alpha^{3/2}\sigma_3}\left(A_{\pm}^{\alpha}\right)^{-1},
\end{gather}
where
$\zeta_{\alpha}=t^{2/3}\mathcal{F}_\pm^\alpha$, with the conformal mapping $\mathcal{F}^\alpha=\left(p\pm  \Omega_\xi^\alpha /2\right)^{2/3}$, $\Omega_\xi^\alpha=\Omega^\alpha\left(\xi+1/\alpha\eta_2\right)/4$,  and
$A_{\pm}^{\alpha}=\left(\mathrm{e}^{\pi\mathrm{i}/4\mp t\Omega_\xi^\alpha /2}/f\sqrt{r}\right)^{\sigma_3}\sigma_1$.

\begin{RH} As depicted in Figure \ref{ErrorE1}, the error matrix $E(\lambda; x, t)$ satisfies the following Riemann-Hilbert problem:

\begin{itemize}

\item{} $E(\lambda; x, t)$ is analytic in $\lambda$ for $\lambda\in\mathbb{C}\setminus\mathcal{C}_1^c\cup\mathcal{C}_2^c\cup\partial B(\pm\alpha, \pm\eta_2)\cup\left[\eta_1, \alpha\right]\cup\left[-\alpha, -\eta_1\right]$,

\item{}It normalizes to the identity matrix $\mathbb{I}_2$,

\item{} The jump condition is
\begin{gather}\label{E-Jump}
E_+(\lambda; x, t)=E_-(\lambda; x, t) V^E,
\end{gather}
where the jump matrix $V^E$ is
\begin{gather}
V^E=
\begin{cases}
P^\infty (\lambda)\mathcal{U}^{tp}_{f}\left[-ir^{-1}\right] \left(P^\infty (\lambda)\right)^{-1}, &\mathrm{for} \,\,\, \lambda\in\mathcal{C}^c_1,  \\[0.5em]
P^\infty (\lambda)\mathcal{L}^{tp}_{f}\left[-ir^{-1}\right] \left(P^\infty (\lambda)\right)^{-1}, &\mathrm{for} \,\,\, \lambda\in\mathcal{C}^c_2,  \\[0.5em]
P^{\eta_2}(\lambda)\left(P^\infty(\lambda)\right)^{-1}, &\mathrm{for} \,\,\, \lambda\in\partial B\left(\eta_2\right),  \\[0.5em]
P^{\alpha}(\lambda)\left(P^\infty(\lambda)\right)^{-1}, &\mathrm{for} \,\,\, \lambda\in\partial B\left(\alpha\right),   \\[0.5em]
\sigma_2P^{\eta_2}(-\lambda)\left(P^\infty\left(-\lambda\right)\right)^{-1}\sigma_2, &\mathrm{for} \,\,\, \lambda\in\partial B\left(-\eta_2\right), \\[0.5em]
\sigma_2P^{\alpha}\left(-\lambda\right)\left(P^\infty\left(-\lambda\right)\right)^{-1}\sigma_2, &\mathrm{for} \,\,\, \lambda\in\partial B\left(-\alpha\right),\\[0.5em]
\mathbb{I}_2+V_1^E, &\mathrm{for} \,\,\, \lambda\in\left(\eta_1, \eta_0\right)\cup\left(\eta_0, \alpha\right)\setminus \overline{B(\alpha)}, \\[0.5em]
\mathbb{I}_2+V_2^E, &\mathrm{for} \,\,\, \lambda\in\left(-\alpha, -\eta_0\right)\cup\left(-\eta_0, -\eta_1\right)\setminus \overline{B(-\alpha)},
\end{cases}
\end{gather}
with  $V_1^E$ and $V_2^E$ being
\begin{gather}
\begin{aligned}
V^E_1&=P^\infty_-f_-^{\sigma_3}\mathrm{e}^{tp_-\sigma_3}\left(\mathcal{L}\left[\mathrm{i}r\right]
-\mathbb{I}_2\right)\mathrm{e}^{-tp_+\sigma_3}f_+^{-\sigma_3}\left(P^\infty_+\right)^{-1}, \\[0.5em]
V^E_2&=P^\infty_-f_-^{\sigma_3}\mathrm{e}^{tp_-\sigma_3}\left(\mathcal{U}\left[\mathrm{i}r\right]-\mathbb{I}_2\right)\mathrm{e}^{-tp_+\sigma_3}f_+^{-\sigma_3}\left(P^\infty_+\right)^{-1}.
\end{aligned}
\end{gather}
\end{itemize}
\end{RH}

\begin{proposition}[Small norm estimate in the region $\xi\in\left(\xi_0, -\eta_2^{-2}\right)$]
For $\beta_2>-1, \beta_1\ge 0, \beta_0\ge 0$,
the jump matrices $V^E$ has the following small norm estimates:
\begin{gather}
\begin{aligned}
&\left\|V^E-\mathbb{I}_2\right\|_{L^1\cap L^2\cap L^\infty \left(\mathcal{C}^c\right)}=\mathcal{O}\left(\mathrm{e}^{-\mu t}\right), && \mathrm{as}\,\,\, t\to +\infty,  \\[0.5em]
&\left\|V^E-\mathbb{I}_2\right\|_{L^1\cap L^2\cap L^\infty \left(B\left(\pm\eta_2, \pm\alpha\right)\right)}=\mathcal{O}\left(t^{-1}\right), && \mathrm{as}\,\,\, t\to +\infty,
\end{aligned}
\end{gather}
which leads to
\begin{gather}\label{E-Estimate-right}
E(0; x, t)=\mathbb{I}_2+\mathcal{O}\left(t^{-1}\right), \quad E^{[1]}(x, t)=\mathcal{O}\left(t^{-1}\right), \quad \mathrm{as}\,\,\, t\to +\infty,
\end{gather}
where $E^{[1]}(x, t)=\lim_{\lambda\to\infty}\lambda\left(E(\lambda; x, 0)-\mathbb{I}_2\right)$, and
$\mathcal{C}^c=\mathcal{C}_1^c\cup\mathcal{C}_2^c\cup\left[\eta_1, \alpha \right]\cup\left[-\alpha, -\eta_1\right]\setminus B(\pm\alpha)$.
\end{proposition}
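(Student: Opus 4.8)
The plan is to run the Beals--Coifman small-norm argument for the error matrix, after partitioning the jump contour into three families according to their decay mechanism: the lens boundaries $\mathcal{C}_1^c\cup\mathcal{C}_2^c$, the residual real segments $\left(\eta_1,\alpha\right)\cup\left(-\alpha,-\eta_1\right)$ lying outside the disks, and the four circles $\partial B(\pm\eta_2,\pm\alpha)$. The first two families decay exponentially in $t$ by the sign analysis of $p$, while the circles contribute only the algebraic $\mathcal{O}(t^{-1})$ term coming from parametrix matching.

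First I would dispose of the exponential pieces. On $\mathcal{C}_1^c$ the jump $P^\infty\,\mathcal{U}^{tp}_f[-\mathrm{i}r^{-1}]\,(P^\infty)^{-1}$ differs from $\mathbb{I}_2$ only through an off-diagonal entry carrying $\mathrm{e}^{-2tp}$ up to bounded conjugation; by Proposition~\ref{p-sign}, $\Re(p_+)=0$ on $[\alpha,\eta_2]$ and $\lim_{\Im\lambda\to0^+}\partial\Re(p)/\partial\Im\lambda<0$, so $\Re(p)<0$ strictly on the portion of the lens at positive distance from $[\alpha,\eta_2]$, and since $P^\infty$ and $f$ are bounded there (the endpoints being excised by the disks) one gets $\|V^E-\mathbb{I}_2\|=\mathcal{O}(\mathrm{e}^{-\mu t})$, symmetrically on $\mathcal{C}_2^c$. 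On $\left(\eta_1,\alpha\right)\setminus B(\alpha)$ the relevant entry of $V_1^E$ carries $\mathrm{e}^{-t(p_++p_-)}$, and Proposition~\ref{p-sign} gives $p_++p_->0$ there, bounded below away from $\alpha$; the hypotheses $\beta_0,\beta_1\ge0$ ensure $r$ stays bounded at $\eta_0$ and $\eta_1$, so no local parametrix is needed at those points and $V_1^E$ is again $\mathcal{O}(\mathrm{e}^{-\mu t})$ in all three norms.

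Next I would treat the circles, where the bound is a direct consequence of the matching built into the local parametrices. At $\eta_2$ one has $\zeta_{\eta_2}=t^2\mathcal{F}^{\eta_2}$, so $|\sqrt{\zeta_{\eta_2}}|\sim t$ on $\partial B(\eta_2)$; inserting the large-argument expansion of $M^{\mathrm{mB}}$ and cancelling against $P^\infty$ gives $P^{\eta_2}(P^\infty)^{-1}=\mathbb{I}_2+\mathcal{O}(\zeta_{\eta_2}^{-1/2})=\mathbb{I}_2+\mathcal{O}(t^{-1})$. At $\alpha$ one has $\zeta_\alpha=t^{2/3}\mathcal{F}^\alpha$ so that $\zeta_\alpha^{3/2}\sim t$, and the Airy asymptotics give $P^{\alpha}(P^\infty)^{-1}=\mathbb{I}_2+\mathcal{O}(\zeta_\alpha^{-3/2})=\mathbb{I}_2+\mathcal{O}(t^{-1})$; here $r$ is smooth and strictly positive at the interior point $\alpha$, so $\sqrt{r}$ in $A^\alpha_\pm$ causes no trouble. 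The conjugators $A^{\eta_2},A^\alpha_\pm,P^\infty$ are bounded with bounded inverse on the circles, and the phases $\mathrm{e}^{\mp t\Omega^\alpha_\xi/2}$ introduced through the $g$-function in RHP~\ref{RH-7} are exactly what makes the matching consistent with the constant jump $\mathrm{e}^{\Delta^\alpha\sigma_3}$ on $(-\eta_1,\eta_1)$; the $\pm$ contributions are collected by the symmetry $\lambda\mapsto-\lambda$. Taking the maximum over all pieces yields the stated $L^1\cap L^2\cap L^\infty$ bounds, the $\mathcal{O}(t^{-1})$ term dominating. With $\|V^E-\mathbb{I}_2\|_{L^2\cap L^\infty}=\mathcal{O}(t^{-1})\to0$, the operator $\mathbb{I}-C_{V^E}$ is invertible for large $t$, so $E$ exists, is unique, and admits
\begin{gather}
E(\lambda;x,t)=\mathbb{I}_2+\frac{1}{2\pi\mathrm{i}}\int\frac{\big(\mu_E(V^E-\mathbb{I}_2)\big)(s)}{s-\lambda}\,\mathrm{d}s,
\end{gather}
with $\mu_E=\mathbb{I}_2+\mathcal{O}(t^{-1})$ in $L^2$; since $\lambda=0$ and $\lambda=\infty$ sit at fixed distance from the contour, evaluating this representation and its $\lambda\to\infty$ coefficient produces \eqref{E-Estimate-right}.

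I expect the main obstacle to be the uniform $\mathcal{O}(t^{-1})$ matching on $\partial B(\eta_2)$ when $\beta_2\in(-1,0)$. The conjugator $A^{\eta_2}=\left(\mathrm{e}^{\pi\mathrm{i}/4}/f d^{\eta_2}\right)^{\sigma_3}\sigma_2$ carries the singular factor $d^{\eta_2}$, and one must verify that its interaction with the $\beta_2$-dependent Bessel parametrix leaves a mismatch that is genuinely bounded along the whole circle, not merely formally of order $t^{-1}$. Concretely, one has to confirm that $f\,d^{\eta_2}$ cancels the would-be $\left|\lambda-\eta_2\right|^{\beta_2}$ singularity exactly on $\partial B(\eta_2)$, leaving an analytic, bounded remainder to which the large-$\zeta_{\eta_2}$ expansion applies uniformly; once this cancellation is checked, the remaining estimates reduce to the routine contraction argument, which is why the detailed verification is omitted in the statement's proof.
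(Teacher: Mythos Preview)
Your proposal is correct and follows essentially the same approach as the paper: invoke Proposition~\ref{p-sign} (together with the symmetry $p(-\lambda;\xi)=-p(\lambda;\xi)$) for the exponential decay on $\mathcal{C}^c$, and the built-in $\mathcal{O}(t^{-1})$ matching $P^{\eta_2}(P^\infty)^{-1}=\mathbb{I}_2+\mathcal{O}(t^{-1})$, $P^{\alpha}(P^\infty)^{-1}=\mathbb{I}_2+\mathcal{O}(t^{-1})$ on the circles. The paper's proof records exactly these two ingredients and nothing more; your write-up simply makes explicit the role of the hypotheses $\beta_0,\beta_1\ge 0$ on the residual segment, the Beals--Coifman representation for extracting \eqref{E-Estimate-right}, and the cancellation of the $d^{\eta_2}$ factor when $\beta_2\in(-1,0)$, all of which the paper leaves tacit.
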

\begin{proof}
It follows from Proposition  \ref{p-sign} and the symmetry relation
\begin{gather}
p(\lambda; \xi)=p(\lambda^*; \xi)^*=-p(-\lambda; \xi), \quad \mathrm{for}\,\,\, \lambda\in\mathbb{C}\setminus\left[-\eta_2, \eta_2\right]
\end{gather}
that $\Re(p)<0$ for $\lambda\in \mathcal{C}_1\setminus \left\{\alpha, \eta_2\right\}$,
 $\Re(p)>0$ for $\lambda\in \mathcal{C}_2\setminus \left\{-\alpha, -\eta_2\right\}$,
 $p_++p_->0$, for $\lambda\in\left[\eta_1, \alpha\right)$, and
 $p_++p_-<0$, for $\lambda\in\left(-\alpha, -\eta_1\right]$,
which give rise to the estimate of the jump matrix on $\mathcal{C}^c$.
The estimate on $B\left(\pm\eta_2, \pm\alpha\right)$ is derived by
\begin{gather}
\begin{aligned}
&P^{\eta_2}(\lambda)\left(P^\infty(\lambda)\right)^{-1}=\mathbb{I}_2+\mathcal{O}\left(t^{-1}\right), && \mathrm{for}\,\,\, \lambda\in \partial B(\eta_2),\\[0.5em]
&P^{\alpha}(\lambda)\left(P^\infty(\lambda)\right)^{-1}=\mathbb{I}_2+\mathcal{O}\left(t^{-1}\right), && \mathrm{for}\,\,\, \lambda\in \partial B(\alpha).
\end{aligned}
\end{gather}
This completes the proof.
\end{proof}

\begin{figure}[!t]
\centering
\includegraphics[scale=0.38]{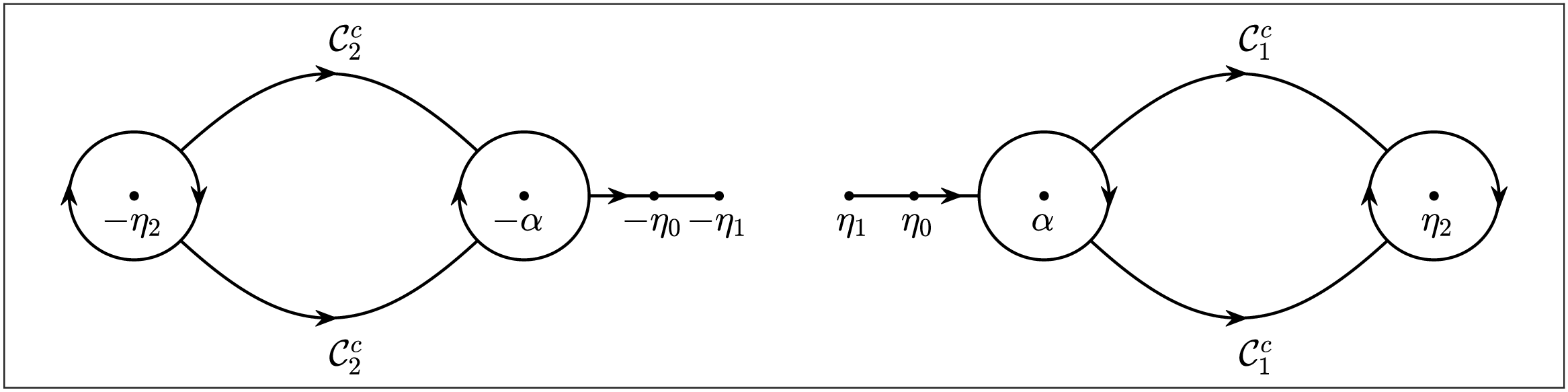}
\caption{Jump contour of the error matrix $E(\lambda; x, t)$ in the region $\xi_0<\xi<-\eta_2^{-2}$}
\label{ErrorE1}
\end{figure}

\subsubsection{The region $\xi\in\left(\xi_\mathrm{crit}, \xi_0\right)$}
In the region $\xi\in\left(\xi_\mathrm{crit}, \xi_0\right)$, the global parametrix $P(\lambda; x, t)$ is constructed by
\begin{gather}
P(\lambda; x, t)=
\begin{cases}
P^\infty\left(\lambda; x, t\right), &\mathrm{for}\,\,\,\lambda\in\mathbb{C}\setminus\overline{B\left(\pm\eta_2, \pm\alpha, \pm \eta_0\right)},  \\[0.5em]
P^{\eta_j}\left(\lambda; x, t\right), & \mathrm{for}\,\,\,\lambda\in B\left(\eta_j\right), \,\, j=0,2, \\[0.5em]
P^{\alpha}\left(\lambda; x, t\right), & \mathrm{for}\,\,\,\lambda\in B\left(\alpha\right),  \\[0.5em]
\sigma_2P^{\eta_j}\left(-\lambda; x, t\right)\sigma_2, & \mathrm{for}\,\,\,\lambda\in B\left(-\eta_j\right),\,\, j=0,2,  \\[0.5em]
\sigma_2P^{\alpha}\left(-\lambda; x, t\right)\sigma_2, & \mathrm{for}\,\,\,\lambda\in B\left(-\alpha\right).\\[0.5em]
\end{cases}
\end{gather}
The outer parametrix $P^\infty(\lambda; x, t)$ is still defined in \eqref{out-alpha}, the local parametrix $P^{\eta_2}(\lambda; x, t)$ in \eqref{parametrix-eta2}, and the local parametrix $P^{\alpha}(\lambda; x, t)$ in \eqref{parametrix-alpha}.
Below, we give the local parametrix $P^{\eta_0}(\lambda; x, t)$.
For the first type of  generalized reflection coefficient  $r=r_0$ with $\beta_0\ne 0$, the conformal mapping is defined as follows:
$\mathcal{F}^{\eta_0}=\pm(p-p_\pm(\eta_0))$ if $\lambda\in B(\eta_0)\cap\mathbb{C}^\pm$.
 Similar to the large-$x$ asymptotics of the initial value, let us define the domains
 $\mathrm{D}_{\mathrm{o}, 1, r}^+=\left(\mathcal{F}^{\eta_0}\right)^{-1} \left(\mathrm{D}^\zeta_2\cap B^\zeta(0)\right)$,
 $\mathrm{D}_{\mathrm{o}, 1, l}^+=\left(\mathcal{F}^{\eta_0}\right)^{-1} \left(\mathrm{D}^\zeta_3\cap B^\zeta(0)\right)$,
 $\mathrm{D}_{\mathrm{o}, 1, l}^-=\left(\mathcal{F}^{\eta_0}\right)^{-1} \left(\mathrm{D}^\zeta_6\cap B^\zeta(0)\right)$,
and
 $\mathrm{D}_{\mathrm{o}, 1, r}^-=\left(\mathcal{F}^{\eta_0}\right)^{-1} \left(\mathrm{D}^\zeta_7\cap B^\zeta(0)\right)$.
The local parametrix $P^{\eta_0}(\lambda; x, t)$ in the neighborhood of $\lambda=\eta_0$ is constructed as follows:
For $\lambda\in\left(\mathrm{D}_{1, r}^+\cup\mathrm{D}_{\mathrm{o}, 1, r}^+\right)\cap B(\eta_0)$,
\begin{gather}\label{eta0-1}
P^{\eta_0}(\lambda; x, t)=P^\infty(\lambda; x, t) A_{r+}^{\eta_0}\mathrm{e}^{\beta_0\pi\mathrm{i}\sigma_3/4}\left(-\mathrm{i}\sigma_2\right)M^{\mathrm{mb}}(\zeta_{\eta_0}; \beta_0)\mathrm{e}^{\zeta_{\eta_0}\sigma_3}\left(A_{r+}^{\eta_0}\right)^{-1},
\end{gather}
with $A_{r+}^{\eta_0}=\left(\mathrm{e}^{\pi\mathrm{i}/4+tp_{+}(\eta_0)}/fd_r^{\eta_0}\right)^{\sigma_3}\sigma_2$;
For $\lambda\in \left(\mathrm{D}_{1, l}^+\cup\mathrm{D}_{\mathrm{o}, 1, l}^+\right)\cap B(\eta_0)$, $P^{\eta_0}$ is formulated as
\begin{gather}
P^{\eta_0}(\lambda; x, t)=P^\infty (\lambda; x, t)A_{l+}^{\eta_0}\mathrm{e}^{-\beta_0\pi\mathrm{i}\sigma_3/4}\left(-\mathrm{i}\sigma_2\right)M^{\mathrm{mb}}(\zeta_{\eta_0}; \beta_0)\mathrm{e}^{\zeta_{\eta_0}\sigma_3}\left(A_{l+}^{\eta_0}\right)^{-1},
\end{gather}
with $A_{l+}^{\eta_0}=\left(\mathrm{e}^{\pi\mathrm{i}/4+tp_{+}(\eta_0)}/fd_l^{\eta_0}\right)^{\sigma_3}\sigma_2$;
For $\lambda\in\left(\mathrm{D}_{1, l}^-\cup\mathrm{D}_{\mathrm{o}, 1, l}^-\right)\cap B(\eta_0)$,  $P^{\eta_0}$ is written as
\begin{gather}
P^{\eta_0}(\lambda; x, t)=P^\infty(\lambda; x, t) A_{l-}^{\eta_0}\mathrm{e}^{\beta_0\pi\mathrm{i}\sigma_3/4}M^{\mathrm{mb}}(\zeta_{\eta_0}; \beta_0)\mathrm{e}^{-\zeta_{\eta_0}\sigma_3}\left(A_{l-}^{\eta_0}\right)^{-1},
\end{gather}
with $A_{l-}^{\eta_t}=\left(\mathrm{e}^{\pi\mathrm{i}/4+tp_{-}(\eta_0)}/fd_l^{\eta_0}\right)^{\sigma_3}\sigma_2$;
For $\lambda\in \left(\mathrm{D}_{1, r}^-\cup\mathrm{D}_{\mathrm{o}, 1, r}^-\right)\cap B(\eta_0)$, $P^{\eta_0}$ is expressed as
\begin{gather}
P^{\eta_0}(\lambda; x, t)=P^\infty(\lambda; x, t) A_{r-}^{\eta_0} \mathrm{e}^{-\beta_0\pi\mathrm{i}\sigma_3/4}M^{\mathrm{mb}}(\zeta_{\eta_0}; \beta_0)\mathrm{e}^{-\zeta_{\eta_0}\sigma_3}\left(A_{r-}^{\eta_0}\right)^{-1},
\end{gather}
with $A_{r-}^{\eta_0}=\left(\mathrm{e}^{\pi\mathrm{i}/4+tp_{-}(\eta_0)}/fd_r^{\eta_0}\right)^{\sigma_3}\sigma_2$,
and $\zeta_{\eta_0}=t\mathcal{F}^{\eta_0}$.

For the second generalized reflection coefficient  $r=r_c$,
the conformal mapping is defined as follows:
$\mathcal{F}_0^{\eta_0}=\pm 2(p-p_{\pm}(\eta_0))$ if $\lambda\in B(\eta_0)\cap\mathbb{C}^\pm$.
The local parametrix  in the neighborhood of $\lambda=\eta_0$ is constructed as follows:
For $\lambda\in \left(\mathrm{D}_1^+\cup \mathrm{D}_\mathrm{o}^+\right)\cap B(\eta_0)$, the local parametrix is formulated as
\begin{gather}
P^{\eta_0}(\lambda; x, t)=P^\infty (\lambda; x, t) A^{\eta_0}_{+}\left(\zeta_{\eta_0}^{\kappa_0\sigma_3}\mathrm{i}\sigma_2\mathrm{e}^{\kappa_0\pi\mathrm{i}\sigma_3}\right)^{-1}M^{\mathrm{CH}}(\zeta_{\eta_0}; \kappa_0)\mathrm{e}^{\zeta_{\eta_0}\sigma_3/2}\left(A^{\eta_0}_{+}\right)^{-1};
\end{gather}
For $\lambda\in \left(\mathrm{D}_1^-\cup \mathrm{D}_\mathrm{o}^-\right)\cap B(\eta_0)$, the local parametrix is expressed as
\begin{gather}\label{eta0-2}
P^{\eta_0}(\lambda; x, t)=P^\infty (\lambda; x, t)A^{\eta_0}_{-}\mathrm{e}^{-\kappa_0\pi\mathrm{i}\sigma_3}M^{\mathrm{CH}}(\zeta_{\eta_0};  \kappa_0)\mathrm{e}^{-\zeta_{\eta_0}\sigma_3/2} \left(A^{\eta_0}_{-}\right)^{-1},
\end{gather}
where  $A^{\eta_0}_{\pm}=\left(\mathrm{e}^{\pi\mathrm{i}/4+tp_{\pm}(\eta_0)}/fd^{\eta_0}\right)^{\sigma_3}\sigma_2$,
and $\zeta_{\eta_0}=t\mathcal{F}^{\eta_0}$.

\begin{RH}As depicted in Figure \ref{ErrorE2}, the error matrix $E(\lambda; x, t)$ satisfies the following Riemann-Hilbert problem:

\begin{itemize}

\item{}$E(\lambda; x, t)$ is analytic in $\lambda$ for $\lambda\in\mathbb{C}\setminus\mathcal{C}_1^c\cup\mathcal{C}_2^c\cup\partial B(\pm\alpha, \pm\eta_2)\cup\left[\eta_1, \alpha\right]\cup\left[-\alpha, -\eta_1\right]$,

\item{}It normalizes to the identity matrix $\mathbb{I}_2$,

\item{} The jump condition is described in \eqref{E-Jump} with the jump matrix
\begin{gather}
V^E=
\begin{cases}
P^\infty (\lambda)\mathcal{U}^{tp}_{f}\left[-ir^{-1}\right] \left(P^\infty (\lambda)\right)^{-1}, &\mathrm{for} \,\,\, \lambda\in\mathcal{C}^c_1,  \\[0.5em]
P^\infty (\lambda)\mathcal{L}^{tp}_{f}\left[-ir^{-1}\right] \left(P^\infty (\lambda)\right)^{-1}, &\mathrm{for} \,\,\, \lambda\in\mathcal{C}^c_2,  \\[0.5em]
P^{\eta_j}(\lambda)\left(P^\infty(\lambda)\right)^{-1}, &\mathrm{for} \,\,\, \lambda\in\partial B\left(\eta_2\right), \,\, j=0,2, \\[0.5em]
P^{\alpha}(\lambda)\left(P^\infty(\lambda)\right)^{-1}, &\mathrm{for} \,\,\, \lambda\in\partial B\left(\alpha\right),   \\[0.5em]
\sigma_2P^{\eta_j}(-\lambda)\left(P^\infty\left(-\lambda\right)\right)^{-1}\sigma_2, &\mathrm{for} \,\,\, \lambda\in\partial B\left(-\eta_j\right),\,\, j=0,2, \\[0.5em]
\sigma_2P^{\alpha}\left(-\lambda\right)\left(P^\infty\left(-\lambda\right)\right)^{-1}\sigma_2, &\mathrm{for} \,\,\, \lambda\in\partial B\left(-\alpha\right),\\[0.5em]
\mathbb{I}_2+V_1^E, &\mathrm{for} \,\,\, \lambda\in\left(\eta_1, \alpha\right)\setminus \overline{B(\alpha)},\\[0.5em]
\mathbb{I}_2+V_2^E, &\mathrm{for} \,\,\, \lambda\in\left(-\alpha,  -\eta_1\right)\setminus \overline{B(-\alpha)}.
\end{cases}
\end{gather}
\end{itemize}
\end{RH}

\begin{proposition}[Small norm estimate in the region $\xi\in\left(\xi_\mathrm{crit}, \xi_0\right)$]
For $\beta_2>-1, \beta_0>-1, \beta_1\ge 0$,
the jump matrices $V^E$ has the following small norm estimates:
\begin{gather}
\begin{aligned}
&\left\|V^E-\mathbb{I}_2\right\|_{L^1\cap L^2\cap L^\infty \left(\mathcal{C}^c\right)}=\mathcal{O}\left(\mathrm{e}^{-\mu t}\right), && \mathrm{as}\,\,\, t\to +\infty  \\[0.5em]
&\left\|V^E-\mathbb{I}_2\right\|_{L^1\cap L^2\cap L^\infty \left(B\left(\pm\eta_2, \pm\eta_0, \pm\alpha\right)\right)}=\mathcal{O}\left(t^{-1}\right), && \mathrm{as}\,\,\, t\to +\infty
\end{aligned}
\end{gather}
which leads to
\begin{gather}\label{E-Estimate-middle}
E(0; x, t)=\mathbb{I}_2+\mathcal{O}\left(t^{-1}\right), \quad E^{[1]}(x, t)=\mathcal{O}\left(t^{-1}\right), \quad \mathrm{as}\,\,\, t\to +\infty,
\end{gather}
where $E^{[1]}(x, t)=\lim_{\lambda\to\infty}\lambda\left(E(\lambda; x, 0)-\mathbb{I}_2\right)$, and
$\mathcal{C}^c=\mathcal{C}_1^c\cup\mathcal{C}_2^c\cup\left[\eta_1, \alpha \right]\cup\left[-\alpha, -\eta_1\right]\setminus B(\pm\alpha)$.
\end{proposition}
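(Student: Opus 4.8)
The plan is to follow the template already used for the region $\xi\in(\xi_0,-\eta_2^{-2})$, the only genuinely new ingredient being the matching of a local parametrix at $\pm\eta_0$, which in this regime lies in the oscillatory band because $\eta_1<\alpha<\eta_0<\eta_2$. First I would fix the sign structure of $p$. Since $(\xi_\mathrm{crit},\xi_0)\subset(\xi_\mathrm{crit},-\eta_2^{-2})$, Proposition~\ref{p-sign} applies without change: $p_++p_->0$ on $[\eta_1,\alpha)$, while $\Re(p_+)=0$ and $\lim_{\Im(\lambda)\to0^+}\partial\Re(p)/\partial\Im(\lambda)<0$ on $(\alpha,\eta_2)$. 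Together with the symmetry relation $p(\lambda;\xi)=p(\lambda^*;\xi)^*=-p(-\lambda;\xi)$ this yields $\Re(p)<0$ along $\mathcal{C}_1$ (minus the band points), $\Re(p)>0$ along $\mathcal{C}_2$, and $p_++p_-<0$ on $(-\alpha,-\eta_1]$.

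Next I would treat the off-disk estimate on $\mathcal{C}^c$. On the lens legs $\mathcal{C}_1^c,\mathcal{C}_2^c$ the relevant entry of $V^E$ carries $e^{\pm2tp}$, which is exponentially small by the sign of $\Re(p)$; the factor $r^{-1}$ stays bounded there because the zeros of $r$ at $\eta_0$ and $\eta_2$ are excised inside disks, while $\beta_1\ge0$ controls $r$ near $\eta_1$. On the residual band $[\eta_1,\alpha)\setminus\overline{B(\alpha)}$ the entry of $V_1^E$ carries $r\,e^{-t(p_++p_-)}$; since $\eta_0\notin[\eta_1,\alpha]$ in this region, $r$ has no interior singularity on this segment and is bounded up to $\eta_1$ by $\beta_1\ge0$, so $p_++p_->0$ forces exponential decay in $L^1\cap L^2\cap L^\infty$. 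This is exactly where $\beta_1\ge0$ is needed, and why $\beta_0,\beta_2$ may now lie in $(-1,0)$: their singular points are swept into disks equipped with genuine local parametrices.

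The disk estimate is then obtained by inserting the large-argument asymptotics of each model matrix and checking that the conjugating prefactors $A^{\eta_2}$, $A_\pm^{\alpha}$, $A_{\cdot\pm}^{\eta_0}$, together with the $f$- and $e^{tp}$-factors, telescope against $P^\infty$, leaving a remainder controlled by the inverse of the scaled variable. At $\eta_2$ the modified Bessel expansion with $\zeta_{\eta_2}=t^2\mathcal{F}^{\eta_2}$ gives $\mathbb{I}_2+\mathcal{O}(t^{-1})$; at $\alpha$ the Airy expansion with $\zeta_\alpha=t^{2/3}\mathcal{F}^{\alpha}$ gives $\mathbb{I}_2+\mathcal{O}(t^{-1})$; and at $\eta_0$, where $\zeta_{\eta_0}=t\mathcal{F}^{\eta_0}$, the expansion of $M^{\mathrm{mb}}$ (for $r=r_0$) or of $M^{\mathrm{CH}}$ (for $r=r_c$) yields the slowest, hence governing, term $\mathbb{I}_2+\mathcal{O}(t^{-1})$. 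Feeding $\|V^E-\mathbb{I}_2\|=\mathcal{O}(e^{-\mu t})$ on $\mathcal{C}^c$ and $\mathcal{O}(t^{-1})$ on the disk boundaries into the standard small-norm (Beals--Coifman) theory produces a unique $E$ with $E=\mathbb{I}_2+\mathcal{O}(t^{-1})$ uniformly, and evaluation at $\lambda=0$ and at order $\lambda^{-1}$ gives \eqref{E-Estimate-middle}.

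I expect the main obstacle to be the matching at $\eta_0$. The model matrices there are conjugated by $A_{\cdot\pm}^{\eta_0}$, which carry $e^{tp_\pm(\eta_0)}$ and the step/branch data $\kappa_0=\tfrac{\mathrm{i}}{\pi}\log c$ (respectively the exponent $\beta_0$); I would have to verify that these exactly cancel the jump of $P^\infty$ across $\partial B(\eta_0)$ so that only the clean $\mathcal{O}(\zeta_{\eta_0}^{-1})$ tail survives, and that this holds uniformly for $\beta_0\in(-1,0)\cup(0,+\infty)$ and $\kappa_0\in\mathrm{i}\mathbb{R}$. Because $\zeta_{\eta_0}$ scales only like $t$ rather than like $t^2$, this is the contribution that pins the overall rate at $t^{-1}$.
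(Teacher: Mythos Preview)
Your proposal is correct and follows essentially the same approach as the paper: invoke Proposition~\ref{p-sign} and the symmetry $p(\lambda;\xi)=p(\lambda^*;\xi)^*=-p(-\lambda;\xi)$ to obtain the sign structure of $\Re(p)$ and of $p_++p_-$, deduce exponential decay on $\mathcal{C}^c$, and then read off the $\mathcal{O}(t^{-1})$ matching on each disk boundary from the large-argument asymptotics of the local parametrices. Your write-up is in fact more detailed than the paper's, which simply records the sign conclusions and the parametrix matching without elaborating on the role of the $\beta_j$ constraints; your explanation of why $\beta_1\ge0$ is needed on the residual band (where no local parametrix is built near $\eta_1$) while $\beta_0,\beta_2>-1$ suffice (because $\eta_0,\eta_2$ are excised into disks) is a useful clarification the paper leaves implicit. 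One small correction: the $\eta_0$ disk is not uniquely responsible for the $t^{-1}$ rate, since the modified Bessel expansion at $\eta_2$ gives $\mathcal{O}(\zeta_{\eta_2}^{-1/2})=\mathcal{O}(t^{-1})$ and the Airy expansion at $\alpha$ gives $\mathcal{O}(\zeta_\alpha^{-3/2})=\mathcal{O}(t^{-1})$ as well, so all three disks contribute at the same order.
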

\begin{proof}
Again, by Proposition  \ref{p-sign} and the symmetry relation
\begin{gather}
p(\lambda; \xi)=p(\lambda^*; \xi)^*=-p(-\lambda; \xi), \quad \mathrm{for}\,\,\, \lambda\in\mathbb{C}\setminus\left[-\eta_2, \eta_2\right],
\end{gather}
we obtain that
$\Re(p)<0$ for $\lambda\in \mathcal{C}_1\setminus \left\{\alpha, \eta_0, \eta_2\right\}$,
 $\Re(p)>0$ for $\lambda\in \mathcal{C}_2\setminus \left\{-\alpha, -\eta_0, -\eta_2\right\}$,
 $p_++p_->0$, for $\lambda\in\left[\eta_1, \alpha\right)$, and
 $p_++p_-<0$, for $\lambda\in\left(-\alpha, -\eta_1\right]$,
which give rise to the estimate of the jump matrix on $\mathcal{C}^c$.
The estimate on $(B\left(\pm\eta_2, \pm\eta_0, \pm\alpha\right)$ is derived by
\begin{gather}
\begin{aligned}
&P^{\eta_j}(\lambda)\left(P^\infty(\lambda)\right)^{-1}=\mathbb{I}_2+\mathcal{O}\left(t^{-1}\right), && \mathrm{for}\,\,\, \lambda\in \partial B(\eta_j),\,\, j=0,2,\\[0.5em]
&P^{\alpha}(\lambda)\left(P^\infty(\lambda)\right)^{-1}=\mathbb{I}_2+\mathcal{O}\left(t^{-1}\right), && \mathrm{for}\,\,\, \lambda\in \partial B(\alpha). \\[0.5em]
\end{aligned}
\end{gather}
Then we complete the proof.
\end{proof}

\begin{figure}[!t]
\centering
\includegraphics[scale=0.38]{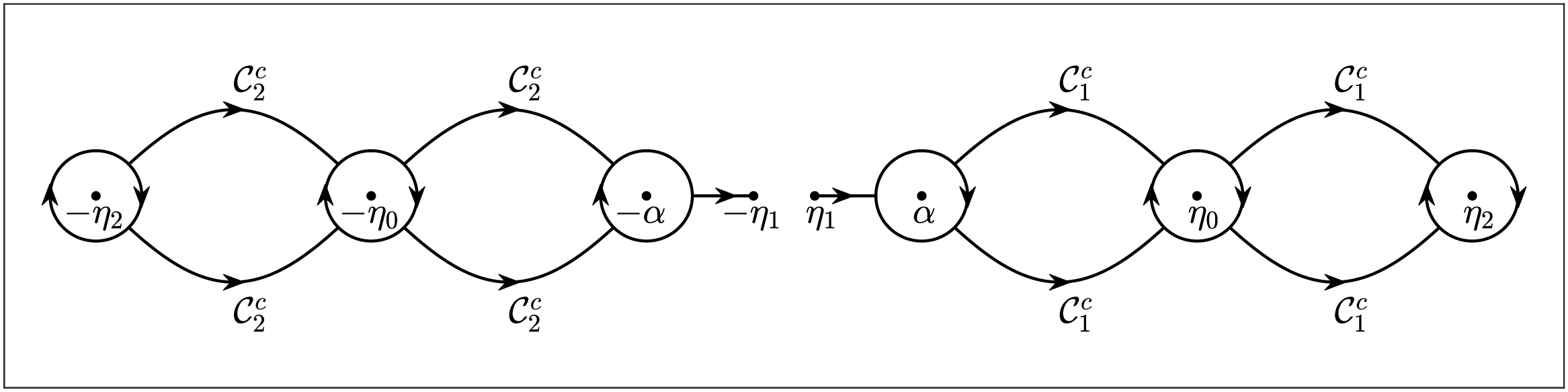}
\caption{Jump contour of the error matrix $E(\lambda; x, t)$ in the region $\xi_\mathrm{crit}<\xi<\xi_0$}
\label{ErrorE2}
\end{figure}

\subsubsection{The region $\xi<\xi_\mathrm{crit}$}

In the region $\xi<\xi_\mathrm{crit}$, the global parametrix $P(\lambda; x, t)$ is constructed by
\begin{gather}
P(\lambda; x, t)=
\begin{cases}
P^\infty\left(\lambda; x, t\right), &\mathrm{for}\,\,\,\lambda\in\mathbb{C}\setminus\overline{B\left(\pm\eta_2, \pm\eta_1, \pm \eta_0\right)},  \\[0.5em]
P^{\eta_j}\left(\lambda; x, t\right), & \mathrm{for}\,\,\,\lambda\in B\left(\eta_j\right),\,\, j=0,1,2,  \\[0.5em]
\sigma_2P^{\eta_j}\left(-\lambda; x, t\right)\sigma_2, & \mathrm{for}\,\,\,\lambda\in B\left(-\eta_j\right), \,\, j=0,1,2.
\end{cases}
\end{gather}
The outer parametrix
$P^\infty(\lambda; x, t)=\left(P_{i, j}^\infty(\lambda; x, t)\right)_{2\times 2}$ is formulated as
\begin{gather}
\begin{aligned}
P^\infty_{1, 1}(\lambda; x, t)&=\frac{\delta_1+\delta_1^{-1}}{2}
\frac{\vartheta_3\left(w_1+\frac{1}{4}+\frac{\Delta_1}{2\pi\mathrm{i}}; \tau_1\right)}{\vartheta_3\left(w_1+\frac{1}{4}; \tau_1\right)}
\frac{\vartheta_3\left(0; \tau_1\right)}{\vartheta_3\left(\frac{\Delta_1}{2\pi\mathrm{i}}; \tau_1\right)}, \\[0.5em]
P^\infty_{1, 2}(\lambda; x, t)&=\frac{\delta_1-\delta_1^{-1}}{2}
\frac{\vartheta_3\left(w_1-\frac{1}{4}-\frac{\Delta_1}{2\pi\mathrm{i}}; \tau_1\right)}{\vartheta_3\left(w_1-\frac{1}{4}; \tau_1\right)}
\frac{\vartheta_3\left(0; \tau_1\right)}{\vartheta_3\left(\frac{\Delta_1}{2\pi\mathrm{i}}; \tau_1\right)},\\[0.5em]
P^\infty_{2, 1}(\lambda; x, t)&=\frac{\delta_1-\delta_1^{-1}}{2}
\frac{\vartheta_3\left(w_1-\frac{1}{4}+\frac{\Delta_1}{2\pi\mathrm{i}}; \tau_1\right)}{\vartheta_3\left(w_1-\frac{1}{4}; \tau_1\right)}
\frac{\vartheta_3\left(0; \tau_1\right)}{\vartheta_3\left(\frac{\Delta_1}{2\pi\mathrm{i}}; \tau_1\right)}, \\[0.5em]
P^\infty_{2, 2}(\lambda; x, t)&=\frac{\delta_1+\delta_1^{-1}}{2}
\frac{\vartheta_3\left(w_1+\frac{1}{4}-\frac{\Delta_1}{2\pi\mathrm{i}}; \tau_1\right)}{\vartheta_3\left(w_1+\frac{1}{4}; \tau_1\right)}
\frac{\vartheta_3\left(0; \tau_1\right)}{\vartheta_3\left(\frac{\Delta_1}{2\pi\mathrm{i}}; \tau_1\right)}.
\end{aligned}
\end{gather}
For $\lambda\in\left(\mathrm{D}_{1}^\pm\cup \mathrm{D}_\mathrm{o}^\pm\right)\cap B(\eta_1)$, the local parametrix is $P^{\eta_1}(\lambda; x, t)$ is
\begin{gather}
P^{\eta_1}(\lambda; x, t)=P^\infty(\lambda; x, t)A_{\pm}^{\eta_1}C\zeta_{\eta_1}^{-\sigma_3/4}M^{\mathrm{mB}}(\zeta_{\eta_1}; \beta_1)\mathrm{e}^{-\sqrt{\zeta_{\eta_1}}\sigma_3}\left(A_{\pm}^{\eta_1}\right)^{-1};
\end{gather}
where  $\zeta_{\eta_1}=t^2\mathcal{F}^{\eta_1}$, $A_{\pm}^{\eta_1}=\left(\mathrm{e}^{\pi\mathrm{i}/4\mp t\hat{\Omega}_1 /2}/fd^{\eta_1}\right)^{\sigma_3}\sigma_1$,
$\mathcal{F}^{\eta_1}=\left(p\pm\hat{\Omega}_1/2\right)^2$ for $\lambda\in B(\eta_1)\cap\mathbb{C}^\pm$, and $\hat{\Omega}_1=\Omega_1\left(\xi+1/\eta_1\eta_2\right)/4$.
The local parametrix $P^{\eta_1}(\lambda; x, t)$ is still defined in \eqref{parametrix-eta2}, and the local parametrix
 $P^{\eta_0}(\lambda; x, t)$ in (\ref{eta0-1}--\ref{eta0-2}).
 The Riemann-Hilbert problem for the error matrix $E(\lambda; x, t)$ has the same jump contour as that described in Figure \ref{ErrorE0}.

 \begin{RH} The error matrix $E(\lambda; x, t)$ satisfies the following RH problem:
 \begin{itemize}

 \item{} $E(\lambda; x, t)$ is analytic in $\lambda$ for $\lambda\in\mathbb{C}\setminus\mathcal{C}_1^c\cup\mathcal{C}_2^c\cup\partial B(\pm\eta_1, \pm\eta_2, \pm\eta_0)$.

     \item{} It normalizes to the identity matrix $\mathbb{I}_2$.

     \item{} The jump condition is described in \eqref{E-Jump} with the jump matrix
\begin{gather}
V^E=
\begin{cases}
P^\infty (\lambda)\mathcal{U}^{tp}_{f}\left[-ir^{-1}\right] \left(P^\infty (\lambda)\right)^{-1}, &\mathrm{for} \,\,\, \lambda\in\mathcal{C}^c_1,  \\[0.5em]
P^\infty (\lambda)\mathcal{L}^{tp}_{f}\left[-ir^{-1}\right] \left(P^\infty (\lambda)\right)^{-1}, &\mathrm{for} \,\,\, \lambda\in\mathcal{C}^c_2,  \\[0.5em]
P^{\eta_j}(\lambda)\left(P^\infty(\lambda)\right)^{-1}, &\mathrm{for} \,\,\, \lambda\in\partial B\left(\eta_j\right),\,\, j=0,1,2,  \\[0.5em]
\sigma_2P^{\eta_j}(-\lambda)\left(P^\infty\left(-\lambda\right)\right)^{-1}\sigma_2, &\mathrm{for} \,\,\, \lambda\in\partial B\left(-\eta_j\right),\,\, j=0,1,2.
\end{cases}
\end{gather}
\end{itemize}
\end{RH}

\begin{proposition}[Small norm estimate in the region $\xi<\xi_\mathrm{crit}$]
For $\beta_2, \beta_1, \beta_0>-1$,
the jump matrices $V^E$ has the following small norm estimates:
\begin{gather}
\begin{aligned}
&\left\|V^E-\mathbb{I}_2\right\|_{L^1\cap L^2\cap L^\infty \left(\mathcal{C}_1^c\cup\mathcal{C}_2^c\right)}=\mathcal{O}\left(\mathrm{e}^{-\mu t}\right), && \mathrm{as}\,\,\, t\to +\infty,  \\[0.5em]
&\left\|V^E-\mathbb{I}_2\right\|_{L^1\cap L^2\cap L^\infty \left(B\left(\pm\eta_2, \pm\eta_0, \pm\eta_1 \right)\right)}=\mathcal{O}\left(t^{-1}\right), && \mathrm{as}\,\,\, t\to +\infty,
\end{aligned}
\end{gather}
which leads to
\begin{gather}\label{E-Estimate-left}
E(0; x, t)=\mathbb{I}_2+\mathcal{O}\left(t^{-1}\right), \quad E^{[1]}(x, t)=\mathcal{O}\left(t^{-1}\right), \quad \mathrm{as}\,\,\, t\to +\infty.
\end{gather}
\end{proposition}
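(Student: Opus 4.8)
The plan is to mirror the two preceding small-norm propositions, substituting the sign information of Proposition \ref{p-sign} (which covers $\xi\in(\xi_\mathrm{crit},-\eta_2^{-2})$) with that of Proposition \ref{p-sign-1}, valid precisely for $\xi<\xi_\mathrm{crit}$. Since $\alpha=\eta_1$ throughout this region, the Airy disk $B(\pm\alpha)$ and the lateral segments $[\eta_1,\alpha]$, $[-\alpha,-\eta_1]$ degenerate, so $\mathcal{C}^c$ collapses to $\mathcal{C}_1^c\cup\mathcal{C}_2^c$ and all local analysis concentrates at the endpoints $\pm\eta_1,\pm\eta_2$ and the singularities $\pm\eta_0$. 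The argument then splits into the exponential estimate on the lens contours and the algebraic estimate on the parametrix disks.

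First I would establish the exponential bound on $\mathcal{C}_1^c\cup\mathcal{C}_2^c$. Proposition \ref{p-sign-1} gives $\Re(p_+)=0$ together with $\lim_{\Im(\lambda)\to0^+}\partial\Re(p)/\partial\Im(\lambda)<0$ on $(\eta_1,\eta_2)$, so $\Re(p)<0$ strictly inside the upper lens over $(\eta_1,\eta_2)$, away from $\eta_1,\eta_0,\eta_2$. Invoking the symmetry relation $p(\lambda;\xi)=p(\lambda^*;\xi)^*=-p(-\lambda;\xi)$ then propagates this to $\Re(p)<0$ on $\mathcal{C}_1\setminus\{\eta_1,\eta_0,\eta_2\}$ and $\Re(p)>0$ on $\mathcal{C}_2\setminus\{-\eta_1,-\eta_0,-\eta_2\}$. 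Because the off-diagonal entries of $\mathcal{U}^{tp}_f[-\mathrm{i}r^{-1}]$ and $\mathcal{L}^{tp}_f[-\mathrm{i}r^{-1}]$ carry the exponential factors $\mathrm{e}^{\pm2tp}$, these decay like $\mathrm{e}^{-\mu t}$ uniformly once the parametrix disks are excised, which yields the first estimate simultaneously in the $L^1$, $L^2$, and $L^\infty$ norms.

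Next, for the estimate on $B(\pm\eta_2,\pm\eta_0,\pm\eta_1)$ I would use the matching between each local parametrix and the outer parametrix $P^\infty$. On $\partial B(\pm\eta_j)$, $j=1,2$, the scaling is $\zeta_{\eta_j}=t^2\mathcal{F}^{\eta_j}$, so $|\zeta_{\eta_j}|\sim t^2$ on the boundary; inserting the large-$\zeta$ expansion of $M^{\mathrm{mB}}$ from RHP \ref{RH-4}, with error $\mathcal{O}(\zeta^{-1/2})$, produces $P^{\eta_j}(P^\infty)^{-1}=\mathbb{I}_2+\mathcal{O}(t^{-1})$. At $\pm\eta_0$ the scaling is only linear, $\zeta_{\eta_0}=t\mathcal{F}^{\eta_0}$, so $|\zeta_{\eta_0}|\sim t$; the $\mathcal{O}(\zeta^{-1})$ expansion of $M^{\mathrm{mb}}$ (for $r=r_0$ with $\beta_0\neq0$) or of $M^{\mathrm{CH}}$ (for $r=r_c$) again gives $\mathbb{I}_2+\mathcal{O}(t^{-1})$. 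Summing over the finitely many disks produces the second estimate. Finally, the two bounds together with the $L^2$-boundedness of the Cauchy projection on the fixed finite-length contour guarantee, by the standard small-norm Riemann-Hilbert theory, unique solvability of the associated singular integral equation for large $t$ and a Cauchy-integral representation of $E$, from which both $E(0;x,t)=\mathbb{I}_2+\mathcal{O}(t^{-1})$ and $E^{[1]}(x,t)=\mathcal{O}(t^{-1})$ follow.

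The main obstacle is the behavior at $\pm\eta_0$: the linear rather than quadratic scaling of $\zeta_{\eta_0}$ forces the slower $\mathcal{O}(t^{-1})$ rate, which dominates the global error, and one must confirm that the matching $P^{\eta_0}(P^\infty)^{-1}=\mathbb{I}_2+\mathcal{O}(t^{-1})$ holds across the full admissible range $\beta_0>-1$ and, in the $r=r_c$ case, for the purely imaginary parameter $\kappa_0=\tfrac{\mathrm{i}}{\pi}\log c$. This is exactly the point at which the second-kind modified Bessel and confluent-hypergeometric constructions of Section 2 are essential, and verifying their asymptotic matching uniformly in $x$ is the substantive step beyond the otherwise routine small-norm bookkeeping.
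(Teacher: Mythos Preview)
Your proposal is correct and follows essentially the same approach as the paper: invoke Proposition~\ref{p-sign-1} together with the symmetry $p(\lambda;\xi)=p(\lambda^*;\xi)^*=-p(-\lambda;\xi)$ to get the sign of $\Re(p)$ on $\mathcal{C}_1^c\cup\mathcal{C}_2^c$, and then use the matching $P^{\eta_j}(P^\infty)^{-1}=\mathbb{I}_2+\mathcal{O}(t^{-1})$ on the six disk boundaries. In fact your write-up supplies more of the underlying detail (the quadratic versus linear scaling of $\zeta_{\eta_j}$ and the corresponding large-$\zeta$ expansions of $M^{\mathrm{mB}}$, $M^{\mathrm{mb}}$, $M^{\mathrm{CH}}$) than the paper's own terse proof, which simply asserts the matching relations and concludes.
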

\begin{proof}
It follows from Proposition  \ref{p-sign-1} and the symmetry relation
\begin{gather}
p(\lambda; \xi)=p(\lambda^*; \xi)^*=-p(-\lambda; \xi), \quad \mathrm{for}\,\,\, \lambda\in\mathbb{C}\setminus\left[-\eta_2, \eta_2\right]
\end{gather}
that $\Re(p)<0$ for $\lambda\in \mathcal{C}_1\setminus \left\{\eta_1, \eta_0, \eta_2\right\}$,  and
 $\Re(p)>0$ for $\lambda\in \mathcal{C}_2\setminus \left\{-\eta_1, -\alpha, -\eta_2\right\}$,
which give rise to the estimate of the jump matrix on $\mathcal{C}_1^c\cup \mathcal{C}_2^c$.
The estimate on $(B\left(\pm\eta_2, \pm\eta_0, \pm \eta_1\right)$ is derived by
\begin{gather}
\begin{aligned}
&P^{\eta_j}(\lambda)\left(P^\infty(\lambda)\right)^{-1}=\mathbb{I}_2+\mathcal{O}\left(t^{-1}\right), && \mathrm{for}\,\,\, \lambda\in \partial B(\eta_j),\,\, j=0,1,2.
\end{aligned}
\end{gather}
Then we complete the proof.
\end{proof}

Recalling the these applied transforms $Y(\lambda; x, t)\mapsto T(\lambda; x, t)\mapsto S(\lambda; x, t)\mapsto E(\lambda; x, t)$ and there propositions of small norm estimates,
we obtain that
\begin{gather}
\begin{aligned}
&\frac{\partial \,u(x, t)}{\partial x}=4 \lim_{\lambda\to\infty} \lambda P^\infty_{1, 2}(\lambda; x, t)+\mathcal{O}\left(t^{-1}\right), \\[0.5em]
&\cos u(x, t)=1-2\left(P^\infty_{\pm1, 2}(0; x, t)\mathrm{e}^{tg_{\pm}\left(0\right)}f_{\pm}^{-1}\left(0\right)\right)^2+\mathcal{O}\left(t^{-1}\right), \\[0.5em]
&\sin u(x, t)=-2P^\infty_{\pm 1, 1}(0; x, t)P^\infty_{\pm 1, 2}(0; x, t)+\mathcal{O}\left(t^{-1}\right).
\end{aligned}
\end{gather}
A straightforward calculation leads to \eqref{large-middle} and \eqref{large-left}. This completes the proof of Theorem 2.

\section{Conclusions and discussions}

In summary, this study examines the asymptotic behavior of sine-Gordon kink-soliton gases, characterized by two distinct types of generalized reflection coefficients within the framework of Riemann-Hilbert problems. Specifically, we consider two reflection coefficients: \( r_0 = (\lambda - \eta_1)^{\beta_1}(\eta_2 - \lambda)^{\beta_2}|\lambda - \eta_0|^{\beta_0}\gamma(\lambda) \) and \( r_c = (\lambda - \eta_1)^{\beta_1}(\eta_2 - \lambda)^{\beta_2}\chi_c(\lambda) \).

A primary challenge in this analysis is the construction of a suitable \( g \)-function. Since the sine-Gordon equation represents a negative flow in the AKNS hierarchy, it differs significantly in its phase function, \( \theta \), from the Korteweg-de Vries equation. Thus, constructing an appropriate \( g \)-function is key. In this work, we address this by defining a piecewise \( g \)-function specifically tailored to the sine-Gordon context.

Another significant challenge lies in constructing the local parametrices near the endpoints \( \eta_1 \) and \( \eta_2 \), as well as at the singularity \( \eta_0 \). For the endpoints \( \eta_j \) (with \( j=1, 2 \)), we use modified Bessel functions of the first and second kinds, indexed by \( \beta_j \), to construct the corresponding local parametrix \( P^{\eta_j} \). For the reflection coefficient \( r_0 \), the local parametrix \( P^{\eta_0} \) is constructed using modified Bessel functions with indices \( (\beta_0 \pm 1)/2 \). In contrast, for \( r_c \), we utilize confluent hypergeometric functions to form \( P^{\eta_0} \).

Despite these developments, several open questions remain. As noted in \cite{15}, deriving rigorous asymptotics for soliton gases in the presence of multiple nontrivial reflection coefficients is still a challenging problem. Additionally, handling the limit process as discrete spectra accumulate in separate components along the imaginary axis poses significant difficulties. The Camassa-Holm equations, which are fundamental integrable models supporting multi-soliton and multi-peakon solutions, also provide a rich area for further research. Investigating the asymptotics of soliton and peakon gases within this framework using the Deift-Zhou method offers an interesting avenue for future exploration. For more on these open problems, readers can refer to the review in \cite{22}, which highlights key theoretical and experimental challenges in this field.

\section*{Acknowledgements}

GZ was supported by the National Natural Science Foundation of China (Grant No. 12201615).
ZY was supported by the National Natural Science Foundation of China (Grant No. 11925108).













\begin{thebibliography}{100}\setlength{\itemsep}{-0.8mm}

\bibitem{soliton-book91} M. J. Ablowitz, P. A. Clarkson, {\it Solitons, Nonlinear Evolution Equations and Inverse Scattering} (Cambridge University Press, Cambridge, 1991).

\bibitem{23} M. J. Ablowitz, D. J. Kaup, A. C. Newell, H. Segur, Method for solving the sine-Gordon equation,
Phys. Rev. Lett. 30 (1973) 1262–1264. 

\bibitem{33} N. Akhmediev, A. Ankiewicz, J. M. Soto-Crespo, Rogue waves and rational solutions of the nonlinear
Schr\"odinger equation, Phys. Rev. E 80 (2009) 026601. 

\bibitem{sg-solu10} T. Aktosun,  F. Demontis, C. van der Mee,  Exact solutions to the sine-Gordon equation, J. Math. Phys. 51, (2010) 123521.

\bibitem{17} J. Baik, P. Deift, K. Johansson, On the distribution of the length of the longest increasing subsequence
of random permutations, J. Amer. Math. Soc. 12 (1999) 1119–1178. 

\bibitem{54} F. Balogh, M. Bertola, S.-Y. Lee, K. D. T.-R. Mclaughlin, Strong asymptotics of the orthogonal
polynomials with respect to a measure supported on the plane, Comm. Pure Appl. Math. 68 (2015)
112–172. 

\bibitem{bar} A. Barone, F. Esposito, C. J. Magee, A. C. Scott,  Theory and applications of the sine-Gordon
equation, La Rivista del Nuovo Cimento  1 (1971) 227-267.

\bibitem{21} M. Bertola, T. Grava, G. Orsatti, Soliton shielding of the focusing nonlinear Schr\"odinger equation,
Phys. Rev. Lett. 130 (2023) 127201. 

\bibitem{39} D. Bilman, R. Buckingham, Large-order asymptotics for multiple-pole solitons of the focusing nonlinear
Schr\"odinger equation, J. Nonlinear Sci. 29 (2019) 2185–2229. 


\bibitem{31} D. Bilman, L. Ling, P. D. Miller, Extreme superposition: Rogue waves of infinite order and the
Painlev\'e-III hierarchy, Duke Math. J. 169 (2020) 671-760. 

\bibitem{32} D. Bilman, P. D. Miller, A robust inverse scattering transform for the focusing nonlinear Schr\"odinger
equation, Comm. Pure Appl. Math. 72 (2019) 1722–1805. 


\bibitem{29} G. Biondini, S. Li, D. Mantzavinos, Long-time asymptotics for the focusing nonlinear Schr\"odinger
equation with nonzero boundary conditions in the presence of a discrete spectrum, Commun. Math. Phys. 382 (2021) 1495–1577. 

\bibitem{38} G. Biondini, G. Kova$\check{c}$i$\check{c}$, Inverse scattering transform for the focusing nonlinear Schr\"odinger equation with nonzero boundary conditions, J. Math. Phys. 55 (2014) 031506. 


\bibitem{28} G. Biondini, D. Mantzavinos, Long-time asymptotics for the focusing nonlinear Schr\"odinger equation with nonzero boundary conditions at infinity and asymptotic stage of modulational instability,
Comm. Pure Appl. Math. 70 (2017) 2300–2365. 


\bibitem{58} A. Bogatskiy, T. Claeys, A. Its, Hankel determinant and orthogonal polynomials for a Gaussian weight
with a discontinuity at the edge, Commun. Math. Phys. 347 (2016) 127–162. 

\bibitem{59} A. I. Bobenko, A. Its, The asymptotic behavior of the discrete holomorphic map Z
a via the Riemann–Hilbert method, Duke Math. J. 165 (2016) 2607–2682.

\bibitem{sg1862} E. Bour,  Th\'eorie de la d\'eformation des surfaces, J. Ecole Imperiale Polytech. 19 (1982)  1-48.

\bibitem{26} R. Buckingham, S. Venakides, Long-time asymptotics of the nonlinear Schr\"odinger equation shock
problem, Comm. Pure Appl. Math. 60 (2007) 1349–1414.

\bibitem{buck} R. Buckingham, P. D. Miller,  Exact solutions of semiclassical non-characteristic Cauchy problems for the sine-Gordon equation, Physica D 237 (2008) 2296-2341.

\bibitem{Buckingham2012} R. Buckingham, P. D. Miller, The sine-Gordon equation in the semiclassical limit: critical behavior near a separatrix. JAMA 118 (2012) 397–492.

\bibitem{Buckingham2013} R. Buckingham, P. D. Miller, {\it The sine-Gordon equation in the semiclassical limit: dynamics of fluxon condensates} (American Mathematical Society, 2013).


\bibitem{9} V. B. Bulchandani, On classical integrability of the hydrodynamics of quantum integrable systems, J.
Phys. A: Math. Theor. 50 (2017) 435203. 

\bibitem{ks} P. Caudrey, J. Gibbon, J. Eilbeck, R. Bullough, Exact multisoliton solutions of the self-induced transparency and sine-Gordon equations, Phys. Rev. Lett. 30 (1973) 237.

\bibitem{chen22} G. Chen, J. Liu, B. Lu, Long-time asymptotics and stability for the sine-Gordon equation, arXiv:2009.04260 (2022).

\bibitem{zhou19} P.-J. Cheng, S. Venakides, X. Zhou, Long-time asymptotics for the pure radiation solution of the sine-Gordon equation, Comm.  Partial Differential Equations,  24 (1999) 1195-1262.

\bibitem{qc} S. Coleman, Quantum sine-Gordon equation as the massive Thirring model, Phys. Rev. D 11
(1975) 2088-2097.

\bibitem{5} T. Congy, G. A. El, G. Roberti, Soliton gas in bidirectional dispersive hydrodynamics, Phys. Rev. E 103
(2021) 042201. 

\bibitem{47} T. O. Conway, P. Deift, Asymptotics of polynomials orthogonal with respect to a logarithmic weight,
SIGMA 14 (2018) 056. 


\bibitem{sg-book} J. Cuevas-Maraver, P. G. Kevrekidis, F. Williams, {\it  The sine-Gordon Model and its Applications} (Springer, Berlin, 2014).

\bibitem{57} A. Dea$\tilde{n}$o, A. B. J. Kuijlaars, P. Rom\'an, Asymptotics of matrix valued orthogonal polynomials on [-1,
1], Adv. Math. 423 (2023) 109043. 

\bibitem{sg-num} B.  Deconinck, T. Trogdon, X. Yang, Numerical inverse scattering for the sine-Gordon equation, Physica D 399 (2019) 159-172.

\bibitem{51} P. Deift, D. Gioev, T. Kriecherbauer, M. Vanlessen, Universality for orthogonal and symplectic
Laguerre-type ensembles, J. Stat. Phys. 129 (2007) 949–1053.



\bibitem{30} P. A. Deift, S. Kamvissis, T. Kriecherbauer, X. Zhou, The Toda rarefaction problem, Comm. Pure
Appl. Math. 49 (1996) 35–83. 


\bibitem{25} P. Deift, T. Kriecherbauer, K. T.-R. McLaughlin, S. Venakides, X. Zhou, Uniform asymptotics
for polynomials orthogonal with respect to varying exponential weights and applications to universality questions in random matrix theory, Comm. Pure Appl. Math. 52 (1999) 1335–1425.

\bibitem{44} P. Deift, T. Kriecherbauer, K. T. McLaughlin, S. Venakides, X. Zhou, Strong asymptotics of orthogonal
polynomials with respect to exponential weights, Comm. Pure Appl. Math. 52 (1999) 1491–1552.


\bibitem{48} P. Deift, M. Piorkowski, Recurrence coefficients for orthogonal polynomials with a logarithmic weight
function, SIGMA 20 (2024) 004. 


\bibitem{18} P. A. Deift, S. Venakides, X. Zhou, New results in small dispersion KdV by an extension of the steepest descent method for Riemann-Hilbert problems, Int. Math. Res. Notices 1997 (1997) 285–299.

\bibitem{19} P. A. Deift, X. Zhou, Asymptotics for the Painlev\'e II equation, Comm. Pure Appl. Math. 48 (1995)
277–337. 

\bibitem{16} P. A. Deift, X. Zhou, A steepest descent method for oscillatory Riemann–Hilbert problems. Asymptotics
for the MKdV equation, Ann. Math. 137 (1993) 295–368. 

\bibitem{63} A. B. de Monvel, A. Its, V. Kotlyarov, Long-time asymptotics for the focusing NLS equation
with time-periodic boundary condition on the half-line, Commun. Math. Phys. 290 (2009) 479–522.

\bibitem{14} S. Dyachenko, D. Zakharov, V. Zakharov, Primitive potentials and bounded solutions of the KdV
equation, Physica D 333 (2016) 148–156. 


\bibitem{24} I. Egorova, Z. Gladka, V. Kotlyarov, G. Teschl, Long-time asymptotics for the Korteweg–de Vries equation with step-like initial data, Nonlinearity 26 (2013) 1839–1864. 



\bibitem{2} G. A. El, The thermodynamic limit of the Whitham equations, Phys. Lett. A 311 (2003) 374–383.

\bibitem{3} G. A. El, A. M. Kamchatnov, Kinetic equation for a dense soliton gas, Phys. Rev. Lett. 95 (2005)
204101.

\bibitem{7} G. A. El, A. M. Kamchatnov, M. V. Pavlov, S. A. Zykov, Kinetic equation for a soliton gas and its hydrodynamic reductions, J. Nonlinear Sci. 21 (2010) 151–191. 

\bibitem{4} G. A. El, A. Tovbis, Spectral theory of soliton and breather gases for the focusing nonlinear Schr\"odinger
equation, Phys. Rev. E 101 (2020) 052207. 

\bibitem{dna1} S. W. Englander, N. R. Kallenbach, A. J. Heeger, J. A. Krumhansl, S. Litwin,  Nature of the open state in long polynucleotide double helices: Possibility of soliton excitations,  Proc. Natl. Acad. Sci. USA, 77 (1980) 7222.


\bibitem{6} E. Ferapontov, M. V. Pavlov, Kinetic equation for soliton gas: Integrable reductions, J. Nonlinear Sci.
32 (2022) 26. 

\bibitem{53} G. Filipuk, W. Van Assche, L. Zhang, The recurrence coefficients of semi-classical Laguerre polynomials and the fourth Painlev\'e equation, J. Phys. A: Math. Theor. 45 (2012) 205201.


\bibitem{45} A. S. Fokas, A. R. Its, A. V. Kitaev, Discrete Painlev\'e equations and their appearance in quantum
gravity, Commun. Math. Phys. 142 (1991) 313–344. 

\bibitem{46} A. S. Fokas, A. R. Its, A. V. Kitaev, The isomonodromy approach to matric models in 2D quantum
gravity, Commun. Math. Phys. 147 (1992) 395–430. 

\bibitem{cd} J. Frenkel, T. Kontorova,  On the theory of plastic deformation and twinning, Acad. Sci. U.S.S.R.
J. Phys.  1 (1939) 137-149.

\bibitem{GGKM} C. S. Gardner, J. M. Greene, M. D. Kruskal, R. M. Miura, Method for solving the Korteweg-de Vries
equation, Phys. Rev. Lett. 19 (1967) 1095–1097.

\bibitem{12} A. A. Gelash, D. S. Agafontsev, Strongly interacting soliton gas and formation of rogue waves, Phys.
Rev. E 98 (2018) 042210. 

\bibitem{10} A. A. Gelash, D. S. Agafontsev, V. E. Zakharov, G. A. El, S. Randoux, P. Suret, Bound state soliton gas dynamics underlying the spontaneous modulational instability, Phys. Rev. Lett. 123 (2019) 234102.



\bibitem{15} M. Girotti, T. Grava, R. Jenkins, K. D. T. R. McLaughlin, Rigorous asymptotics of a KdV soliton gas,
Commun. Math. Phys. 384 (2021) 733–784. 


\bibitem{20} M. Girotti, T. Grava, R. Jenkins, K. T. McLaughlin, A. Minakov, Soliton versus the gas: Fredholm
determinants, analysis, and the rapid oscillations behind the kinetic equation, Comm. Pure Appl.
Math. 76 (2023) 3233–3299. 



\bibitem{27} T. Grava, A. Minakov, On the long-time asymptotic behavior of the modified Korteweg-de Vries equation with step-like initial data, SIAM J. Math. Anal. 52 (2020) 5892–5993. 


\bibitem{34} B. Guo, L. Ling, Q. P. Liu, Nonlinear Schr\"odinger equation: Generalized Darboux transformation and
rogue wave solutions, Phys. Rev. E 85 (2012) 026607. 

\bibitem{36} R. Hirota, Exact solution of the Korteweg-de Vries equation for multiple collisions of solitons, Phys.
Rev. Lett. 27 (1971) 1192–1194. 

\bibitem{ks1} R. Hirota, Exact solution of the sine-Gordon equation for multiple collisions of solitons, J. Phys. Soc. Jpn. 33 (1972) 1459-1463.

\bibitem{Huang2018} L. Huang, J. Lenells, Nonlinear Fourier transforms for the sine-Gordon equation in the quarter plane, J. Differ. Equations. 264 (2018) 3445--3499.

\bibitem{Huang2018a} L. Huang, J. Lenells, Construction of solutions and asymptotics for the sine-Gordon equation in the quarter plane, J. Int. Sys. 3 (2018) 1--92.


\bibitem{62} A. Its, I. Krasovsky, Hankel determinant and orthogonal polynomials for the Gaussian weight with a
jump, Contemp. Math. (2008) 215–247. 

\bibitem{jose} B. D. Josephson, Possible new effects in superconductive tunnelling,  Phys. Lett. 1 (1962) 251-253.

\bibitem{kaup} D. J. Kaup,  Method for solving the sine-Gordon equation in laboratory coordinates, Stud. Appl. Math. 54 (1975) 165-179.

\bibitem{64} V. Kotlyarov, A. Minakov, Riemann–Hilbert problem to the modified Korteveg–de Vries equation:
Long-time dynamics of the steplike initial data, J. Math. Phys. 51 (2010) 093506. 


\bibitem{49} T. Kriecherbauer, K. T.-R. McLaughlin, Strong asymptotics of polynomials orthogonal with respect
to Freud weights, Int. Math. Res. Notices 1999 (1999) 299–333. 

\bibitem{56} A. B. J. Kuijlaars, K. T. R. McLaughlin, W. Van Assche, M. Vanlessen, The Riemann–Hilbert approach to strong asymptotics for orthogonal polynomials on [-1, 1], Adv. Math. 188 (2004) 337–398.


\bibitem{50} A. B. J. Kuijlaars, K. T. R. McLaughlin, Asymptotic zero behavior of Laguerre polynomials with
negative parameter, Constr. Approx. 20 (2004) 497–523.


\bibitem{8} A. B. J. Kuijlaars, A. Tovbis, On minimal energy solutions to certain classes of integral equations related to
soliton gases for integrable systems, Nonlinearity 34 (2021) 7227–7254. 

\bibitem{lamb} G. L. Lamb, {\it Elements of Soliton Theory} (Wiley, New York, 1980).

\bibitem{41} P. D. Lax, C. David Levermore, The small dispersion limit of the Korteweg-de Vries equation. I,
Comm. Pure Appl. Math. 36 (1983) 253–290. 

\bibitem{ling23} L.  Ling, X. Sun, On the elliptic-localized solutions of the sine-Gordon equation, Physica D 444 (2023) 133597.

\bibitem{miller22} B.-Y. Lu, P. Miller, Universality near the gradient catastrophe point in the semiclassical sine-Gordon equation,  Comm. Pure Appl. Math.  75 (2022) 1517-1641.

\bibitem{40} G. D. Lyng, P. D. Miller, The N-soliton of the focusing nonlinear Schr\"odinger equation for N large,
Comm. Pure Appl. Math. 60 (2006) 951–1026. 

\bibitem{sit}  S. L. McCall, E. L. Hahn, Self-induced transparency, Phys. Rev. 183 (1969) 457.


\bibitem{37} Y. Ohta, J. Yang, Rogue waves in the Davey-Stewartson I equation, Phys. Rev. E 86 (2012) 036604.

\bibitem{bs2} C. P\"oppe, Construction of solutions of the sine-Gordon equation by means of fredholm determinants, Physica D 9 (1983) 103-139.


\bibitem{sg} J. Rubinstein, Sine-Gordon equation, J. Math. Phys. 11 (1970) 258-266.

\bibitem{dna2} M. Salerno, Discrete model for DNA-promoter dynamics, Phys. Rev. A 44 (1991) 5292.

\bibitem{11} A. Slunyaev, E. Pelinovsky, Role of multiple soliton interactions in the generation of rogue
waves: The modified Korteweg–de Vries framework, Phys. Rev. Lett. 117 (2016) 214501.

\bibitem{22} P. Suret, S. Randoux, A. Gelash, D. Agafontsev, B. Doyon, G. El, Soliton gas: Theory, numerics, and
experiments, Phys. Rev. E 109 (2024) 061001. 

\bibitem{tal} L. A. Talrhtajan, Exact theory of propagation of the ultrashort optical pulses in the two-level media, Zh. Eksp. Teor. Fiz.  66 (1974) 476-489.

\bibitem{42} A. Tovbis, S. Venakides, X. Zhou, On semiclassical (zero dispersion limit) solutions of the focusing
nonlinear Schr\"odinger equation, Comm. Pure Appl. Math. 57 (2004) 877–985. 


\bibitem{multi-s} H. Tsuru, M. Wadati, The multiple pole solutions of the sine-Gordon equation, J. Phys. Soc. Jpn. 53 (1984) 2908-2921.

\bibitem{61} M. Vanlessen, Strong asymptotics of the recurrence coefficients of orthogonal polynomials associated
to the generalized Jacobi weight, J. Approx. Theory 125 (2003) 198–237. 


\bibitem{52} M. Vanlessen, Strong asymptotics of Laguerre-type orthogonal polynomials and applications in random
matrix theory, Constr. Approx. 25 (2007) 125–175. 


\bibitem{bs1} N. K. Vitanov, Breather and soliton wave families for the sine-Gordon equation, Proc. R. Soc. Lond.  A  454 (1977) 2409-2423.

\bibitem{weng25} W. Weng, Z. Zhang, Z .Yan, The focusing complex mKdV equation with nonzero background: Large N-order asymptotics of multi-rational solitons and related Painlev\'e-III hierarchy, J. Differ. Equ.  415 (2025) 303-364.

\bibitem{60}G. B. Whitham, {\it Linear and Nonlinear Waves} (Wiley, New York, 1999). 

\bibitem{55} R. Wong, W. Zhang, Uniform asymptotics for Jacobi polynomials with varying large negative parameters—a Riemann-Hilbert approach, Trans. Amer. Math. Soc. 358 (2006) 2663–2694.



\bibitem{1} V. E. Zakharov, Kinetic equation for solitons, Sov. Phys. -JETP 33 (1971) 538–541.


\bibitem{13} V. E. Zakharov, S. V. Manakov, Construction of higher-dimensional nonlinear integrable systems and
of their solutions, Funct. Anal. Appl. 19 (1985) 89–101. 

\bibitem{zak} V. E.  Zakharov, L. A. Takhtadzhyan, L. D. Faddev, Complete description of solutions of the sine-Gordon equation, Sov. Phy. Dokl.  19 (1975) 824-826.

\bibitem{35} G. Zhang, L. Ling, Z. Yan, Multi-component nonlinear Schr\"odinger equations with nonzero boundary conditions: Higher-order vector Peregrine solitons and asymptotic estimates, J. Nonlinear Sci. 31
(2021) 81. 

\bibitem{43} X. Zhou, The Riemann–Hilbert problem and inverse scattering, SIAM J. Math. Anal. 20 (1989)
966–986. 

\end{thebibliography}

\vspace{0.15in}

\addcontentsline{toc}{section}{References}

\end{document}